\documentclass[11pt]{article}

\usepackage[letterpaper, margin=1in]{geometry}

\usepackage[parfill]{parskip}

\usepackage[utf8]{inputenc}

\usepackage{varioref}
\usepackage[linkcolor=black,colorlinks=true,citecolor=black,filecolor=black]{hyperref}
\usepackage{comment}

\usepackage{mathtools}
\usepackage{amsmath, amssymb, amsfonts, mathrsfs}
\usepackage[sc]{mathpazo} 
\usepackage{amsthm}
\usepackage{subcaption}
\usepackage[ruled, vlined]{algorithm2e}
\usepackage{aligned-overset}
\usepackage{cleveref}
\usepackage{authblk}
\usepackage{thmtools}
\usepackage{thm-restate}
\usepackage{nicefrac}
\usepackage[only,llbracket,rrbracket]{stmaryrd}
\usepackage{orcidlink}

\theoremstyle{plain}
\numberwithin{equation}{section}
\newtheorem{theorem}{Theorem}[section]

\newtheorem{corollary}[theorem]{Corollary}
\newtheorem{definition}[theorem]{Definition}
\newtheorem{lemma}[theorem]{Lemma}

\newtheorem{observation}[theorem]{Observation}

\crefname{corollary}{Corollary}{Corollaries}
\crefname{lemma}{Lemma}{Lemmata}

\newcommand{\N}{\mathbb{N}}
\newcommand{\Q}{\mathbb{Q}}
\newcommand{\R}{\mathbb{R}}

\newcommand{\bigO}{\mathcal{O}}
\DeclareMathOperator{\poly}{poly}

\let\emptyset\varnothing

\renewcommand{\epsilon}{\ensuremath\varepsilon}

\renewcommand{\phi}{\ensuremath{\varphi}}

\newcommand{\proj}{\text{proj}}

\DeclareMathOperator{\vol}{vol}

\newcommand{\nonexp}{\mathsf{NonExp}}
\newcommand{\contr}{\mathsf{Contr}}
\newcommand{\One}{\mathbf{1}}

\begin{document}
\author{Andrei Feodorov\orcidlink{0009-0006-3367-1371}}
\author{Sebastian Haslebacher\orcidlink{0000-0003-3988-3325}}
\affil{Department of Computer Science, ETH Zurich, Switzerland}
\title{Faster Approximate Fixed Points of $\ell_\infty$-Contractions}
\date{}
\maketitle

\begin{abstract}
    We present a new algorithm for finding an $\epsilon$-approximate fixed point of an $\ell_\infty$-contracting function $f : [0, 1]^d \rightarrow [0, 1]^d$. Our algorithm is based on the query-efficient algorithm by Chen, Li, and Yannakakis~(STOC 2024), but comes with an improved upper bound of $(\log \frac{1}{\epsilon})^{\bigO(d \log d)}$ on the overall runtime (while still being query-efficient). By combining this with a recent decomposition theorem for $\ell_\infty$-contracting functions, we then describe a second algorithm that finds an $\epsilon$-approximate fixed point in $(\log \frac{1}{\epsilon})^{\bigO(\sqrt{d} \log d)}$ queries and time. The key observation here is that decomposition theorems such as the one for $\ell_\infty$-contracting maps often allow a trade-off: If an algorithm's runtime is worse than its query complexity in terms of the dependency on the dimension $d$, then we can improve the runtime at the expense of weakening the query upper bound. By well-known reductions, our results imply a faster algorithm for $\epsilon$-approximately solving Shapley stochastic games.
\end{abstract}

\renewcommand{\abstractname}{Acknowledgements}
\begin{abstract}
    We would like to thank Bernd Gärtner, Patrick Schnider, and Simon Weber from ETH Zurich for valuable discussions and feedback. We would also like to thank Hanwen Zhang from the University of Copenhagen for pointing us to useful results on hyperplane arrangements and for suggesting an improvement in the proof of \Cref{lemma:balancing_pyramids}.
\end{abstract}

\newpage

\section{Introduction}
\label{sec:introduction}

Consider a function $f : [0, 1]^d \rightarrow [0, 1]^d$ defined on the unit cube in $\R^d$. We say that $f$ is $\lambda$-contracting in the $\ell_\infty$-metric if it satisfies $\|f(x) - f(y)\|_\infty \leq \lambda \|x - y\|_\infty$ for all $x, y \in [0, 1]^d$ and some fixed $\lambda < 1$, where $\|x\|_\infty := \max_{i \in [d]} |x_i|$. Given query access to $f$, we consider the problem of algorithmically finding an $\epsilon$-approximate fixed point of $f$, i.e.\ a point $x$ satisfying $\|x - f(x)\|_\infty \leq \epsilon$ for some $\epsilon > 0$ that is considered part of the input. Note that this problem is well-defined by Banach's fixed point theorem~\cite{banach1922operations}, which tells us that $f$ must have a unique fixed point $x^\star = f(x^\star)$. 

Banach's proof~\cite{banach1922operations} implies the following iterative algorithm for this $\ell_\infty$-contraction problem: Start at some point $x \in [0, 1]^d$ and iteratively update $x \gets f(x)$ until the condition $\|x - f(x)\|_\infty \leq \epsilon$ is satisfied. A simple argument involving the contraction property shows that this algorithm will terminate after at most $\bigO \left( \log(\frac{1}{\epsilon}) / \log (\frac{1}{\lambda}) \right)$ iterations. 

Unfortunately, there are applications where $\lambda$ can be very close to $1$, in which case Banach's iterative algorithm is very slow. For example, both Condon's simple stochastic games~\cite{condonComplexityStochasticGames1992} and Shapley's stochastic games~\cite{shapley1953stochastic} can be reduced to the problem of finding an $\epsilon$-approximate fixed point of a $\lambda$-contracting map with $\frac{1}{1 - \lambda}$ exponential in their respective input size~(see e.g.\@~\cite{etessamiTarskiTheoremSupermodular2020}). Consequently, using Banach's iterative algorithm for these games yields worst-case guarantees that are exponential in their input size.

Applications such as the ones mentioned above motivate the search of alternative algorithms with a runtime independent of $\lambda$. For example, Shellman and Sikorski~\cite{shellmanTwoDimensionalBisectionEnvelope2002} gave an algorithm that runs in time $\bigO(\log \frac{1}{\epsilon})$ in the two-dimensional case $d = 2$. Indeed, their runtime is independent of $\lambda$ and in fact, their algorithm works even if $\lambda = 1$ (note that in this case, $f$ is still guaranteed to have a fixed point by Brouwer's fixed point theorem~\cite{brouwerUeberAbbildungMannigfaltigkeiten1911}, but this fixed point is not necessarily unique anymore). They later generalized this, obtaining an algorithm for $d$-dimensional instances that runs in $\bigO(\log^d (\frac{1}{\epsilon}))$ time~\cite{shellmanRecursiveAlgorithmInfinitynorm2003}. Very recently, this was improved by Chen, Li, and Yannakakis~\cite{chenQuadraticSpeedupComputing2026} to $\bigO(\log^{\lceil d / 2 \rceil} (\frac{1}{\epsilon}))$, which is the current state-of-the-art in terms of time complexity for the $\ell_\infty$-contraction problem when $\lambda$ is very close or equal to $1$. If we additionally assume $f$ to be monotone (which is also motivated by the above mentioned applications), a slightly better bound is known: Indeed, Batziou, Fearnley, Gordon, Mehta, and Savani~\cite{batziouMonotoneContractions2025} showed that an algorithm running in time $\bigO((c \log \frac{1}{\epsilon})^{\lceil d / 3 \rceil})$ for some constant $c$ is possible in this case.

While all these algorithms got rid of the dependency on $\lambda$, their runtime now unfortunately depends exponentially on the dimension $d$. As a consequence, applying them to Condon's or Shapley's stochastic games still only yields exponential-time algorithms in the worst case. However, a recent breakthrough by Chen, Li, and Yannakakis~\cite{chenComputingFixedPoint2025} suggests that much better algorithms might be possible: Indeed, they gave a (very slow) algorithm that finds an $\epsilon$-approximate fixed point of $f$ while only querying $f$ at most $\bigO(d \log \frac{1}{\epsilon})$ times. This is particularly interesting because in all of the previous algorithms, the runtime is essentially dominated by the number of queries made to $f$ (which is exponential in $d$). The result in~\cite{chenComputingFixedPoint2025} shows that the query complexity of the $\ell_\infty$-contraction problem is small, and thus not a barrier towards potentially having algorithms that run e.g.\ in time $\poly(d, \log \frac{1}{\epsilon})$. 

Of course, the drawback of the algorithm in~\cite{chenComputingFixedPoint2025} is its runtime, which is much worse than the runtime of the previous algorithms: Indeed, while the algorithm makes only few queries to $f$, each query point in their algorithm is determined by a brute-force search running in time $\Omega( (\frac{1}{\epsilon})^d)$. Our main technical contribution is to address this drawback by replacing the brute-force search with a new procedure, yielding the following theorem.

\begin{restatable}{theorem}{thmmainresultone}
\label{thm:main_result_1}
    Given $\epsilon > 0$ and query-access to a $\lambda$-contracting (in the $\ell_\infty$-metric) function $f : [0, 1]^d \rightarrow [0, 1]^d$, there is an algorithm that finds an $\epsilon$-approximate fixed point $x \in [0, 1]^d$ (i.e.\ with $\|x - f(x)\|_\infty \leq \epsilon$) in $\bigO(d^2 \log \frac{1}{\epsilon})$ queries and $(\log \frac{1}{\epsilon})^{\bigO(d \log d)}$ time. 
\end{restatable}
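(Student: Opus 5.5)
We follow the framework of Chen, Li, and Yannakakis~\cite{chenComputingFixedPoint2025} and change only how each query point is chosen. Throughout the algorithm we maintain the set of all queried points $y$ together with their answers $f(y)$. Each such pair restricts where the unknown fixed point $x^\star$ can lie. By the contraction property, $\|x^\star - f(y)\|_\infty = \|f(x^\star) - f(y)\|_\infty \le \lambda \|x^\star - y\|_\infty$. This yields, for every queried $y$ and every coordinate $i$, a set of linear inequalities in $x^\star$. Together they cut out a candidate region $K$, which is a union of polyhedral pieces: one piece for each choice of which coordinate attains the $\ell_\infty$-norm and with which sign. We may discretize everything to the grid of mesh $\Theta(\epsilon)$, since an $\epsilon$-approximate fixed point suffices.

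The query-efficiency argument in~\cite{chenComputingFixedPoint2025} chooses each query point as a kind of ``centerpoint'' of the remaining candidate set. The key property is that any answer to the query removes a constant fraction of a suitable potential measure of $K$. This forces termination after $\bigO(d\log\frac1\epsilon)$ queries, because the measure starts at $(1/\epsilon)^d$ and must stay at least $1$ while no approximate fixed point has been found. Their brute-force step computes this point by enumerating all $(1/\epsilon)^d$ grid points. We plan to replace that enumeration with a combinatorial procedure. First, we split $K$ along the hyperplane arrangement generated by all previous constraints, where each constraint has the form $x_i - x_j = c$ or $x_i = c$. The cells of this arrangement are ``pyramid-like'' polytopes, determined by an ordering of coordinates and the offsets. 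The arrangement has at most $(\#\text{queries}\cdot d)^{\bigO(d)}$ cells. With $\bigO(d^2\log\frac1\epsilon)$ queries this count is $(d\log\frac1\epsilon)^{\bigO(d)}$, and that quantity is at most $(\log\frac1\epsilon)^{\bigO(d\log d)}$.

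Second, we need a balancing point that can be computed efficiently. We would prove a lemma, presumably the paper's \Cref{lemma:balancing_pyramids}, stating that for a family of such pyramids one can find, in time polynomial in the number of pyramids and $d$, a point $z$ whose query removes at least a $1/\poly(d)$ fraction of the total measure, whatever $f(z)$ turns out to be. The plan is to replace the exact volume by a proxy measure that is easier to handle. The natural proxy is the number of cells, or the volumes of the pyramids, which can be computed in closed form. We then pick $z$ by a coordinatewise median/bisection over this weighted family. The weaker balance factor $1/\poly(d)$, instead of a constant, is what pushes the query count up by a factor of $d$, giving $\bigO(d^2\log\frac1\epsilon)$ via $\log_{1/(1-1/d)}$ of a $(1/\epsilon)^d$ potential. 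The total time is then the number of queries times the arrangement size times polynomial factors, which is $(\log\frac1\epsilon)^{\bigO(d\log d)}$.

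We expect the main obstacle to be this balancing lemma. We must show that a point computable from the cell decomposition still has the property that every possible answer $f(z)$ eliminates a substantial part of the measure. We would first try a Grünbaum/centerpoint-style argument applied to the union of pyramids, exploiting their special structure: they are unions of order simplices of the form $x_{\sigma(1)}-c_1\ge x_{\sigma(2)}-c_2\ge\cdots$. We would combine this with the observation from~\cite{chenComputingFixedPoint2025} that the region eliminated by an answer is a union of orthant-like cones around $f(z)$ intersected with the forbidden sets near $z$.
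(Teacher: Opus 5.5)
There is a genuine gap in the central step. Your outer loop matches the paper's: query approximate $\ell_\infty$-centerpoints, cut away what each answer rules out, argue termination by volume shrinkage, and compute volumes exactly by summing over the $(nd)^{\bigO(d)}$ cells of the arrangement of the pyramids' bounding hyperplanes. What is missing is an efficient procedure that, given only a volume oracle, outputs a point $z$ such that every $\ell_\infty$-halfspace $H_v(z)$ contains an $\Omega(1/d)$ fraction of $\vol(X)$. You call this the ``main obstacle'' and only list things you would try.

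Your concrete suggestion, a coordinatewise median or bisection, does not work. What matters are the pyramids $P^\pm_i(z)$, not the coordinate halfspaces $\{x_i \gtrless z_i\}$. Already for $d=2$ one has $H_{(-1,-1)}(z)=P^-_1(z)\cup P^-_2(z)=\{y : y_1+y_2\le z_1+z_2\}$. Let $X$ be a union of tiny squares centered at $(0.05,0.95)$, $(0.95,0.05)$, $(0.3,0.25)$ and $(0.25,0.3)$, with areas in ratio $3:3:2:2$. Then $z=(0.27,0.27)$ is a coordinatewise median, yet $H_{(-1,-1)}(z)$ contains none of $X$. The ``number of cells'' proxy fails for a different reason: it is not monotone under cuts and has no lower bound tied to a neighborhood of $x^\star$, so the shrinkage argument cannot be run on it. Also, a $1/\poly(d)$ guarantee yields $\bigO(d^2\log\frac1\epsilon)$ queries only if it is $\Omega(1/d)$.

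The paper fills this gap as follows. By the pyramid decomposition, $c$ is an $m$-centerpoint as soon as each $P^-_i(c)$, and the union of the $P^+_i(c)$, holds $m\vol(X)$. Start at $c=\One$ and never decrease a coordinate. Then all positive pyramids stay empty, so volume only moves among the negative pyramids. Repeatedly sort $\pi_i=\vol(X\cap P^-_i(c))$ and take the largest gap $\gamma=\pi_k-\pi_{k+1}$. Increase $c_i$ for all light indices $i>k$, with the common step length found by binary search, until the light pyramids have gained about $\gamma/2$. \Cref{lemma:pulling} shows that heavy pyramids can only lose and light ones only gain. Hence the potential $\sum_i(\pi_i-\vol(X)/d)^2$ shrinks by roughly a factor $1-\frac{1}{2d^3}$ per step. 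Afterwards, pull along $-\One$ until about $\vol(X)/(2d)$ has moved into positive pyramids, and clamp to $[0,1]^d$. This gives a $\frac{1}{4d}$-centerpoint with polynomially many oracle calls.

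Two smaller issues remain. First, your region $K=\{x:\|x-f(y)\|\le\lambda\|x-y\|\}$ need not contain any ball around $x^\star$, because the inequality can be tight at $x^\star$ (e.g.\ for affine $f$). So the claim that ``the measure stays at least $1$'' is unjustified. The paper removes only $H_{c-f(c)}(c)$, which keeps the ball of radius $r=\frac{\epsilon(1-\lambda)}{2+2\lambda}$ by \Cref{lemma:ball_contained}. It also first reduces to $\lambda\le 1-\epsilon$, so that $\log\frac1r=\bigO(\log\frac1\epsilon)$; your proposal omits this step. Second, pyramid boundaries are hyperplanes $\pm x_i\pm x_j=\text{const}$, so the cells are not order simplices.
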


Observe that neither the query nor the runtime bounds in \Cref{thm:main_result_1} beat the respective state-of-the-art bounds: Indeed, most of the existing algorithms including the $\bigO(\log^{\lceil d / 2 \rceil} \frac{1}{\epsilon})$-time algorithm in~\cite{chenQuadraticSpeedupComputing2026} are faster than ours, and the algorithm in~\cite{chenComputingFixedPoint2025} uses fewer queries. However, our algorithm crucially achieves comparable performance in both aspects simultaneously: It is still query-efficient, and has a runtime that is not too far off the state-of-the-art. 

It turns out that this improvement has important consequences. Indeed, one of our key observations is that \Cref{thm:main_result_1} is sufficient to allow for an effective combination with a recent decomposition theorem of Chen, Li, and Yannakakis~\cite{chenQuadraticSpeedupComputing2026}. On a high level, their decomposition theorem allows us to trade off our algorithm's query complexity for its time complexity in terms of the dependency on the dimension $d$. Since the only bad part in the runtime of \Cref{thm:main_result_1} is its dependency on $d$, we can use this to prove the following theorem.

\begin{restatable}{theorem}{thmmainresulttwo}
\label{thm:main_result_2}
    Given $\epsilon > 0$ and query-access to a $\lambda$-contracting function $f : [0, 1]^d \rightarrow [0, 1]^d$, there is an algorithm that finds a point $x \in [0, 1]^d$ with $\|x - f(x)\|_\infty \leq \epsilon$ in $(\log \frac{1}{\epsilon})^{\bigO(\sqrt{d} \log d)}$ time and queries. 
\end{restatable}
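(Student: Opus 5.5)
We obtain \Cref{thm:main_result_2} by plugging the algorithm of \Cref{thm:main_result_1} into the decomposition theorem of Chen, Li, and Yannakakis~\cite{chenQuadraticSpeedupComputing2026}. That theorem, in the form we use it, says the following. Suppose $d = a + b$. Suppose we have an algorithm $A$ for $a$-dimensional instances using $q_a$ queries and $t_a$ time, and an algorithm $B$ for $b$-dimensional instances using $q_b$ queries and $t_b$ time. Then we get an algorithm for $d$-dimensional instances using roughly $q_a \cdot q_b$ queries and time $\poly(q_a) \cdot t_b + q_a \cdot t_a$ (up to polynomial factors in $d$ and $\log\frac1\epsilon$, with precision parameters rescaled polynomially).

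Intuitively, the outer algorithm runs on the first block of coordinates. Each of its queries at a point $y$ is answered by solving the inner $\ell_\infty$-contraction restricted to the remaining coordinates with $y$ fixed. Contraction guarantees that an approximate inner fixed point gives an approximate value of the induced outer map, which is again contracting. The crucial point is the asymmetry: the outer algorithm's \emph{runtime} is paid only once per outer step. It is not multiplied by the inner query count; only the \emph{query} counts multiply.

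We split the coordinates into $k = \lceil \sqrt d\,\rceil$ blocks of size about $\sqrt d$ and apply the decomposition recursively. Alternatively, we use one level with the outer dimension $\sqrt d$ handled by \Cref{thm:main_result_1} and the inner dimension handled recursively. The cleanest choice is the following. Let $g$ be the map on blocks, treated as an $\ell_\infty$-contraction of dimension $k$ over block-valued "meta-coordinates". Run the query-efficient algorithm of \Cref{thm:main_result_1} in dimension $\sqrt d$ at each of $\sqrt d$ nested levels? That gives too many queries. So instead we do the following: apply the decomposition with an outer block of dimension $m = \sqrt d$ solved by \Cref{thm:main_result_1}, costing $\bigO(m^2 \log\frac1\epsilon)$ queries and $(\log\frac1\epsilon)^{\bigO(m\log m)} = (\log\frac1\epsilon)^{\bigO(\sqrt d \log d)}$ time. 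The remaining $d-m$ coordinates are handled recursively by the same scheme. After $\sqrt d$ levels, the total query count is $\prod (\bigO(d \log\frac1\epsilon)) = (d\log\frac1\epsilon)^{\bigO(\sqrt d)}$. Each level's runtime overhead is $(\log\frac1\epsilon)^{\bigO(\sqrt d\log d)}$ times the number of times that level is invoked, which is at most the product of the query counts of the outer levels. Hence the total time is $(\log \frac1\epsilon)^{\bigO(\sqrt d \log d)}\cdot (d\log\frac1\epsilon)^{\bigO(\sqrt d)}$. When $\log\frac1\epsilon \ge d$ this is $(\log\frac1\epsilon)^{\bigO(\sqrt d\log d)}$. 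Otherwise we pad $\epsilon$ downward, which only helps the bound's form.

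The precision bookkeeping needs care. Each level must solve its inner problem to accuracy $\epsilon' = \epsilon\cdot \poly(1-\ldots)$. Since $\lambda$ may be close to $1$, we cannot rely on $\lambda$. Instead we must use the version of the decomposition lemma that works with a precision loss independent of $\lambda$ (polynomial in $d$ per level, e.g.\ $\epsilon' = \epsilon/ \poly(d)$). Across $\sqrt d$ levels, this gives $\log\frac1{\epsilon_{\text{final}}} = \log\frac1\epsilon + \bigO(\sqrt d \log d)$, which is absorbed into the exponent.

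The main obstacle is verifying that the decomposition theorem of \cite{chenQuadraticSpeedupComputing2026} indeed charges only queries, and not runtime, multiplicatively. That is, the outer algorithm must be runnable with inexact answers to its queries, supplied by approximate inner fixed points. For \Cref{thm:main_result_1} we must check that its correctness survives such perturbed answers, perhaps by running it on a slightly relaxed contraction parameter. I would first try to show that the induced outer map is itself an $\ell_\infty$-contraction up to additive error $\epsilon'$, and that our algorithm tolerates this error.
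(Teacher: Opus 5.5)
Your architecture matches the paper's. You use $\sqrt d$ blocks of size $\sqrt d$, apply the decomposition theorem of \cite{chenQuadraticSpeedupComputing2026} recursively with the algorithm of \Cref{thm:main_result_1} as base solver, and observe that only query counts multiply. The paper likewise reads the refined time bound $\bigO((q_2+t_1)q_2+t_2)$ off the analysis in \cite{chenQuadraticSpeedupComputing2026}.

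The gap is exactly the step you flag as the main obstacle and leave open: how the base solver is plugged in. The decomposition theorem is not a statement about $\lambda$-contractions. It concerns $\nonexp^\dagger(d,\epsilon)$, i.e.\ non-expansive maps $[0,1]^d\to[-\epsilon,1+\epsilon]^d$. It holds with the \emph{same} $\epsilon$ on both sides, so there is no precision loss per level and no dependence on $\lambda$, and it uses the sub-solvers as black boxes for that class. The $\lambda$-independent version you want to cite is precisely this statement. The sub-instances it creates are non-expansive maps on sub-boxes $\llbracket a,b\rrbracket$ that may leave their domain by $\epsilon$, so \Cref{thm:main_result_1} does not apply to them directly.

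The missing idea is to turn \Cref{thm:main_result_1} into a $\nonexp^\dagger$-solver by black-box reductions, without opening up its algorithm:
\begin{itemize}
\item A sub-box instance becomes a cube instance via $f\circ\proj_{\llbracket a,b\rrbracket}$.
\item $\nonexp^\dagger(m,4\epsilon)$ reduces to $\nonexp(m,2\epsilon)$: solve for $g=\proj_{[0,1]^m}\circ f$, which is non-expansive by \Cref{obs:generality_by_projection}, then set $x_i$ to $0$ or $1$ wherever $f_i(x)<0$ or $f_i(x)>1$. This costs one extra query (\Cref{lemma:reduction_to_nonexpansion}).
\item $\nonexp(m,2\epsilon)$ reduces to $\contr(m,\epsilon,1-\epsilon)$ via $g=(1-\epsilon)f$.
\end{itemize}
Since $\contr(d,\epsilon,\lambda)$ is a special case of $\nonexp^\dagger(d,\epsilon)$, this gives \Cref{lemma:square-root-decomposition}. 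The result is $(q+1)^{\sqrt d}$ queries and $q^{\bigO(\sqrt d)}t$ time, with $q=\poly(d,\log\frac1\epsilon)$ and $t=(\log\frac1\epsilon)^{\bigO(\sqrt d\log d)}$, and no precision bookkeeping at all.

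Your proposed substitute is to show that the algorithm of \Cref{thm:main_result_1} tolerates perturbed answers coming from a contraction-based decomposition. This is not carried out, and its bookkeeping is off:
\begin{itemize}
\item An $\epsilon'$-approximate inner fixed point locates the true one only to within $\epsilon'/(1-\lambda)$.
\item The argument of \Cref{lemma:ball_contained} keeps the cuts safe only for answer errors of order $\epsilon(1-\lambda)$.
\item In the worst case $1-\lambda\approx\epsilon$ even after rescaling. So the required accuracy shrinks by roughly a factor $\epsilon^2$ per level, not $\poly(d)$.
\end{itemize}
Even a $\poly(d)$ loss per level would not be absorbed. It yields $(\log\frac1\epsilon+\sqrt d\log d)^{\bigO(\sqrt d\log d)}$, which exceeds $(\log\frac1\epsilon)^{\bigO(\sqrt d\log d)}$ by an extra $\log d$ factor in the exponent when $\log\frac1\epsilon\ll\sqrt d\log d$.

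Finally, your condition $\log\frac1\epsilon\ge d$ and the padding step are unnecessary. Padding to $\log\frac1\epsilon=d$ would actually worsen the bound to $d^{\bigO(\sqrt d\log d)}$. Instead, $d^{\bigO(\sqrt d)}=2^{\bigO(\sqrt d\log d)}\le(\log\frac1\epsilon)^{\bigO(\sqrt d\log d)}$ already holds once $\log\frac1\epsilon\ge 2$.
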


Note that our algorithm in \Cref{thm:main_result_2} is not query-efficient anymore.

\paragraph*{Implications for Condon's and Shapley's Stochastic Games}

Condon's simple stochastic games are played on a graph with $n$ vertices and can be solved in expected subexponential time $2^{\bigO(\sqrt{n} \log n)} \poly(|G|)$~\cite{ludwigSubexponentialRandomizedAlgorithm1995, bjorklundRandomizedSubexponentialAlgorithms2004, halmanSimpleStochasticGames2007}, where $|G|$ is the encoding size of the game. However, to the best of our knowledge, the best deterministic algorithms run in exponential time. As mentioned before, we can reduce such a game $G$ on an $n$-vertex graph to an instance of the $\ell_\infty$-contraction problem, yielding a $\lambda$-contracting function $f : [0, 1]^d \rightarrow [0, 1]^d$ with $d = n$, $\epsilon \approx 2^{-\poly(|G|)}$, and $\lambda \approx 1 - 2^{-\poly(|G|)}$~\cite{condonComplexityStochasticGames1992, etessamiTarskiTheoremSupermodular2020}. Thus, \Cref{thm:main_result_2} implies a deterministic algorithm for Condon's simple stochastic games running in time $|G|^{\bigO(\sqrt{n} \log n)}$.

Shapley's stochastic games are a generalization of Condon's simple stochastic games and are also played on a directed graph with $n$ vertices. Since exact solutions may be irrational, one often considers the problem of solving them approximately. The current state-of-the-art algorithm\footnote{See also \url{https://theorydish.blog/2025/09/15/quickly-approximating-shapley-games/} for a discussion.} for $\epsilon$-approximating the value of a Shapley stochastic game $G$ follows by applying the algorithm for monotone contractions from~\cite{batziouMonotoneContractions2025} which runs in time $\bigO((c \log \frac{1}{\epsilon})^{\lceil d / 3 \rceil})$: Indeed, by combining their algorithm with the reduction in~\cite{etessamiTarskiTheoremSupermodular2020}, they can $\epsilon$-approximately solve a Shapley stochastic game $G$ on a graph with $n$ vertices in time $(c \log \frac{1}{\epsilon})^{\lceil n / 3 \rceil}\poly(|G|)$ for some constant $c$, where $|G|$ denotes the encoding size of the game. By replacing the algorithm for monotone contractions from~\cite{batziouMonotoneContractions2025} with our new algorithm in \Cref{thm:main_result_2}, we thus obtain an algorithm for $\epsilon$-approximately solving a Shapley stochastic game in time $(\log \frac{1}{\epsilon})^{\bigO(\sqrt{n} \log n)}\poly(|G|)$. 

\paragraph*{Further Related Work}

Naturally, the problem of finding (approximate) fixed points of contraction maps has also been studied for metrics other than the $\ell_\infty$-metric. For example, polynomial-time algorithms have been known in the Euclidean case ($\ell_2$-metric) for a while now~\cite{sikorskiComputationalComplexityFixed2009, huangApproximatingFixedPoints1999, sikorskiEllipsoidAlgorithmComputation1993}. 
Moreover, in the case of piecewise linear $\ell_p$-contractions with $p \in \N \cup \{\infty\}$, an algorithm running in time $\bigO((\log \frac{1}{\epsilon})^d)$ was given in~\cite{fearnleyUniqueEndPotential2020}. The case of $\ell_p$-metrics for general $p \in [1, \infty]$ also got more attention recently when, inspired by the query-efficient algorithm for $\ell_\infty$-contractions in~\cite{chenComputingFixedPoint2025}, it was proven in~\cite{haslebacherQueryEfficientFixpointsLpContractions2025} that an $\epsilon$-approximate fixed point of a function $f : [0, 1]^d \rightarrow [0, 1]^d$ can be found in $\poly(d, \log \frac{1}{\epsilon}, \log \frac{1}{1 - \lambda})$ queries as long as $f$ is $\lambda$-contracting for some $\ell_p$-metric. Of course, the catch of this result is again that it is unclear how to actually find the query points in a time-efficient manner. 

If the metric is not specified ahead of time but is part of the input, then the problem of finding an (approximate) fixed point is known to be CLS-complete~\cite{daskalakisConverseBanachFixed2018}. In fact, fixed point problems on contraction maps were among the initial motivating problems for CLS~\cite{daskalakisContinuousLocalSearch2011}. 

\paragraph*{Future Work}

Our algorithm in \Cref{thm:main_result_2} is enabled by two main ingredients: 

\begin{itemize}
    \item[1)] The decomposition theorem in~\cite{chenQuadraticSpeedupComputing2026} and the observation that such a decomposition theorem is useful in combination with algorithms that use few queries but a lot of time.
    \item[2)] A query-efficient algorithm with a time dependency that is bad only in terms of $d$ (in particular, having only a logarithmic dependency on $1 / \epsilon$ in the runtime of our algorithm in \Cref{thm:main_result_1} is essential).
\end{itemize}

Decomposition theorems similar to the one in~\cite{chenQuadraticSpeedupComputing2026} also exist for some other problems\footnote{For $\ell_1$-contraction, there is a closely related result in~\cite{chenQuadraticSpeedupComputing2026}.} such as Tarski~\cite{fearnleyFasterAlgorithmFinding2022} and Discrete Monotone Approximate Contraction~\cite{batziouMonotoneContractions2025}. Therefore, if one could also prove something analogous to ingredient 2) above for any of those problems, it would probably also imply an overall faster algorithm for the respective problem.

We also want to note that the bottleneck of our algorithm in \Cref{thm:main_result_1} lies in the volume computations (see \Cref{sec:volume_computation}) that we need to perform. Thus, by improving this part of the algorithm, one would likely be able to get both a better runtime bound in \Cref{thm:main_result_1} as well as a better overall algorithm in the sense of \Cref{thm:main_result_2}.

\section{Technical Overview}
\label{sec:technical_overview}

We use this section to give an overview of our arguments. The goal is to explain how \Cref{thm:main_result_1} and \Cref{thm:main_result_2} are obtained by combining various pieces together in the right way. Detailed proofs of the individual pieces are deferred to Sections \ref{sec:finding_centerpoint}, \ref{sec:volume_computation}, and \ref{sec:decomposition}.

\subsection{Preliminaries}
\label{ssec:preliminaries}

In terms of notation, we will use $[n] = \{1, \dots, n\}$ for any natural number $n$, $\|\cdot \|$ for the $\ell_\infty$-norm unless otherwise indicated (i.e.\ $\|x\| = \max_{i \in [d]} |x_i|$ for $x \in \R^d$), and $S^{d - 1}$ for the unit sphere in $\R^d$. All our subsets of $\R^d$ are measurable, and we will use $\vol (X)$ to denote the Lebesgue measure of $X \subseteq \R^d$. Given two points $x, y \in \R^d$, we use $x \leq y$ to say that $x_i \leq y_i$ for all $i \in [d]$. We repeat that a function $f : [0,1]^d \rightarrow [0, 1]^d$ is $\lambda$-contracting for some $0 \leq \lambda < 1$ if $\|f(x) - f(y)\| \leq \lambda \|x - y\|$ for all $x, y \in [0, 1]^d$. If this property holds for $\lambda = 1$, we call the function non-expansive. A point $x \in [0, 1]^d$ is a fixed point of $f$ if it satisfies $x = f(x)$. It is an $\epsilon$-approximate fixed point of $f$ if it satisfies $\|x - f(x)\| \leq \epsilon$.

\subsection{The Proof of \Cref{thm:main_result_1}}
\label{ssec:proof_thm_1}

We start by outlining the algorithm for \Cref{thm:main_result_1} (see also~\Cref{alg:algo_1}). In particular, we want to find an $\epsilon$-approximate fixed point of a $\lambda$-contracting function $f : [0, 1]^d \rightarrow [0, 1]^d$ in $\bigO(d^2 \log(1 / \epsilon))$ queries and $(\log \frac{1}{\epsilon})^{\bigO(d \log d)}$ time. 

\paragraph{$\ell_\infty$-Halfspaces and $\ell_\infty$-Centerpoints}

Our algorithm follows the ideas of the query-efficient algorithm in~\cite{chenComputingFixedPoint2025}, but we use some of the more recent terminology introduced in~\cite{haslebacherQueryEfficientFixpointsLpContractions2025}, such as $\ell_\infty$-halfspaces and $\ell_\infty$-centerpoints: Given $x \in \R^d$ and $v \in S^{d - 1}$, the $\ell_\infty$-halfspace $H_v(x)$ around $x$ in direction $v$ is defined as the set of points
\[
    H_v(x) = \{ y \in \R^d \mid \forall \delta > 0:\; \| y - x\| \leq \|y - (x - \delta v) \| \}
\]
that are at least as close (in the $\ell_\infty$-metric) to $x$ as to $x - \delta v$, for all $\delta > 0$. Observe that $H_v(x)$ is invariant under scaling $v$ with a positive scalar, and thus the constraint $v \in S^{d - 1}$ can equivalently be understood as $v \in \R^d \setminus \{0\}$ (we simply think of $v$ as a direction).

\begin{figure}[ht]
    \centering
    \includegraphics[width=0.5\linewidth]{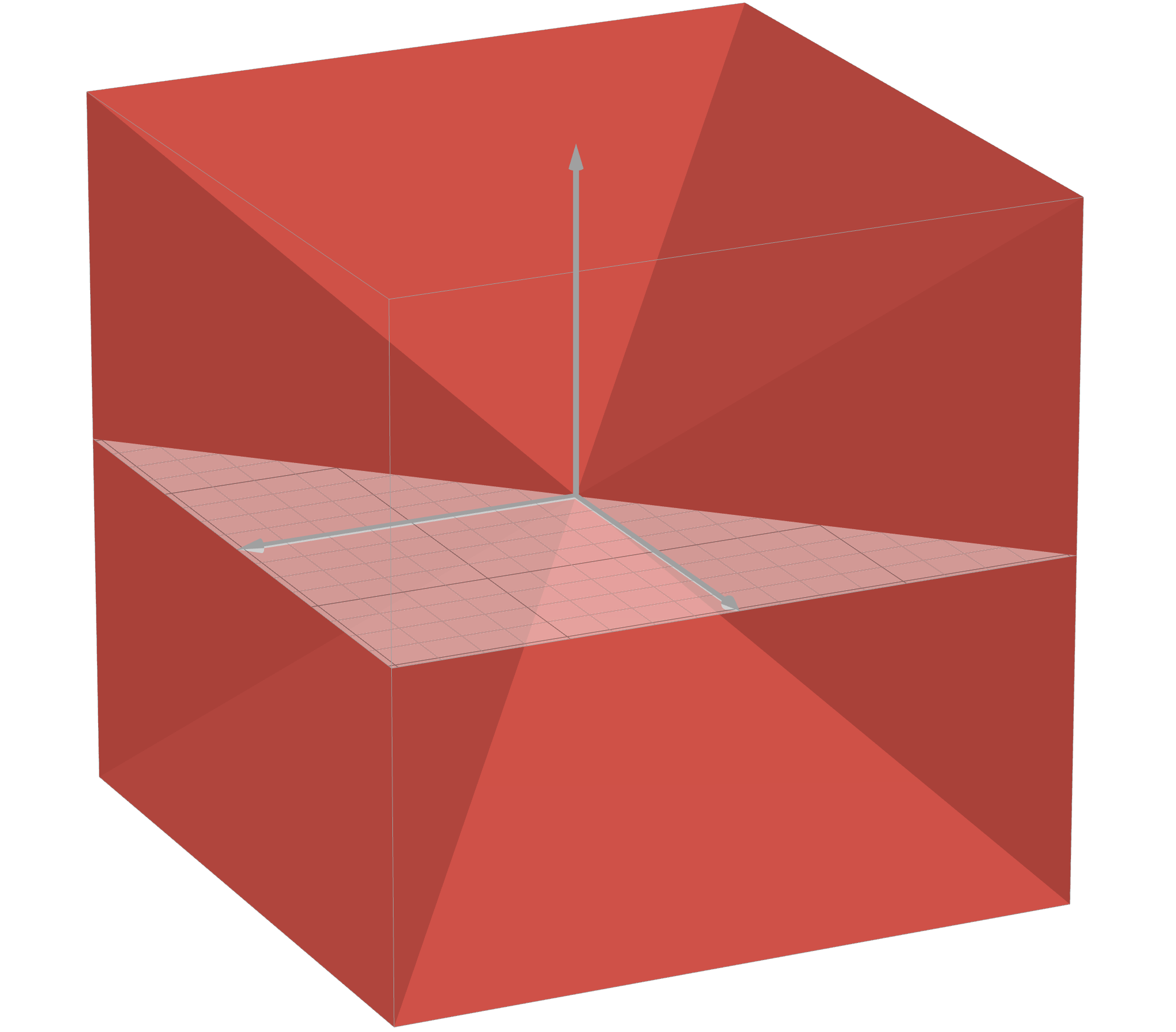}
    \caption{An $\ell_\infty$-halfspace $H_{v}(0)$ around the origin with $v = (-1, -1, -1)$ in $\R^3$ (coordinate axes drawn in gray). In particular, observe that $\ell_\infty$-halfspaces are in general not convex. Figure created with Desmos 3D calculator (\url{https://www.desmos.com/3d}).}
    \label{fig:halfspace}
\end{figure}

Next, consider a subset $X \subseteq [0, 1]^d$ with $\vol(X) \geq 0$. A point $c \in \R^d$ is an $\ell_\infty$-centerpoint of quality $\alpha$ of $X$ if and only if $\vol (X \cap H_v(c)) \geq \alpha \vol (X)$ for all $v \in S^{d - 1}$. We sometimes simply call such $c$ an $\alpha$-centerpoint of $X$. Crucially, the existence of $\frac{1}{2}$-centerpoints was proven in~\cite{chenComputingFixedPoint2025}\footnote{Technically, $X$ is a discrete pointset in their setting, but their proof also works in our continuous setting. See Appendix~\ref{appendix:reproduced_centerpoint_proof} for more details.}.

\begin{restatable}[$\ell_\infty$-Centerpoint Theorem~\cite{chenComputingFixedPoint2025}]{theorem}{thmcenterpoint}
\label{thm:centerpoint_theorem}
    For every $X \subseteq [0, 1]^d$ with $\vol(X) \geq 0$, there is a $\frac{1}{2}$-centerpoint in $[0, 1]^d$, i.e.\ a point $c \in [0, 1]^d$ with $\vol (X \cap H_v(c)) \geq \frac{1}{2} \vol (X)$ for all $v \in S^{d - 1}$.
\end{restatable}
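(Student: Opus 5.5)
The approach is to reduce the centerpoint condition to a balancing condition on $2d$ "pyramids" around $c$, and then obtain a balanced point from the Poincaré--Miranda theorem (equivalently, Brouwer). If $\vol(X)=0$, any $c$ works, so assume $\vol(X)>0$.

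\textbf{Step 1: describe $H_v(c)$.} Write $u=y-c\neq 0$ and let $I(u)=\{i : |u_i|=\|u\|\}$ be the set of coordinates attaining the maximum. The map $\delta\mapsto\|u+\delta v\|$ is convex. Hence $\|u+\delta v\|\ge\|u\|$ holds for all $\delta>0$ if and only if its right derivative at $0$ is nonnegative. That derivative equals $\max_{i\in I(u)} \operatorname{sign}(u_i)v_i$. Note that the definition is equivalent to $\|y-c\|\le\|y-c+\delta v\|$. So $y\notin H_v(c)$ exactly when $\operatorname{sign}(u_i)v_i<0$ for every $i\in I(u)$.

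\textbf{Step 2: cover the complement by pyramids.} For $i\in[d]$ and $s\in\{+,-\}$ define
\[
P_{i,s}(c)=\{y : |y_i-c_i|\ge|y_j-c_j| \text{ for all } j,\ \operatorname{sign}(y_i-c_i)=s\}.
\]
These $2d$ pyramids cover $\R^d$ and overlap only in a Lebesgue-null set. By Step 1, if $y\notin H_v(c)$, then any $i\in I(y-c)$ has $v_i\neq 0$ and $y\in P_{i,-\operatorname{sign}(v_i)}(c)$. Fix any sign vector $s$ with $s_i=-\operatorname{sign}(v_i)$ whenever $v_i\neq0$. Then
\[
X\setminus H_v(c)\subseteq U_s(c):=\bigcup_{i} P_{i,s_i}(c).
\]
Up to null sets, $U_s(c)$ and $U_{-s}(c)$ partition $\R^d$, and $\vol(X\cap U_s(c))=\sum_i\vol(X\cap P_{i,s_i}(c))$.

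\textbf{Step 3: reduce to balancing.} Suppose $c$ satisfies
\[
g_i(c):=\vol(X\cap P_{i,+}(c))-\vol(X\cap P_{i,-}(c))=0 \quad\text{for all } i\in[d].
\]
Then for every $s$ we have $\vol(X\cap U_s(c))=\vol(X\cap U_{-s}(c))=\tfrac12\vol(X)$. Step 2 then gives $\vol(X\setminus H_v(c))\le\tfrac12\vol(X)$, which is the claim. This is the key simplification: the naive requirement has exponentially many constraints (one per sign vector $s$), but it collapses to only $d$ constraints in $d$ unknowns.

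\textbf{Step 4: find a balanced point.} Apply the Poincaré--Miranda theorem to $g=(g_1,\dots,g_d)$ on $[0,1]^d$.
\begin{itemize}
\item On the face $c_i=0$, every $y\in X\subseteq[0,1]^d$ has $y_i-c_i\ge0$. So $X\cap P_{i,-}(c)$ is contained in $\{y_i=c_i\}$, which is null, and hence $g_i\ge 0$.
\item Symmetrically, on the face $c_i=1$ we get $g_i\le 0$.
\item Continuity of $g$ follows from dominated convergence. As $c'\to c$, the indicator of $P_{i,s}(c')$ converges pointwise to that of $P_{i,s}(c)$ off the boundary of $P_{i,s}(c)$. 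That boundary lies in a finite union of hyperplanes $\{y_i-c_i=\pm(y_j-c_j)\}$ and $\{y_i=c_i\}$, so it has measure zero.
\end{itemize}
Poincaré--Miranda then yields $c\in[0,1]^d$ with $g(c)=0$.

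The main obstacle I expect is Step 1, specifically getting the boundary and tie cases right. These are: several maximizing coordinates, coordinates with $v_i=0$, and $u=0$ (which trivially lies in $H_v(c)$). It has to be clear that each of them lands in a null set or on the correct side of the inclusion. The continuity argument in Step 4 is routine once those null sets are identified.
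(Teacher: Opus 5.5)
Your proof is correct and follows the paper's argument: show via a Brouwer-type theorem that a balanced point $c\in[0,1]^d$ exists, i.e.\ one with $\vol(P^+_i(c)\cap X)=\vol(P^-_i(c)\cap X)$ for all $i$, and then conclude from the pyramid structure of $H_v(c)$. The differences are cosmetic. You derive the needed inclusion $X\setminus H_v(c)\subseteq U_s(c)$ directly from the definition instead of citing the pyramid decomposition. You also apply Poincar\'e--Miranda to the face sign conditions rather than Brouwer to $z\mapsto z+\delta g(z)$, which is the standard reduction between the two, and your continuity argument is more explicit than the paper's.
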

\begin{proof}
    For convenience, we sketch the proof from~\cite{chenComputingFixedPoint2025} in Appendix~\ref{appendix:reproduced_centerpoint_proof}.
\end{proof}

\paragraph{Query-Efficient Algorithm}

We will now describe how the existence of $\ell_\infty$-centerpoints leads to a query-efficient algorithm for the $\ell_\infty$-contraction problem, following the ideas in~\cite{chenComputingFixedPoint2025}. For this, fix a $\lambda$-contracting function $f : [0, 1]^d \rightarrow [0, 1]^d$ with $\lambda < 1$. We start by initializing our search space $X = [0, 1]^d$. We then proceed in rounds: In each round, we query an $\alpha$-centerpoint $c \in [0, 1]^d$ of $X \subseteq [0, 1]^d$. By \Cref{thm:centerpoint_theorem}, such $c$ exists if $\alpha$ is chosen to be at most $1 / 2$. If $c$ is an $\epsilon$-approximate fixed point of $f$, we are done. Otherwise, we update $X \gets X \setminus H_v(c)$, where $v := c - f(c) \neq 0$, and continue with the next round. In particular, the algorithm only terminates once we have found an $\epsilon$-approximate fixed point of $f$.

\begin{algorithm}
\label{algo_thm_1}
    \caption{Finding an $\epsilon$-approximate fixed point of $f : [0, 1]^d \rightarrow [0, 1]^d$}
    \label{alg:algo_1}
    \DontPrintSemicolon

    $X \gets [0, 1]^d$ \\
    \Repeat{$\|c - f(c)\| \leq \epsilon$}{
        Compute $\alpha$-centerpoint $c$ of $X$ with \Cref{thm:approximate_centerpoint_theorem}, using \Cref{lemma:volume_computation} as a subroutine. \\
        $X \gets X \setminus H_{c - f(c)}(c)$
    }
    \Return $c$
\end{algorithm}

\paragraph{Correctness and Query Complexity}

To see why the above algorithm works, we adopt the analysis from~\cite{haslebacherQueryEfficientFixpointsLpContractions2025}. In particular, the algorithm terminates due to the following key lemma\footnote{The lemma does not appear exactly as written here in~\cite{haslebacherQueryEfficientFixpointsLpContractions2025}, which is why we provide a proof in Appendix~\ref{app:reproving}.}, which guarantees that as long as $c$ is not an $\epsilon$-approximate fixed point, the update $X \gets X \setminus H_{c - f(c)}(c)$ is safe.

\begin{restatable}[\cite{haslebacherQueryEfficientFixpointsLpContractions2025}]{lemma}{lemmaballcontained}
\label{lemma:ball_contained}
    Let $f : [0, 1]^d \rightarrow [0, 1]^d$ be $\lambda$-contracting with fixed point $x^\star = f(x^\star)$. Assume that $c \in [0, 1]^d$ is not an $\epsilon$-approximate fixed point of $f$. Then $B^\infty_r(x^\star) \cap H_{c - f(c)}(c) = \emptyset$, where $B^\infty_r(x^\star)$ is the $\ell_\infty$-norm ball of radius $r = \frac{\epsilon - \epsilon \lambda}{2+2\lambda}$ around $x^\star$. 
\end{restatable}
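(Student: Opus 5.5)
The plan is to show directly that every point $y$ with $\|y - x^\star\| \leq r$ violates the defining condition of $H_v(c)$ for $v := c - f(c)$. To do this it suffices to exhibit one $\delta > 0$ with $\|y - c\| > \|y - (c - \delta v)\|$. The natural choice is $\delta = 1$, because then $c - \delta v = f(c)$. The claim thus reduces to showing that every point of the ball $B^\infty_r(x^\star)$ is strictly closer (in $\ell_\infty$) to $f(c)$ than to $c$.

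Set $D := \|c - x^\star\|$. First I would get a lower bound on $D$ from the fact that $c$ is not an $\epsilon$-approximate fixed point. By the triangle inequality and the contraction property applied to $c$ and $x^\star = f(x^\star)$,
\[
    \epsilon < \|c - f(c)\| \leq \|c - x^\star\| + \|f(x^\star) - f(c)\| \leq (1 + \lambda) D,
\]
so $D > \epsilon/(1+\lambda)$.

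Next, fix $y$ with $\|y - x^\star\| \leq r$ and bound both distances using the triangle inequality:
\[
    \|y - c\| \geq D - r \qquad\text{and}\qquad \|y - f(c)\| \leq r + \|f(c) - f(x^\star)\| \leq r + \lambda D.
\]
It then suffices to show $D - r > r + \lambda D$, i.e.\ $(1 - \lambda) D > 2r$. By the lower bound on $D$,
\[
    (1-\lambda)D > \frac{\epsilon(1-\lambda)}{1+\lambda} = 2r,
\]
which is exactly how $r$ was chosen.

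The argument is elementary, and the only point needing care is the strictness of the inequalities. The strict inequality comes from $\|c - f(c)\| > \epsilon$. This gives $\|y - c\| > \|y - f(c)\|$, so $y \notin H_v(c)$ by definition, and hence $B^\infty_r(x^\star) \cap H_{c - f(c)}(c) = \emptyset$. Note that $y$ need not lie in $[0,1]^d$, since only the norm is evaluated at $y$ and $f$ is never applied to it.
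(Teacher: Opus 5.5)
Your proof is correct and follows essentially the same route as the paper. Both arguments specialize the halfspace condition to $\delta = 1$, so that the comparison point is $f(c)$, and then combine the triangle inequality with contraction at $x^\star$. The only difference is presentation: the paper argues by contradiction, deriving $\|c - f(c)\| \leq \epsilon$ from a hypothetical point in the intersection, while you argue directly from $\|c - x^\star\| > \epsilon/(1+\lambda)$.
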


\begin{proof}
    For convenience, we reproduce the proof of~\cite{haslebacherQueryEfficientFixpointsLpContractions2025} in Appendix \ref{app:reproving}. 
\end{proof}

Note that the volume $\vol(B^\infty_r(x^\star) \cap [0, 1]^d)$ of $B^\infty_r(x^\star)$ from \Cref{lemma:ball_contained} intersected with $[0, 1]^d$ is at least $r^d$ (in the worst case $x^\star$ is at a corner of the cube), and the initial volume of $X$ is $1$ (since we initialize $X = [0, 1]^d$). As long as we do not query an $\epsilon$-approximate fixed point, the volume of our search space $X$ will shrink by at least a factor $(1 - \alpha)$ with every iteration (because $c$ is an $\alpha$-centerpoint of $X$, which implies $\vol(H_{c - f(c)}(c) \cap X) \geq \alpha \vol(X)$). Thus, \Cref{lemma:ball_contained} guarantees that the algorithm will have to query an $\epsilon$-approximate fixed point of $f$ within the first $\bigO(\frac{d}{\alpha} \log \frac{1}{r})$ iterations, since 
\[
    (1 - \alpha)^{\frac{d}{\alpha} \log \frac{1}{r}} \leq \exp ( -d \log (\frac{1}{r})) = r^d.
\]
This argument establishes correctness and an upper bound on the number of queries made by \Cref{alg:algo_1}. However, the key challenge will be to actually find the centerpoint $c$ of $X$ in each iteration in reasonable time. 

\paragraph*{Pyramids and Pyramid Decomposition}

As mentioned in \Cref{sec:introduction}, a brute-force search was used in~\cite{chenComputingFixedPoint2025} to find the centerpoints that are guaranteed to exist by \Cref{thm:centerpoint_theorem}. This resulted in a runtime of at least $\Omega((1 / \epsilon)^d)$. Our main technical contribution is to replace the brute-force search with a new procedure that can find a $\frac{1}{4d}$-centerpoint in time $(\log \frac{1}{\epsilon})^{\bigO(d \log d)}$. For this, we will heavily rely on the notion of pyramids, which were also already introduced in~\cite{chenComputingFixedPoint2025}: Given $x \in \R^d$, $i \in [d]$, and a sign $s \in \{-1, +1\}$, the $s$-pyramid $P^s_i(x)$ (see also \Cref{fig:pyramid}) in direction $i$ around $x$ is defined as 
\[
    P^s_i(x) = \{ y \in \R^d \mid s(y_i - x_i) = \|y - x\| \}.
\]
We also sometimes refer to $P^-_i(x)$ ($P^+_i(c)$, respectively) as the negative (positive) pyramid in direction $i$ around $x$. Note that there are exactly $2d$ different pyramids around every point $x \in \R^d$ (one negative and one positive pyramid for each $i \in [d]$). If the parameters $i, s, x$ are not important for an argument, we sometimes refer to a pyramid $P$ without specifying them.

\begin{figure}[ht]
    \centering
    \includegraphics[width=0.5\linewidth]{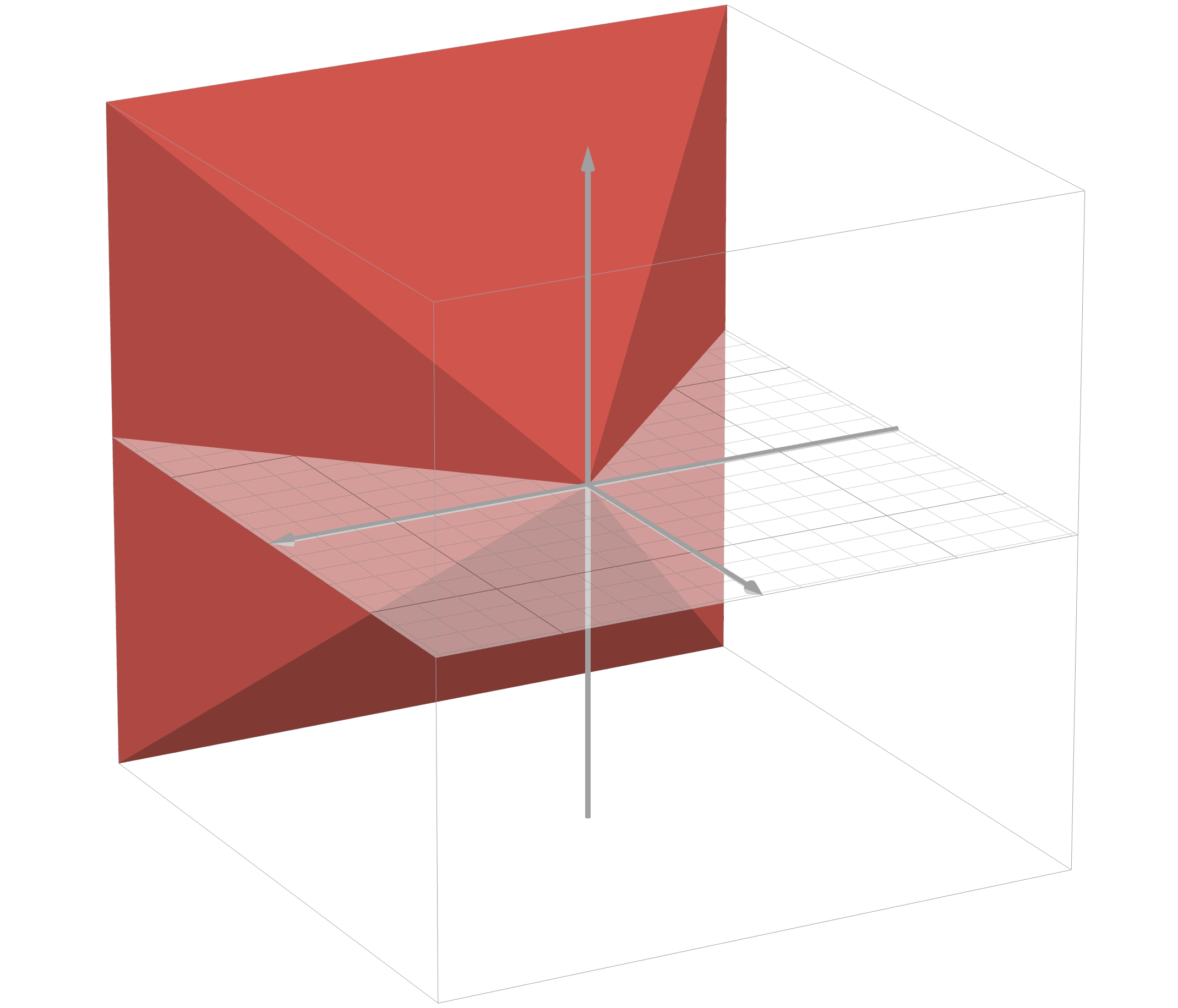}
    \caption{A pyramid around the origin in $\R^3$. Observe that all pyramids are convex. Figure created with Desmos 3D calculator (\url{https://www.desmos.com/3d}).}
    \label{fig:pyramid}
\end{figure}

Pyramids arise naturally due to the observation in~\cite{chenComputingFixedPoint2025} that every $\ell_\infty$-halfspace can be decomposed into pyramids (for example, the $\ell_\infty$-halfspace in \Cref{fig:halfspace} consists of the three negative pyramids around the origin), as follows.

\begin{observation}[Pyramid Decomposition~\cite{chenComputingFixedPoint2025}]
\label{obs:pyramid_decomposition}
    For all $x \in \R^d$ and $v \in S^{d - 1}$, the $\ell_\infty$-halfspace $H_v(x)$ can be decomposed into pyramids as
    \[
        H_v(x) = \left( \bigcup_{i : v_i \geq 0} P^+_i(x) \right) \cup \left( \bigcup_{i : v_i \leq 0} P^-_i(x) \right).
    \]
\end{observation}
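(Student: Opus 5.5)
We prove the two inclusions separately, reducing the defining condition of $H_v(x)$ to a statement about the coordinates of $y - x$ that attain the maximum. Translating, we may assume $x = 0$. Write $z = y$ and $M = \|z\|$. The condition for $z \in H_v(0)$ is that $\|z\| \leq \|z + \delta v\|$ for all $\delta > 0$. The case $z = 0$ is trivial: $0$ lies in every pyramid and in every halfspace. So assume $M > 0$ and let $I = \{ i : |z_i| = M\}$ be the set of maximizing coordinates.

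\textbf{Inclusion $\supseteq$.} Suppose $z \in P^+_i(0)$ with $v_i \geq 0$, so $z_i = M$. Then for every $\delta > 0$,
\[
\|z + \delta v\| \geq z_i + \delta v_i \geq M,
\]
hence $z \in H_v(0)$. The case $z \in P^-_i(0)$ with $v_i \leq 0$ is symmetric: $-(z_i + \delta v_i) \geq M$.

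\textbf{Inclusion $\subseteq$.} We argue by contraposition. Suppose $z$ lies in none of the listed pyramids. Then every $i \in I$ satisfies exactly one of the following: $z_i = M$ and $v_i < 0$, or $z_i = -M$ and $v_i > 0$. Note that $I$ is nonempty, and $z_i = M$ and $z_i = -M$ cannot hold simultaneously since $M > 0$. Hence $\mathrm{sign}(v_i) = -\mathrm{sign}(z_i)$ with $v_i \neq 0$ for all $i \in I$, and therefore
\[
|z_i + \delta v_i| = M - \delta |v_i| < M \quad \text{for } 0 < \delta < M/|v_i|.
\]
For $j \notin I$ we have $|z_j| < M$, so by continuity $|z_j + \delta v_j| < M$ for all sufficiently small $\delta > 0$. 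Since there are finitely many coordinates, some $\delta > 0$ gives $\|z + \delta v\| < M = \|z\|$, so $z \notin H_v(0)$.

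The only delicate point is this second direction, specifically the bookkeeping of the boundary cases. Coordinates with $v_i = 0$ are covered by both unions, which is why those indices appear with non-strict inequalities in the statement. Non-maximizing coordinates must be handled by choosing $\delta$ small, which is where the quantifier "for all $\delta > 0$" in the definition of $H_v$ matters. Everything else is a direct computation.
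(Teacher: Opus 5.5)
Your proof is correct and complete. The reduction to $x = 0$ is legitimate because both $H_v(x)$ and $P^s_i(x)$ depend only on $y - x$, and the inclusion $\supseteq$ is the one-line bound $\|z + \delta v\| \geq s(z_i + \delta v_i) \geq M$. Your contrapositive for $\subseteq$ correctly uses that every maximizing coordinate must have $v_i \neq 0$ with sign opposite to $z_i$, so a single sufficiently small $\delta > 0$ pushes all coordinates strictly below $M$.

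The paper gives no proof of this observation of its own; it attributes the statement to Chen, Li, and Yannakakis and uses it as a fact. Your direct verification from the definitions supplies the argument the paper takes for granted.
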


In particular, we have 
\[
    \vol(X \cap H_v(x)) = \left( \sum_{i : v_i \geq 0} \vol(P^+_i(x) \cap X) \right) + \left( \sum_{i : v_i \leq 0} \vol(P^-_i(x) \cap X) \right)
\]
for all $x \in \R^d$ and $X \subseteq [0, 1]^d$. Moreover, \Cref{obs:pyramid_decomposition} tells us that we could equivalently restrict $v$ to the domain $\{-1, 0, +1\}^d \setminus \{0\}$ instead of $S^{d - 1}$ or $\R^d \setminus \{0\}$ in the definition of $\ell_\infty$-halfspaces. Indeed, for every direction $i \in [d]$, $H_v(x)$ contains $P^-_i(x)$ or $P^+_i(x)$, depending only on the sign of $v_i$. This also implies that there are exactly $3^d - 1$ different $\ell_\infty$-halfspaces around any given point $x \in \R^d$.

\paragraph{Valid Search Space} 
With \Cref{obs:pyramid_decomposition}, we can now analyze the shape of our search space $X \subseteq [0, 1]^d$ at each step of \Cref{alg:algo_1}. In particular, recall that we start with $X = [0, 1]^d$ and repeatedly cut away $\ell_\infty$-halfspaces from $X$. By \Cref{obs:pyramid_decomposition}, we can also reformulate this as starting with $X = [0, 1]^d$ and repeatedly cutting away pyramids from it, since each $\ell_\infty$-halfspace is a union of at most $2d - 1$ pyramids. This leads us to our definition of valid search spaces. 

\begin{definition}[Valid Search Space]
    A set $X \subseteq [0, 1]^d$ is a valid search space of complexity $n$ if and only if there exist $n$ pyramids $P^{(1)}, \dots, P^{(n)}$ such that $X = [0, 1]^d \setminus \bigcup_{i \in [n]} P^{(i)}$.
\end{definition}

\begin{observation}
\label{obs:search_space_complexity}
    Throughout \Cref{alg:algo_1}, the set $X$ is always a valid search space of complexity at most $\bigO \big( d q \big)$, where $q$ is an upper bound on the number of iterations (or equivalently queries) made by the algorithm.
\end{observation}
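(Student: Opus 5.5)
The argument is an induction on the number of completed iterations of \Cref{alg:algo_1}; the invariant is that after $t$ iterations, $X$ is a valid search space of complexity at most $(2d-1)t$. Since $t \leq q$ throughout, this gives the claimed bound of $\bigO(dq)$.

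\textbf{Base case.} Initially $X = [0,1]^d$. This is a valid search space of complexity $0$, because it is $[0,1]^d$ minus an empty union of pyramids. If one insists on at least one pyramid, we can instead take a pyramid around a point far outside the cube that misses $[0,1]^d$ entirely, which gives complexity at most $1$.

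\textbf{Inductive step.} Suppose that before some iteration $X = [0,1]^d \setminus \bigcup_{j \in [n]} P^{(j)}$. The update sets
\[
X \gets X \setminus H_v(c), \qquad v = c - f(c) \neq 0.
\]
By \Cref{obs:pyramid_decomposition}, $H_v(c)$ is the union of $P^+_i(c)$ over the indices $i$ with $v_i \geq 0$ and $P^-_i(c)$ over the indices $i$ with $v_i \leq 0$. Each index $i$ with $v_i \neq 0$ contributes exactly one pyramid. Each index with $v_i = 0$ contributes two pyramids, which together form the slab-type set around direction $i$.

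\textbf{Counting the pyramids.} This count is the only mildly delicate point. Naively it gives up to $2d$ pyramids when many $v_i = 0$. That already suffices for the $\bigO(dq)$ claim, so the sharper $2d-1$ bound is not needed: since $v \neq 0$, at least one coordinate contributes a single pyramid, so the total is at most $2d-1$. For the asymptotic statement, though, I would simply use the bound $2d$.

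\textbf{Conclusion.} Since $A \setminus (B \cup C) = (A \setminus B) \setminus C$, the new set is
\[
[0,1]^d \setminus \Bigl(\bigcup_{j \in [n]} P^{(j)} \cup \bigcup_{k} Q^{(k)}\Bigr),
\]
where the $Q^{(k)}$ are the at most $2d$ pyramids from the decomposition above. So the new $X$ is a valid search space of complexity at most $n + 2d$. After at most $q$ iterations, the complexity is therefore at most $2dq = \bigO(dq)$, which completes the induction.
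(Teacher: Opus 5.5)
Your proof is correct and follows the paper's own reasoning: the paper justifies this observation by noting, via \Cref{obs:pyramid_decomposition}, that each cut-away $\ell_\infty$-halfspace is a union of at most $2d-1$ pyramids, so after $q$ iterations $X$ is $[0,1]^d$ minus at most $(2d-1)q$ pyramids, and your induction simply formalizes this. (A minor aside: for $v_i = 0$ the union $P^+_i(c) \cup P^-_i(c)$ is a double pyramid $\{y : |y_i - c_i| = \|y - c\|\}$ rather than a slab, but this plays no role in the count.)
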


The definition of valid search spaces is useful to us, because it turns out that they allow for the following volume computation procedure.

\begin{restatable}[Volume Computation]{lemma}{lemmavolumecomputation}
\label{lemma:volume_computation}
    Consider an arbitrary valid search space $X$ of complexity $n$ and an arbitrary pyramid $P$ in $\R^d$. The volume $\vol(X \cap P)$ can be computed in $(nd)^{\bigO(d)}$ time. 
\end{restatable}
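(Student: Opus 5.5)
Every set in sight is a polyhedral region cut out by hyperplanes from a small, explicit family. The plan is to compute $\vol(X \cap P)$ by decomposing space along the arrangement of all relevant hyperplanes and summing volumes of convex cells.

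\textbf{Step 1: describe everything by hyperplanes.} A pyramid $P^s_i(x)$ is the convex polyhedron
\[
P^s_i(x) = \{ y \mid s(y_i - x_i) \geq s'(y_j - x_j) \text{ for all } j \neq i,\ s' \in \{-1,+1\} \},
\]
so it is cut out by $2(d-1)$ halfspaces whose boundary hyperplanes have the form $y_i - x_i = \pm (y_j - x_j)$. Collect all such hyperplanes for the $n$ pyramids $P^{(1)}, \dots, P^{(n)}$ of $X$ and for $P$, together with the $2d$ facet hyperplanes of $[0,1]^d$. This gives an arrangement $\mathcal{A}$ of $N = \bigO(nd)$ hyperplanes.

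\textbf{Step 2: work cell by cell.} Inside any open full-dimensional cell $C$ of $\mathcal{A}$, membership in each pyramid, in the cube, and in $P$ is constant, since no defining inequality changes sign in $C$. Lower-dimensional faces have measure zero, so boundary conventions (closed versus open pyramids) do not affect the volume. Hence
\[
\vol(X \cap P) = \sum_{C} \vol(C),
\]
where the sum runs over cells $C$ contained in $[0,1]^d \cap P$ and in no $P^{(k)}$.

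\textbf{Step 3: enumerate the cells and test membership.} An arrangement of $N$ hyperplanes in $\R^d$ has at most $\bigO(N^d) = (nd)^{\bigO(d)}$ cells. These can be enumerated, each given by its sign vector and an interior point, in time $N^{\bigO(d)}$ times a polynomial. One option is incremental construction. Another is recursively branching over sign patterns and testing feasibility by linear programming, with pruning so that only feasible patterns are expanded; the number of feasible partial patterns is also bounded by the cell count. For each cell, testing whether its interior point lies in the cube, in $P$, and outside every $P^{(k)}$ takes $\poly(n,d)$ time.

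\textbf{Step 4: compute each cell's volume.} Each bounded cell we keep is a polytope described by at most $N$ inequalities (it lies in the cube). I would compute its volume exactly by triangulation. First enumerate its vertices, which number at most $\binom{N}{d} = (nd)^{\bigO(d)}$, by solving $d \times d$ systems. Then triangulate, for instance by a recursive pulling/cone decomposition over faces: the volume is $\frac{1}{d}\sum_{\text{facets } F} h_F \vol_{d-1}(F)$ with heights measured from an interior point, recursing on dimension. This costs $N^{\bigO(d)}$ per cell. The data are rational when the query points are rational, so the arithmetic stays exact with polynomial bit growth.

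\textbf{Main obstacle.} The delicate part is the running-time accounting. One must make sure both the enumeration of cells and the per-cell volume computation (vertex enumeration plus recursive facet decomposition) stay within $(nd)^{\bigO(d)}$. Multiplying the cell count by the per-cell cost keeps the total at $(nd)^{\bigO(d)}$, so the bound survives, but the recursion over faces must be organized so that the number of faces visited is bounded by $N^{\bigO(d)}$ rather than $d!$-type blowups compounded badly. I would control this using the standard bound that every face of the arrangement is determined by at most $d$ hyperplanes. If enumeration becomes messy, a fallback is an alternative route: inclusion–exclusion is too expensive, but an arrangement-based sweep is a known $N^{\bigO(d)}$ method for volumes of unions of polytopes and would also suffice.
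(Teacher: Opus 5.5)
Your proposal is correct and follows the same route as the paper. You collect the $2(d-1)$ bounding hyperplanes of each of the $n+1$ pyramids (the paper treats the cube facets implicitly, you list them explicitly), note that membership in every pyramid is constant on each open cell of the resulting arrangement, and sum the volumes of the $(nd)^{\bigO(d)}$ qualifying cells, each computed exactly in $(nd)^{\bigO(d)}$ time.

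The only difference is that you sketch the cell enumeration and the per-cell volume computation yourself, whereas the paper cites Edelsbrunner's arrangement construction and exact polytope-volume algorithms such as Cohen--Hickey. Since runtime is counted in arithmetic operations, your side remark about polynomial bit growth is not needed, and rationality of the data alone would not justify it.
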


The proof of \Cref{lemma:volume_computation} heavily relies on results from computational geometry and will be explained in detail in \Cref{sec:volume_computation}. 

\paragraph{Finding Centerpoints and \Cref{thm:main_result_1}} 

With all these prerequisites in place, we can now state our main technical contribution, which is an algorithm for finding a $\frac{1}{4d}$-centerpoint of a valid search space. The proof is given in \Cref{sec:finding_centerpoint}, and in particular, \Cref{subsec:centerpoint_intuition} contains an informal overview of the ideas. 

\begin{restatable}[Finding Approximate Centerpoints]{theorem}{thmapproxcenterpoint}
\label{thm:approximate_centerpoint_theorem}
    Let $X \subseteq [0, 1]^d$ be a valid search space with $\vol(X) > 0$. There is an algorithm that finds a $\frac{1}{4d}$-centerpoint of $X$ in time $\bigO(d^7T\log^2 \frac{d}{\vol X})$, assuming it has access to an oracle that computes $\vol(X \cap P)$ for any pyramid $P$ in time $T$.
\end{restatable}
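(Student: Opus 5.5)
The plan is to reduce the $\frac{1}{4d}$-centerpoint condition to a \emph{balancing} condition on the $2d$ pyramids around a candidate point. Then I would reach such a point by a sequence of one-dimensional binary searches, each step using only the oracle for $\vol(X\cap P)$.

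\textbf{Reduction to balancing.} Fix $c$ and write $a_i^\pm(c) := \vol(P_i^\pm(c)\cap X)$ and $V:=\vol(X)$. Pyramids around $c$ overlap only in measure zero, so $\sum_i (a_i^+ + a_i^-) = V$. By \Cref{obs:pyramid_decomposition}, every $H_v(c)$ contains, for each $i$, at least one of $P_i^+(c),P_i^-(c)$, and coordinates with $v_i=0$ contribute both. Hence
\[
\min_{v}\vol(X\cap H_v(c)) \;\geq\; \sum_{i\in[d]} \min\{a_i^+(c),a_i^-(c)\}.
\]
It therefore suffices to find $c$ and a single index $i$ with $\min\{a_i^+(c),a_i^-(c)\}\geq V/(4d)$. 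By pigeonhole, at every $c$ some index has $a_i^+ + a_i^- \geq V/d$; such an index is \emph{heavy}, and it is what we try to balance.

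\textbf{Monotone one-dimensional searches.} I would first prove a monotonicity and continuity lemma: moving $c$ in direction $+e_i$ shrinks $P_i^+(c)$ and grows $P_i^-(c)$. Concretely, $P_i^+(c+te_i)\subseteq P_i^+(c)$ for $t\geq 0$, checked directly from $y_i-c_i-t=\|y-c-te_i\|$. I would also bound the volume change of any pyramid under a shift of length $\delta$ by $\bigO(d\,\delta)$, since the symmetric difference lies in a slab-like region of a unit cube. This continuity bound is what makes a binary search on $c_i$ with precision $\delta \approx V/\poly(d)$ sufficient. Such a search needs $\bigO(\log\frac{d}{V})$ oracle calls per pyramid evaluated, and each step evaluates $\bigO(d)$ pyramids. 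At the endpoints $c_i\in\{0,1\}$, one of $a_i^\pm$ vanishes, so the intermediate value theorem gives a crossing point where $a_i^+\approx a_i^-$.

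\textbf{The main loop and the obstacle.} The algorithm would repeatedly take a heavy index $i$ and binary-search on $c_i$ for the point where $a_i^+=a_i^-$. If the pair sum is still at least $V/(2d)$ there, we are done, since both are at least $V/(4d)$ up to the discretization error absorbed into constants. The difficulty, which I expect to be the main obstacle, is that balancing coordinate $i$ changes the sum $a_i^++a_i^-$ and redistributes mass among the other pyramids. Index $i$ may stop being heavy, and a naive loop could cycle. My first attempt is a potential argument. Maintain a nested box $[\ell,u]\subseteq[0,1]^d$ of candidate positions, and whenever balancing in direction $i$ fails, use the failure to shrink the box in coordinate $i$ by a constant fraction of its $X$-mass. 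The failure shows that on one side nearly all of the mass of the heavy pair sits in a single pyramid, so the corresponding $\ell_\infty$-halfspace direction tells us which side to discard, analogously to the halving argument in \Cref{thm:centerpoint_theorem}. Since each coordinate can be shrunk only $\bigO(\log\frac{d}{V})$ times before its interval drops below the precision $\delta$, there are $\bigO(d\log\frac{d}{V})$ outer rounds. Each round costs $\bigO(\log\frac{d}{V})$ binary-search steps times $\poly(d)$ oracle calls, giving the claimed $\bigO(d^7T\log^2\frac{d}{\vol X})$ once the polynomial factors are tracked.
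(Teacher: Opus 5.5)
Your reduction is sound: since each $H_v(c)$ contains $P^+_i(c)$ or $P^-_i(c)$ for every $i$, a single index with $\min\{a_i^+(c),a_i^-(c)\}\ge V/(4d)$ suffices. The one-dimensional step is also correct (monotonicity of $a_i^\pm$ in $c_i$, vanishing at $c_i\in\{0,1\}$, continuity, intermediate value theorem). The gap is the main loop, which is where all the difficulty lies, and the nested-box fix does not work as described. You never state an invariant explaining why $[\ell,u]$ still contains a good point. The failure you observe also cannot tell you which side of $\{c_i=t\}$ to discard. Along the line through $c$ in direction $e_i$, $a_i^+$ is non-increasing and $a_i^-$ non-decreasing, so $\min\{a_i^+,a_i^-\}$ is maximized exactly at the crossing $t$. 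Moreover, ``nearly all the mass of the pair sits in one pyramid'' holds on \emph{both} sides: in $P^+_i$ for $c_i<t$ and in $P^-_i$ for $c_i>t$. In addition, this is information about one line only. Moving any other coordinate $c_j$ changes $a_i^\pm$ non-monotonically (by \Cref{lemma:pulling}, pyramids with $u_i=0$ can both lose and gain), so nothing propagates to the rest of the hyperplane. Finally, \Cref{thm:centerpoint_theorem} has no halving argument to imitate; its proof is a Brouwer fixed-point argument. Bisection-style localization of a balanced point fails in general because the sign of $a_i^+-a_i^-$ need not be constant on a face $\{c_i=t\}$, so a Poincar\'e--Miranda bisection breaks down. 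Nothing prevents your loop from cycling, and the $\bigO(d\log\frac{d}{V})$ round bound and the running time are unsupported.

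The missing ideas are the ones the paper uses to make mass transfers controllable. It starts at $c=\One$ and only increases coordinates (directions $u\in\{0,1\}^d$). Hence $c\ge\One$ throughout, every positive pyramid has zero $X$-mass, and by \Cref{lemma:pulling} mass flows only from negative pyramids with $u_i=0$ to those with $u_i=1$. The paper sorts $\pi_i=\vol(P^-_i(c)\cap X)$ and raises the coordinates of the light block below the largest gap $\gamma$ until that block gains about $\gamma/2$. This lowers the potential $\Pi=\sum_i(\pi_i-V/d)^2$ by about $\gamma^2/2\ge\Pi/(2d^3)$, so $\bigO(d^3\log\frac1\epsilon)$ binary-search pulls give $\pi_i\ge V/d-\epsilon$ for all $i$ (\Cref{lemma:balancing_pyramids}). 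Instead of balancing any single pair, it then pulls once along $-\One$. The positive pyramids collectively receive about $m/2$, while each negative pyramid keeps at least $m/2$ (\Cref{corollary:regions_to_pyramids}). This aggregate condition already suffices, because every $H_v(c)$ contains either some $P^-_i(c)$ or all of the $P^+_i(c)$. A final clipping to $[0,1]^d$ (\Cref{lemma:centerpoint_projection_on_cube}) preserves these bounds on the unclipped coordinates. Your single-pair criterion is stronger than needed. This relaxation, together with the ``positive pyramids stay empty'' regime, is what makes a convergent potential argument possible.
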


For our application, the runtime of the algorithm in \Cref{thm:approximate_centerpoint_theorem} will be dominated by $T$: Indeed, we use \Cref{lemma:volume_computation} for the volume computation oracle, which has an exponential dependency on $d$. Still, this now allows us to prove \Cref{thm:main_result_1}.

\thmmainresultone*

\begin{proof}
    We start by using an observation from~\cite{chenComputingFixedPoint2025}, which says there is a simple and efficient reduction that allows us to assume $\lambda \leq 1 - \epsilon$ (see also \Cref{ssec:reductions} for more information on this reduction). With this, we now apply \Cref{alg:algo_1}, which repeatedly calls the algorithm in \Cref{thm:approximate_centerpoint_theorem} to find the centerpoints (which are our query points). Since \Cref{thm:approximate_centerpoint_theorem} guarantees $\frac{1}{4d}$-centerpoints, \Cref{alg:algo_1} terminates after at most $\bigO(d^2\log 1 / r) \leq \bigO(d^2 \log \frac{1}{\epsilon})$ queries to $f$ (where we have used \Cref{lemma:ball_contained} with $r = \frac{\epsilon - \epsilon \lambda}{2 + 2\lambda}$ and $\lambda \leq 1 - \epsilon$). By \Cref{obs:search_space_complexity}, this means that throughout \Cref{alg:algo_1}, the complexity of our valid search space is bounded by $\bigO(d^3 \log \frac{1}{\epsilon})$.
    
    Next, note that we only call \Cref{thm:approximate_centerpoint_theorem} for $X$ with $\vol(X) \geq r^d$ (due to \Cref{lemma:ball_contained}). Thus, the runtime of \Cref{thm:approximate_centerpoint_theorem} is $\poly(d, \log \frac{1}{\epsilon})$ except for the volume computation, which is handled by the subroutine in \Cref{lemma:volume_computation}. Since the complexity of our valid search space is bounded throughout by $\bigO(d^3 \log \frac{1}{\epsilon})$, we get an upper bound of $(\log \frac{1}{\epsilon})^{\bigO(d\log d)}$ for each application of \Cref{lemma:volume_computation}. This of course dominates the runtime of \Cref{thm:approximate_centerpoint_theorem} and the overall algorithm.
\end{proof}

\paragraph{Bit Complexity and Precision}

In all of our algorithms, we count runtime in terms of the number of arithmetic operations. Naturally, one might also wonder about the bit complexity of the involved numbers. It turns out that all of our points $x \in \R^d$ that we compute throughout the algorithm have polynomially bounded bit complexity (i.e.\ at most $\poly(d, \log \frac{1}{\epsilon})$ bits). This follows from the use of the precision lemma below, which ensures that we only need to move on a grid of limited granularity when looking for an approximate centerpoint $c$: Indeed, slightly moving $c$ can only slightly change the volume $\vol(X \cap H_v(c))$ for any $v \in S^{d - 1}$. 

\begin{lemma}[Precision Lemma]
\label{lemma:precision}
    Let $X \subseteq [0, 1]^d$ be arbitrary with volume $\vol(X) \geq 0$ and consider an arbitrary pyramid $P^s_i(y)$ for some $y \in \R^d$. Then we have 
    \[
        \left| \vol(X \cap P^s_i(y)) - \vol(X \cap P^s_i(y + u)) \right| \leq \sqrt{8d^3} \|u\|
    \]
    for all $u \in \R^d$.
\end{lemma}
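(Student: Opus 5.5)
The plan is to bound the change of the pyramid under translation by the volume of a symmetric difference. Since $X \subseteq [0,1]^d$, we have
\[
\left| \vol(X \cap P^s_i(y)) - \vol(X \cap P^s_i(y+u)) \right| \leq \vol\big( (P^s_i(y) \,\triangle\, P^s_i(y+u)) \cap [0,1]^d \big).
\]
Let $P := P^s_i(0)$, which is a closed convex cone (it is defined by the linear inequalities $s y_i \geq |y_j|$ for all $j$). The pyramids in question are translates, $P^s_i(y) = y + P$ and $P^s_i(y+u) = y + u + P$. The task is therefore to bound the volume of the symmetric difference of a cone and its translate by $u$, inside the unit cube.

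First I reduce to a one-parameter family of translations. I interpolate along the segment $y + tu$ for $t \in [0,1]$. Using the triangle inequality on a fine subdivision, or directly the Lipschitz property of $t \mapsto \vol(X \cap (y+tu+P))$, it suffices to bound the rate of change. For a convex polyhedral cone translated by $tu$, the set that enters or leaves is swept by the boundary facets. The rate of volume change of $(y+tu+P)\cap X$ is therefore at most
\[
\sum_{F} |\langle n_F, u\rangle| \cdot \mathcal{H}^{d-1}\big(F \cap [0,1]^d\big),
\]
where $F$ ranges over the facets of $y+tu+P$ and $n_F$ is the Euclidean unit normal of $F$. Equivalently, I can use the coarea-style estimate $\vol((A \triangle (A+u))\cap K) \leq \|u\|_2 \cdot \mathcal{H}^{d-1}(\partial A \cap K')$, where $K'$ is a slightly enlarged box. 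To keep it clean I will instead apply this directly for convex $A$ intersected with the cube.

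The next step is to bound the surface area. The cone $P$ has $2(d-1)$ facets, given by $s y_i = \pm y_j$ for $j \neq i$. Each facet lies in a hyperplane, and the area of a hyperplane slice of $[0,1]^d$ is at most $\sqrt{d}$. This follows because its projection onto a coordinate hyperplane has area at most $1$, and the projection factor is at least $1/\sqrt{d}$ for the right choice of coordinate; here the normals are $(e_i \mp e_j)/\sqrt2$, so the factor is actually $1/\sqrt2$, giving a bound of $\sqrt{2}$. Finally, $|\langle n_F, u\rangle| \leq \|u\|_2 \leq \sqrt{d}\,\|u\|$. Combining these gives a total of at most $2(d-1)\cdot\sqrt{2}\cdot\sqrt{d}\,\|u\| \leq \sqrt{8d^3}\,\|u\|$. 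This matches the stated constant, which suggests this is the intended accounting: $2d$ facets, each contributing $\sqrt 2 \cdot \sqrt d\,\|u\|$.

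The main obstacle is justifying the sweep estimate rigorously. I would do this by writing $(y+P)\triangle(y+u+P) \subseteq \bigcup_F \{ z + t u : z \in F,\ t \in [0,1]\}$, which holds because a segment from a point inside the cone to a point outside must cross the boundary. Then I bound each swept prism $\{z+tu : z \in F \cap [0,1]^d\}$ by its volume $|\langle n_F,u\rangle|\cdot \mathcal{H}^{d-1}(F\cap[0,1]^d)$. Some care is needed because points of the symmetric difference in the cube may come from facet points slightly outside the cube. I would handle this by first assuming $\|u\|$ is small and restricting the facet to a slightly enlarged box, then passing to arbitrary $u$ by splitting $u$ into many small steps and summing via the triangle inequality.
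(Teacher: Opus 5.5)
Your argument is correct and has the same skeleton as the paper's proof. Both bound the change by the volume of the symmetric difference inside the cube and charge it to the $2(d-1)$ bounding hyperplanes of the pyramid. Each hyperplane then contributes (maximal area of a hyperplane section of $[0,1]^d$) times a displacement of at most $\|u\|_2 \le \sqrt{d}\,\|u\|$, which gives $2(d-1)\sqrt{2d}\,\|u\| \le \sqrt{8d^3}\,\|u\|$.

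The differences lie in the two ingredients.

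\emph{Section area.} The paper invokes Ball's cube-slicing theorem. You instead note that the relevant normals are $(e_i \mp e_j)/\sqrt{2}$, so projecting onto $e_i^\perp$ gives the same bound $\sqrt{2}$ elementarily. This is a genuine simplification, since Ball's theorem is only needed for arbitrary normals.

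\emph{Covering.} The paper uses slabs rather than your prisms over facets. If $x$ lies in exactly one of $\bigcap_k H_k$ and $\bigcap_k (H_k+u)$, then $x \in H_k \triangle (H_k+u)$ for some $k$. Such a slab can be integrated slice by slice along its normal, and every slice is a full hyperplane section of the cube. So the bound $\sqrt{2}\,|\langle n_k,u\rangle|$ holds for every $u$ at once, with no boundary issue. Your base-times-height computation of the prism volume is what forces the enlarged box and the subdivision limit. That fix does work, since the extra factor $(1+2\|u\|/N)^{d-1}$ tends to $1$. Slicing your prism by hyperplanes parallel to $F$ would make it unnecessary.

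Incidentally, for these normals $|\langle n_F,u\rangle| = |u_i \mp u_j|/\sqrt{2} \le \sqrt{2}\,\|u\|$. So for $d \ge 2$ either argument actually gives the sharper bound $4(d-1)\|u\|$.
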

\begin{proof}
    We use a well-known result due to Ball~\cite{ballCubeSlicingRn1986}, who proved that the largest $(d - 1)$-dimensional volume that one can get by slicing the unit cube $[0, 1]^d \subseteq \R^d$ with a hyperplane is $\sqrt{2}$. Thus, slicing the cube with a slab of parallel hyperplanes of thickness at most $\sqrt{\sum_{i = 1}^d u_i^2} \leq \sqrt{d} \|u\|$ yields a $d$-dimensional volume of at most $\sqrt{2d} \|u\|$. Now it remains to observe that any pyramid is bounded by $2 (d - 1)$ hyperplanes (this fact will be explained in more detail in \Cref{sec:volume_computation}), which means that the change of volume from $\vol(X \cap P^s_i(y))$ to $\vol(X \cap P^s_i(y + u))$ is bounded by $\sqrt{8d^3} \|u\|$.
\end{proof}

The one place where we have to be a bit more careful about bit complexity of numbers is the volume computations (\Cref{lemma:volume_computation}). Indeed, it turns out that a number of bits exponential in $d$ might be necessary to represent the computed volume (more details in \Cref{sec:volume_computation}). However, since the stated runtime of \Cref{lemma:volume_computation} is also exponential in $d$, this would only be a problem if we then propagate this blow-up in bit complexity through \Cref{alg:algo_1}. But since we only ever use computed volumes for comparison (see \Cref{sec:finding_centerpoint}), this is not the case. 

Finally, observe that \Cref{alg:algo_1} uses the result $f(c)$ of querying $f$ at $c$ to check whether $c$ is an $\epsilon$-approximate fixed point, and to compute $v = c - f(c)$ in order to determine the halfspace $H_v(c)$ that we want to exclude from $X$. Thus, when also accounting for the cost of arithmetic operations, we (naturally) need to assume\footnote{This is true for the functions that are obtained by the reductions from Condon's and Shapley's stochastic games.} that for all $x \in \Q^d \cap [0, 1]^d$, the bit complexity of $f(x)$ is polynomially bounded in the bit complexity of $x$.

\subsection{Proof of \Cref{thm:main_result_2}}
\label{ssec:proof_thm_2}

As mentioned in \Cref{sec:introduction}, we obtain \Cref{thm:main_result_2} by combining \Cref{thm:main_result_1} with a recent decomposition theorem in~\cite{chenQuadraticSpeedupComputing2026}. We will briefly outline the main ideas for this here, and provide the remaining details in \Cref{sec:decomposition}. We follow the exposition in~\cite{chenQuadraticSpeedupComputing2026} and define the following terminology:

\begin{itemize}
    \item $\contr(d, \epsilon, \lambda)$ denotes the problem of finding an $\epsilon$-approximate fixed point of a $\lambda$-contracting function $f : [0, 1]^d \rightarrow [0, 1]^d$. 
    \item $\nonexp^\dagger(d, \epsilon)$ denotes the problem of finding an $\epsilon$-approximate fixed point of a non-expansive function $f : [0, 1]^d \rightarrow [- \epsilon, 1 + \epsilon|^d$. Note that this is a well-defined problem (i.e.\ such an $\epsilon$-approximate fixed point must exist), as argued in~\cite{chenQuadraticSpeedupComputing2026}.
\end{itemize}

Arguably the main result in~\cite{chenQuadraticSpeedupComputing2026} is \Cref{thm:decomposition_theorem} below. 

\begin{restatable}[Decomposition Theorem~\cite{chenQuadraticSpeedupComputing2026}]{theorem}{thmdecompositiontheorem}
    \label{thm:decomposition_theorem}
    Assume that $\nonexp^\dagger(d_1, \epsilon)$ can be solved in $q_1$ queries and $t_1$ time, and assume that $\nonexp^\dagger(d_2, \epsilon)$ can be solved in $q_2$ queries and $t_2$ time, then $\nonexp^\dagger(d_1 + d_2, \epsilon)$ can be solved in $q_1 \cdot q_2$ queries and $\bigO((q_2 + t_1)q_2 + t_2)$ time.
\end{restatable}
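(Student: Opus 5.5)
Write $f=(f_1,f_2)$ with $x=(y,z)$, $y\in[0,1]^{d_1}$, $z\in[0,1]^{d_2}$. The plan is the recursive ``outer/inner'' scheme of~\cite{chenQuadraticSpeedupComputing2026}: the outer algorithm $\mathcal{A}_2$ for $\nonexp^\dagger(d_2,\epsilon)$ works on $z$, and each of its queries is answered by one run of the inner algorithm $\mathcal{A}_1$ for $\nonexp^\dagger(d_1,\epsilon)$ on the slice $y \mapsto f_1(y,z)$.

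\textbf{Step 1 (inner problem).} Fix $z$ and set $g_z(y) := f_1(y,z)$. Then $g_z : [0,1]^{d_1}\to[-\epsilon,1+\epsilon]^{d_1}$ is non-expansive, since $f$ is. Running $\mathcal{A}_1$ on $g_z$ returns some $y(z)$ with $\|y(z)-g_z(y(z))\|\le\epsilon$, using $q_1$ queries to $f$ and time $t_1$.

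\textbf{Step 2 (outer function).} Define $h(z) := f_2(y(z),z)$, which maps $[0,1]^{d_2}$ into $[-\epsilon,1+\epsilon]^{d_2}$. We run $\mathcal{A}_2$ on $h$, answering every query $z$ by Step 1. If $\mathcal{A}_2$ outputs $z^\ast$ with $\|z^\ast-h(z^\ast)\|\le\epsilon$, then $(y(z^\ast),z^\ast)$ is an $\epsilon$-approximate fixed point of $f$: the $y$-coordinates are good by Step 1 and the $z$-coordinates by the output of $\mathcal{A}_2$. The total query count is $q_1\cdot q_2$. For the time bound, note that the inner runs and the outer bookkeeping interleave in the scheme of~\cite{chenQuadraticSpeedupComputing2026}, giving $\bigO((q_2+t_1)q_2+t_2)$. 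This is the form to match; naively one gets $q_2t_1+t_2$ plus the $q_2^2$ overhead for maintaining and communicating the answers.

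\textbf{Step 3 (main obstacle).} The difficulty is that $h$ need not be non-expansive: $y(z)$ is an arbitrary approximate fixed point, not a Lipschitz selection. So $\mathcal{A}_2$ cannot be used as a black box on a non-expansive function. I would handle this as~\cite{chenQuadraticSpeedupComputing2026} does. Show that for any finite set of queried points $z^{(1)},\dots,z^{(k)}$, the answers $h(z^{(j)})$ are consistent with some non-expansive function $\tilde h$ into $[-\epsilon,1+\epsilon]^{d_2}$ that agrees with $h$ on these points up to the allowed slack. The natural candidate for $\tilde h$ is $z\mapsto f_2(\bar y(z),z)$, where $\bar y(z)$ is the least (or greatest) exact fixed point of $g_z$. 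The key comparison is that an $\epsilon$-approximate fixed point of a non-expansive map in $\ell_\infty$ has the relevant coordinates close to those of an exact fixed point, combined with the non-expansiveness of $f_2$. Because $\mathcal{A}_2$ is deterministic and sees only its query answers, its execution on $h$ coincides with its execution on $\tilde h$. Its output is therefore correct for $\tilde h$, which transfers back to $h$ and hence to $f$.

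The precise slack bookkeeping is the delicate part. This may be exactly why the theorem is stated for the $\dagger$ variant, whose range $[-\epsilon,1+\epsilon]$ leaves room to absorb these errors, and I would follow the precise argument of~\cite{chenQuadraticSpeedupComputing2026} there.
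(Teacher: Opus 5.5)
\textbf{There is a genuine gap in Step 3.} Your Steps 1--2 are the outer/inner scheme the paper attributes to~\cite{chenQuadraticSpeedupComputing2026}. The paper does not reprove the theorem: it only remarks that the inner solver runs once per outer query and that the outer algorithm's own time $t_2$ enters additively. You also correctly identify in Step 3 that $h$ need not be non-expansive. But your repair fails.

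First, the ``key comparison'' is false: an $\epsilon$-approximate fixed point of a non-expansive map can be far from every exact fixed point. Take $d_1=d_2=1$ and the non-expansive map $f(y,z)=(\min(y+\epsilon/2,1),\,y)$. Every $y$ is an $\frac{\epsilon}{2}$-approximate fixed point of $g_z$, so a legitimate inner solver may return $y(z)=0$, giving $h(z)=0$. Meanwhile the unique exact fixed point is $\bar y(z)=1$, so your $\tilde h\equiv 1$. The discrepancy is $1$, not an $\bigO(\epsilon)$ slack. (Least or greatest exact fixed points need not even exist without monotonicity: the fixed points of $(y_1,y_2)\mapsto(1-y_2,1-y_1)$ form the anti-diagonal.)

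Second, the determinism argument requires $\tilde h$ to agree with $h$ \emph{exactly} on the queried points. Agreement up to a slack $\sigma$ only yields an $(\epsilon+\sigma)$-approximate fixed point, losing accuracy at every recursion level, whereas the theorem keeps the same $\epsilon$. The range $[-\epsilon,1+\epsilon]$ of $\nonexp^\dagger$ is not there to absorb such errors.

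The missing idea is to run each inner call on a \emph{restricted box domain}. This is why the cited proof needs $\nonexp^\dagger$ on general boxes $\llbracket a,b\rrbracket$ with range $\llbracket a-\epsilon\One,b+\epsilon\One\rrbracket$, as the paper notes when discussing the choice of domain. Suppose the outer algorithm queries $z$ after earlier queries $z^{(j)}$ with inner answers $y^{(j)}$. Then solve $g_z$ only on $B=[0,1]^{d_1}\cap\bigcap_j\{y:\|y-y^{(j)}\|\le\|z-z^{(j)}\|\}$.
\begin{itemize}
\item $B$ is a non-empty box $\llbracket a,b\rrbracket$. Inductively $\|y^{(j)}-y^{(l)}\|\le\|z^{(j)}-z^{(l)}\|\le\|z-z^{(j)}\|+\|z-z^{(l)}\|$ and $y^{(j)}\in[0,1]^{d_1}$. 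So in each coordinate the defining intervals pairwise intersect, hence share a common point.
\item $g_z$ restricted to $B$ is a valid $\nonexp^\dagger$ instance. For $y\in B$ we have $\|g_z(y)-y^{(j)}\|\le\|f_1(y,z)-f_1(y^{(j)},z^{(j)})\|+\epsilon\le\|z-z^{(j)}\|+\epsilon$, and also $g_z(y)\in[-\epsilon,1+\epsilon]^{d_1}$. Hence $g_z$ maps $B$ into $\llbracket a-\epsilon\One,b+\epsilon\One\rrbracket$.
\item The returned $y\in B$ keeps the invariant. It also gives $\|h(z)-h(z^{(j)})\|\le\max(\|y-y^{(j)}\|,\|z-z^{(j)}\|)=\|z-z^{(j)}\|$ exactly.
\end{itemize}
So the whole transcript is consistent with a genuine non-expansive $\tilde h:[0,1]^{d_2}\to[-\epsilon,1+\epsilon]^{d_2}$. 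Take $\tilde h_i(z)=\min_j\bigl(h_i(z^{(j)})+\|z-z^{(j)}\|\bigr)$ and clip to $[-\epsilon,1+\epsilon]$. Your determinism argument then goes through with no loss in $\epsilon$, provided the outer output is among its queries. Computing $B$ from up to $q_2$ earlier answers is also the concrete source of the $q_2\cdot q_2$ term in $\bigO((q_2+t_1)q_2+t_2)$.
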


Note that the original statement in~\cite{chenQuadraticSpeedupComputing2026} states an upper bound of $\bigO((q_2 + t_1)t_2)$ for the time needed to solve $\nonexp^\dagger(d_1 + d_2, \epsilon)$, which is more concise in the case $q_2 \approx t_2$. However, it is clear from their analysis\footnote{The analysis of Algorithm 1 in~\cite{chenQuadraticSpeedupComputing2026} can be found at the end of page 9.} that the bound that we stated holds as well (and in particular, the multiplicative term $t_1 t_2$ can be avoided): Indeed, the inner algorithm (solving $\nonexp^\dagger(d_1, \epsilon)$) is called once for every query of the outer algorithm (of which there are $q_2$ many). The additional time $t_2$ spent by the outer algorithm has no interaction with the inner algorithm, and thus it suffices to make it an additive term in the upper bound $\bigO((q_2 + t_1)q_2 + t_2)$.

Avoiding the multiplicative term $t_1 t_2$ is what enables us to use \Cref{thm:decomposition_theorem} in order to trade off query complexity against time complexity. By further combining this with a reduction from $\nonexp^\dagger(d, \epsilon)$ to $\contr(d, \epsilon / 4, 1 - \epsilon / 4)$, we prove the following lemma in \Cref{sec:decomposition}.

\begin{restatable}{lemma}{squarerootlemma}
\label{lemma:square-root-decomposition}
    Assume that $\contr(d, \epsilon, 1 - \epsilon)$ can be solved in $q$ queries and $t$ time. Then $\nonexp^{\dagger}(d^2, 4\epsilon)$ can be solved in $(q + 1)^d$ queries and at most $q^{\bigO(d)} t$ time.
\end{restatable}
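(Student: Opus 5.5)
We prove the statement in two stages: first a base case in dimension $d$, then $d-1$ applications of \Cref{thm:decomposition_theorem} to climb to dimension $d^2$.

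\textbf{Base case.} We solve $\nonexp^\dagger(d, 4\epsilon)$ in $q+1$ queries and $\bigO(t + q)$ time. Given a non-expansive $g : [0,1]^d \to [-4\epsilon, 1+4\epsilon]^d$, we follow the reduction mentioned before the statement (as in~\cite{chenQuadraticSpeedupComputing2026}). Clip $g$ coordinatewise to $[0,1]^d$; clipping is $1$-Lipschitz in each coordinate, so non-expansiveness is preserved. Then shrink towards the center: set $f(x) = (1-\epsilon)\,\pi(g(x)) + \epsilon \cdot \tfrac12 \One$, where $\pi$ is the clipping map.

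This $f$ maps $[0,1]^d$ to $[0,1]^d$ and is $(1-\epsilon)$-contracting. Each query to $f$ costs one query to $g$. Run the assumed algorithm for $\contr(d,\epsilon,1-\epsilon)$ to get $x$ with $\|x - f(x)\| \le \epsilon$. Then:
\begin{itemize}
\item $\|f(x) - \pi(g(x))\| \le \epsilon/2$, because the shrinking step moves points by at most $\epsilon \cdot \tfrac12$;
\item $\|\pi(g(x)) - g(x)\| \le 4\epsilon$, by the range of $g$.
\end{itemize}
A naive triangle inequality therefore gives an error of $\epsilon + \epsilon/2 + 4\epsilon$, which is too large. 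The fix is to note that clipping only moves a coordinate that lies outside $[0,1]$. For such a coordinate, $x_i \in [0,1]$ lies on the same side as the clipped value, so $|x_i - g_i(x)| \le |x_i - \pi_i(g(x))| + \text{clip} \le 3\epsilon/2 + 4\epsilon$. This is still too big.

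So the constants must instead be tuned. Either solve $\contr$ at accuracy $\epsilon$ and bound the error by $4\epsilon$ by taking the clipping box to be $[-\epsilon, 1+\epsilon]$ suitably, or simply match the exact reduction of \cite{chenQuadraticSpeedupComputing2026} from $\nonexp^\dagger(d, 4\epsilon)$ to $\contr(d,\epsilon,1-\epsilon)$. I will cite that reduction, with one extra query to $g$ at the returned point to verify and output. This gives $q+1$ queries and $t + \bigO(q)$ time.

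\textbf{Induction.} Let $Q_k$ and $T_k$ be the queries and time needed for $\nonexp^\dagger(kd, 4\epsilon)$. Apply \Cref{thm:decomposition_theorem} with $d_1 = (k-1)d$ and $d_2 = d$, so the base algorithm is the outer one. This gives
\[
Q_k \le Q_{k-1}(q+1), \qquad T_k = \bigO\bigl((q+1 + T_{k-1})(q+1) + t\bigr).
\]
Solving the recurrence gives $Q_d \le (q+1)^d$ and $T_d \le C^d (q+1)^{d} (t+q)$ for some absolute constant $C$. This is $q^{\bigO(d)} t$, assuming $q \ge 2$ and $t \ge q$; both hold since the algorithm must at least read its queries.

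\textbf{Key point.} It is essential that the \emph{known} base algorithm is the outer one. Then $t$ enters only additively at each level, and the only multiplicative blow-up is the factor $q+1$ per level. This is exactly why \Cref{thm:decomposition_theorem} is stated with time $\bigO((q_2+t_1)q_2+t_2)$ rather than $\bigO((q_2+t_1)t_2)$. The main obstacle is getting the constants in the base reduction exactly right so that the accuracy is $4\epsilon$ uniformly at every level. The decomposition theorem keeps $\epsilon$ fixed, so only the base case matters.
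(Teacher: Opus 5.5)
Your recursion is correct and is essentially the paper's: split the $d^2$ coordinates into $d$ blocks and apply \Cref{thm:decomposition_theorem} $d-1$ times. This gives $(q+1)^d$ queries and $q^{\bigO(d)}t$ time, because the time bound has no $t_1t_2$ term. (Calling the nesting order \emph{essential} overstates it. With the base algorithm as the inner one, each level costs $\bigO\bigl((Q_{k-1}+t)Q_{k-1}+T_{k-1}\bigr)$, which also sums to $q^{\bigO(d)}t$.)

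The genuine gap is the base case, which is exactly the part the paper has to prove itself. You never show that $\nonexp^\dagger(d,4\epsilon)$ can be solved with $q+1$ queries from a $\contr(d,\epsilon,1-\epsilon)$ solver. Your own estimates correctly show that the returned point $x$ need not be a $4\epsilon$-approximate fixed point of $g$. The fallback citation does not cover this: \cite{chenQuadraticSpeedupComputing2026} only gives the $\nonexp^\dagger\to\nonexp$ step for $d=2$, and the general-$d$ version is \Cref{lemma:reduction_to_nonexpansion}, which is new in this paper.

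Re-tuning the construction of $f$ cannot help either, as long as you output $x$ unchanged. Take $d=1$ and $g\equiv-4\epsilon$; the only admissible answer is exactly $0$. With your $f\equiv\epsilon/2$, even the exact fixed point $x=\epsilon/2$ has $|x-g(x)|=9\epsilon/2$. More generally, the set of $\epsilon$-approximate fixed points of a $(1-\epsilon)$-contracting $f:[0,1]\to[0,1]$ is never $\{0\}$: if it contains $0$, it contains $[0,\epsilon/(2-\epsilon)]$. So the solver may legitimately return an invalid point.

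The missing idea is to post-process the solver's output, as in \Cref{lemma:reduction_to_nonexpansion}. Spend the extra query on $g(x)$ and define $y$ by $y_i=0$ if $g_i(x)<0$, $y_i=1$ if $g_i(x)>1$, and $y_i=x_i$ otherwise. With $\eta:=\|x-\pi(g(x))\|$ you have $\|y-x\|\le\eta$.
\begin{itemize}
\item For coordinates with $g_i(x)\in[0,1]$, non-expansiveness gives $|g_i(y)-y_i|\le|g_i(y)-g_i(x)|+|g_i(x)-x_i|\le2\eta$.
\item For $g_i(x)<0$: if $g_i(y)<0$, then $|g_i(y)-y_i|\le4\epsilon$ by the range of $g$. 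If $g_i(y)\ge0$, then $0\le g_i(y)\le g_i(y)-g_i(x)\le\eta$.
\item The case $g_i(x)>1$ is symmetric.
\end{itemize}
Hence $\|g(y)-y\|\le\max(2\eta,4\epsilon)$, and you only need $\eta\le2\epsilon$. Your centered shrink already gives $\eta\le\epsilon+\epsilon/2$, and the paper's $(1-\epsilon)\,\pi\circ g$ gives $\eta\le2\epsilon$. So with this snapping step the base case goes through with $q+1$ queries, and the rest of your argument stands.
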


In combination with \Cref{thm:main_result_1} and with the straight-forward reduction from $\contr(d, \epsilon, \gamma)$ to $\nonexp^\dagger(d, \epsilon)$, this then implies \Cref{thm:main_result_2}.

\thmmainresulttwo*

\section{Finding Approximate Centerpoints}
\label{sec:finding_centerpoint}

The goal of this section is to prove \Cref{thm:approximate_centerpoint_theorem}, which we restate below for convenience. Since this is our main technical contribution, we proceed by first giving an overview over the involved ideas (\Cref{subsec:centerpoint_intuition}) before discussing some preliminaries (\Cref{ssec:pulling}) and then presenting the algorithms and proofs in detail (\Cref{subsec:centerpoint_details}).

\thmapproxcenterpoint*

\subsection{Overview of Ideas}
\label{subsec:centerpoint_intuition}

\paragraph*{Balancing Pyramids with Local Search}
First, observe that a point $c \in \R^d$ which satisfies $\vol(X \cap P^+_i(c)) \approx \vol(X \cap P^-_i(c))$ for all $i \in [d]$ is a good approximate $\ell_\infty$-centerpoint of $X$ by \Cref{obs:pyramid_decomposition} (pyramid decomposition of $\ell_\infty$-halfspaces). This motivates the following approach: Consider the pairs of opposing pyramids around an arbitrary point $c$. Looking at each pair, it is either ``balanced'' (both the negative and positive pyramid contain roughly the same amount of volume of $X$), or one of the two pyramids is heavier than the other. Thus, one might be tempted to move $c$ in the direction of the heavy pyramids in order to hopefully decrease their volume while at the same time increasing the volume of the light pyramids. 

The main issue with this approach is what happens to the already balanced pairs of opposing pyramids. Indeed, if we have $\vol(X \cap P^+_i(c)) \approx \vol(X \cap P^-_i(c))$ for some $i \in [d]$, the above approach suggests that we should not move in direction $i$ at all. But then, while performing moves for other pairs of pyramids, $\vol(X \cap P^+_i(c))$ and $\vol(X \cap P^-_i(c))$ could become unbalanced again. In particular, they can independently increase and decrease in volume (see e.g.\ \Cref{fig:pulling}), and it seems difficult to control this.

However, we can still use this approach in a situation where pyramid pairs of every dimension $i  \in [d]$ are unbalanced. More concretely, consider $c \in \R^d$ such that $\vol(P^-_i(c) \cap X) \geq m$ for all $i \in [d]$ and some $m \geq 0$. Then we can move $c$ in the direction $-\One = -(1, \dots, 1) \in \R^d$ until roughly $\frac{m}{2}$ of the volume of $X$ has moved from negative pyramids around $c$ to positive pyramids around $c$ (see \Cref{corollary:regions_to_pyramids}). It turns out that this is then sufficient to roughly ensure $\vol(X \cap H_v(c)) \geq \frac{m}{2}$ for every $\ell_\infty$-halfspace $H_v(c)$ around $c$ (see \Cref{obs:m/2 centerpoint}). Overall, this argument reduces our problem to finding an initial point $c$ such that $\vol(X)$ is evenly distributed among the negative pyramids around $c$ (with each e.g.\ containing roughly $m \approx \vol(X) / d$ volume of $X$). 

\paragraph*{Balancing Negative Pyramids}
Consider now this task of balancing the negative pyramids around our candidate $c \in \R^d$. To explain how we achieve this, assume first that a unicorn magically guarantees that all positive pyramids around $c$ will have zero volume (w.r.t.\ $X$) throughout this part of the algorithm. This guarantee ensures us that, while moving $c$ around, any volume that a negative pyramid $P^-_i(c)$ around $c$ loses (w.r.t\ to $X$) will be lost to different negative pyramids around $c$. 

Interestingly, with the unicorn's guarantee, we can now recover a local search approach again: In each iteration, we consider a set of negative pyramids with substantially more volume than the rest and move $c$ in the direction of this set of negative pyramids (see also sketch in \Cref{fig:balancing_neg}). It turns out that we can perform such a step in a way that guarantees that the negative pyramids overall become more balanced (using \Cref{lemma:pulling} and \Cref{lemma:pulling_operation}).
In \Cref{lemma:balancing_pyramids}, we then analyze how many such steps we need to perform until we reach a reasonably balanced configuration.

Unfortunately, unicorns (probably) do not exist in real life. However, it turns out that we can ensure the guarantee of the unicorn ourselves as follows: We will initialize $c = (1, 1, \dots, 1) \in \R^d$, which ensures $\vol(P^+_i(c) \cap [0,1]^d) = 0$ for all $i \in [d]$ (by the definition of pyramids). While moving $c$ in our iterative balancing, we make sure that no coordinate of $c$ ever decreases. In particular, we will maintain $1 \leq c_i$ for all $i \in [d]$ and thus $\vol(P^+_i(c) \cap [0,1]^d) = 0$ for all $i \in [d]$ throughout this part of the algorithm.

\paragraph*{Projecting to the Cube}
We outlined our strategy of first balancing the negative pyramids around $c$, and then moving along the line in direction $-\One$ through $c$ to find an approximate centerpoint of $X$. However, the resulting approximate centerpoint might still lie outside of $[0, 1]^d$, which is a problem if we want to query the function $f$ at $c$ in \Cref{alg:algo_1}. To solve this, we prove in \Cref{lemma:centerpoint_projection_on_cube} that by projecting $c$ to $[0, 1]^d$ in the right way, we can keep some of the guarantees on the volume of its pyramids (w.r.t.\ $X$), ensuring that the projected point is again an approximate centerpoint. 

\subsection{The Pulling Operation}
\label{ssec:pulling}

Before moving on to the proof of \Cref{thm:approximate_centerpoint_theorem} and the involved algorithms, we make some preliminary observations. In \Cref{subsec:centerpoint_intuition}, we repeatedly used the notion of moving a candidate point $c \in \R^d$ around in order to balance the amount of volume of $X$ contained in different pyramids around $c$. Here, we formalize the notion of moving with an operation that we will refer to as \emph{pulling} $c$ with a certain strength $\alpha > 0$ in a direction $u \in \{-1, 0, +1\}^d$, which simply means that we move $c$ to the position $c + \alpha u$. We will also sometimes refer to the pulling-direction $u$ implicitly in terms of its associated pyramids: For example, we might say that we are pulling in the direction of the pyramids $P^-_i(c)$ and $P^+_j(c)$, which means that we are pulling in the direction $u \in \{-1, 0, +1\}^d$ with $u_i = -1$, $u_j = +1$, and $u_k = 0$ for all $k \notin \{i, j\}$. Naturally, we can never pull in the direction of $P^+_i(c)$ and $P^-_i(c)$ simultaneously, for any $i \in [d]$. In fact, in the above example, we might equivalently say that we are pulling $c$ away from $P^+_i(c)$ and $P^-_j(c)$.

As a preparation for later proofs, we will need to analyze how points can be lost and gained between different pyramids around $c$ throughout pulling operations (see for example \Cref{fig:pulling} for a sketch in $d = 2$). This is what \Cref{lemma:pulling} below achieves. Intuitively, it groups the pyramids around $c$ into three groups: Pyramids we pull in the direction of (A), pyramids we pull away from (B), and the remaining pyramids (C). The lemma says that points can only be lost from A to B, A to C, or C to B.

\begin{figure}[ht]
    \centering
    \includegraphics[width=0.3\linewidth]{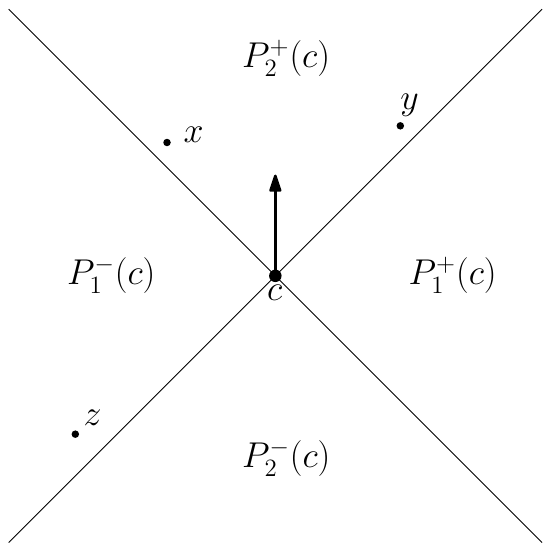}
    \caption{A simple example of \Cref{lemma:pulling} for $d = 2$. Observe how pulling $c$ in the direction $u = (0, 1)$ (indicated by the arrow) implies that $P^+_2(c)$ may lose points $x$ and $y$ to $P^-_1(c)$ and $P^+_1(c)$ respectively (or it could even lose them to $P^-_2(c)$ if we pull far enough). Similarly, $P^-_1(c)$ might lose point $z$ to $P^-_2$. However, $P^-_2$ will not lose any points.}
    \label{fig:pulling}
\end{figure}

\begin{lemma}
\label{lemma:pulling}
    Consider arbitrary $c \in \R^d$, $\alpha > 0$, and non-zero $u \in \{-1, 0, +1\}^d$. Consider an arbitrary point $x \in \R^d$ and an arbitrary pyramid $P^s_i(c)$ around $c$ with $i \in [d]$ and $s \in \{-1, +1\}$. Then the following two statements hold: 
    \begin{itemize}
        \item If $x \in P^s_i(c)$ but $x \notin P^s_i(c + \alpha u)$, then $s \cdot u_i \geq 0$.
        \item If $x \notin P^s_i(c)$ but $x \in P^s_i(c + \alpha u)$, then $s \cdot u_i \leq 0$.
        \item If $x \notin P^s_i(c)$ but $x \in P^s_i(c + \alpha u)$ and $u_i = 0$, then $x \in P^{s'}_j(c)$ for some $j \in [d]$ with $s' \cdot u_j > 0$.
    \end{itemize}
\end{lemma}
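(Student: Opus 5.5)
Throughout, fix $x$ and write $c' = c + \alpha u$. The plan is to reduce every statement to explicit coordinate inequalities. Membership $x \in P^s_i(c)$ means $s(x_i - c_i) \geq |x_k - c_k|$ for all $k \in [d]$, i.e.\ $s(x_i - c_i) = \|x - c\|$. Moving from $c$ to $c'$ changes each coordinate difference by $x_k - c'_k = (x_k - c_k) - \alpha u_k$. So the quantity $g_i := s(x_i - c_i)$ changes by $-\alpha s u_i$, and each $|x_k - c_k|$ changes by at most $\alpha|u_k| \leq \alpha$.

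\emph{First bullet.} We argue by contraposition. Suppose $s u_i < 0$, so $s u_i = -1$. Then $s(x_i - c'_i) = g_i + \alpha$: the left side of the defining inequality increases by exactly $\alpha$. Meanwhile $|x_k - c'_k| \leq |x_k - c_k| + \alpha$ for every $k$, so the right side increases by at most $\alpha$. Hence $x \in P^s_i(c)$ implies $x \in P^s_i(c')$.

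\emph{Second bullet.} This is symmetric. If $s u_i > 0$, then moving back from $c'$ to $c$ is a pull in direction $-u$, where $s(-u_i) < 0$. Applying the first bullet with $c'$ in place of $c$ and $-u$ in place of $u$ gives that $x \in P^s_i(c')$ implies $x \in P^s_i(c)$, which is the contrapositive.

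\emph{Third bullet.} This is the only step needing a real idea. Assume $u_i = 0$, $x \notin P^s_i(c)$ and $x \in P^s_i(c')$. Since $u_i = 0$, $g_i$ is unchanged, so
\[
|x_k - c'_k| \leq g_i < \|x - c\| \quad \text{for all } k.
\]
Take the pyramid of $c$ containing $x$: choose $j$ and $s'$ with $s'(x_j - c_j) = \|x - c\|$. Then $|x_j - c'_j| < |x_j - c_j|$, so coordinate $j$ must have moved, giving $u_j \neq 0$. Moreover the movement must have decreased $s'(x_j - c_j)$. Indeed, $s'(x_j - c'_j) = s'(x_j - c_j) - \alpha s' u_j$ with $s'(x_j - c_j) > 0$, and this quantity must drop below its absolute-value bound. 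If $s' u_j = -1$ it would increase, a contradiction. Hence $s' u_j > 0$, as required.

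The main obstacle is mild bookkeeping: ensuring $\|x - c\| > 0$ and handling the edge case $x = c$. Here $x = c$ lies in all pyramids around $c$, so the hypothesis $x \notin P^s_i(c)$ already excludes it. Points on pyramid boundaries are covered because the definitions use non-strict equalities.
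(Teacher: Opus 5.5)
Your proof is correct and follows essentially the same route as the paper. The first bullet rests on the containment $P^s_i(c) \subseteq P^s_i(c+\alpha u)$ when $s u_i < 0$, which you verify coordinatewise where the paper only asserts it; your reduction of the second bullet to the first by pulling back along $-u$ makes the paper's ``analogous'' precise. The third bullet uses the same chain $s'(x_j - c_j) = \|x-c\| > s(x_i - c_i) = \|x-(c+\alpha u)\| \geq s'(x_j - c_j - \alpha u_j)$, and your extra intermediate step $u_j \neq 0$ is harmless.
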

\begin{proof}
    We start by proving the first statement in an indirect fashion: To this end, assume $s \cdot u_i < 0$. Now observe that by the definition of pyramids, this implies $P^s_i(c) \subseteq P^s_i(c + \alpha u)$, since $u \in \{-1, 0, +1\}^d$ with $u_i = -s$.
    Consequently, $x \in P^s_i(c)$ would also imply $x \in P^s_i(c + \alpha u)$, and thus the first statement holds. The proof of the second statement is analogous. 
    
    For the third statement, assume $u_i = 0$, $x \notin P^s_i(c)$, and $x \in P^s_i(c + \alpha u) \cap P_j^{s'}(c)$ for some $j \in [d]$ (observe that $x$ must be in at least one of the pyramids around $c$). By the definition of the pyramids, we thus have
    \[
         s'(x_j -c_j) = \|x-c\| > s(x_i-c_i)
        =s(x_i-(c_i+0))=s(x_i-(c_i+\alpha u_i)),
    \]
    but also
    \[
        s'(x_j-(c_j+\alpha u_j))\leq \|x-(c+\alpha u)\| = s(x_i-(c_i+\alpha u_i)).
    \]
    Together, this implies $s'(x_j-c_j) > s'(x_j - (c_j+\alpha u_j))$, resulting in $s'\cdot u_j>0$ since $\alpha>0$.
\end{proof}

We will also need an upper bound on how far we have to pull. The following lemma provides this: Intuitively, it says that pulling $c$ with strength $\alpha = 2\|c\| + 2$ in any direction $u \in \{-1, 0, +1\}^d$ will guarantee that the pyramids that we pull away from will contain all of $[0, 1]^d$.

\begin{lemma}
\label{lemma:pulling_upper_bound}
    Consider arbitrary $c \in \R^d$ and non-zero $u \in \{-1, 0, +1\}^d$, and define $S = \{i \in [d] \mid u_i \neq 0\}$ as the set of indices where $u$ is non-zero. Then for $\alpha = 2\|c\| + 2$,
    we have 
    \[
        [0, 1]^d \subseteq \left( \bigcup_{i \in S} P_i^{-u_i}(c + \alpha u) \right).
    \]
\end{lemma}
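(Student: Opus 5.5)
Write $c' = c + \alpha u$ and fix an arbitrary $x \in [0,1]^d$. We will show directly that $x \in P^{-u_i}_i(c')$ for some $i \in S$. By the definition of pyramids, this means we must find $i \in S$ with $-u_i(x_i - c'_i) = \|x - c'\|$. Note that $x$ always lies in some pyramid around $c'$, namely one attaining the maximum in $\|x - c'\| = \max_j |x_j - c'_j|$. So it suffices to show two things: the coordinates achieving this maximum can be taken in $S$, and the sign is the correct one.

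First, estimate the coordinates in $S$. For $i \in S$ we have $u_i \in \{-1,+1\}$. Then
\[
-u_i(x_i - c'_i) = -u_i(x_i - c_i) + \alpha u_i^2 = \alpha - u_i(x_i - c_i).
\]
Since $|x_i - c_i| \le |x_i| + |c_i| \le 1 + \|c\|$, this gives
\[
-u_i(x_i - c'_i) \ge \alpha - 1 - \|c\| = \|c\| + 1 > 0 .
\]
So every $i \in S$ has $|x_i - c'_i| \ge \|c\| + 1$, and the sign of $x_i - c'_i$ equals $-u_i$. In addition, $|x_i - c'_i| = \alpha - u_i(x_i - c_i) \le \alpha + 1 + \|c\|$.

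Next, handle the coordinates outside $S$. For $j \notin S$ we have $c'_j = c_j$, so $|x_j - c'_j| \le 1 + \|c\|$. This is at most the lower bound just obtained for every coordinate in $S$. Hence the maximum $\|x - c'\|$ is attained at some $i \in S$ (ties with coordinates outside $S$ are harmless, since we only need one maximizing index in $S$). For such $i$, the computation above gives
\[
-u_i(x_i - c'_i) = |x_i - c'_i| = \|x - c'\|,
\]
so $x \in P^{-u_i}_i(c')$, as required.

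There is no real obstacle. The only delicate point is the non-strict comparison between coordinates in $S$ and outside $S$: the bound $1 + \|c\|$ for $j \notin S$ must not exceed the bound $\alpha - 1 - \|c\|$ for $i \in S$. The choice $\alpha = 2\|c\| + 2$ makes these two bounds exactly equal, which is enough since pyramids are closed under ties.
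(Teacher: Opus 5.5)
Your proof is correct and follows the paper's argument: you bound $-u_i(x_i - c'_i) \geq \|c\| + 1$ for every $i \in S$ and $|x_j - c'_j| \leq \|c\| + 1$ for every $j \notin S$, so the $\ell_\infty$-maximum is attained at some $i \in S$ with the right sign. You are a bit more explicit than the paper about why the maximizing index can be taken in $S$ and why ties with coordinates outside $S$ do no harm.
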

\begin{proof}
    Let $x \in [0, 1]^d$ be arbitrary and consider an arbitrary pyramid $P_i^{-u_i}(c + \alpha u)$ with $i \in S$. Observe that we have
    \[
        -u_i(x_i - (c_i + \alpha u_i)) =  -u_i x_i + u_i c_i + \alpha u_i^2 \geq \|c\| + 1
    \]
    by our choice of $\alpha = 2\|c\| + 2$. Now consider arbitrary $j \notin S$. Since $j \notin S$, we must have $u_j = 0$ and thus
    \[
        |s(x_j - (c_j + \alpha u_j))| = |s(x_j - c_j)| \leq \|c\| + 1.
    \]
    This implies that $\|x - (c - \alpha u)\| = -u_i(x_i - (c_i + \alpha u_i))$ for some $i \in S$, and thus $x \in P_i^{-u_i}(c + \alpha u)$ for that $i$, which concludes the proof.
\end{proof}

\begin{figure}[ht]
    \centering
    \includegraphics[width=0.3\linewidth]{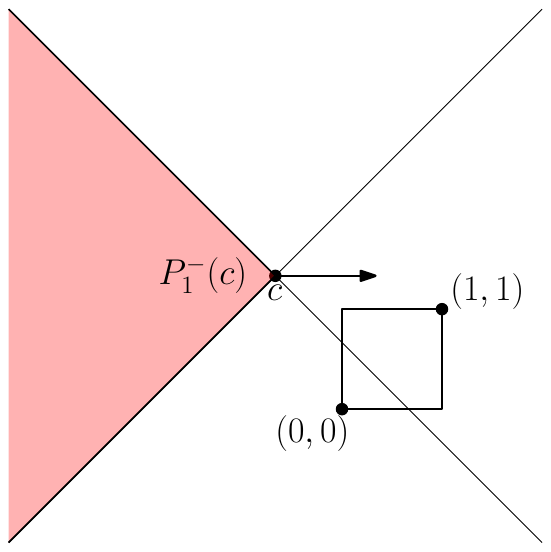}
    \caption{\Cref{lemma:pulling_upper_bound} gives a sufficient bound for how far we have to pull $c$ in the direction $(1, 0)$ until $P^-_1(c)$ contains $[0, 1]^2$.}
    \label{fig:max_pulling}
\end{figure}

With \Cref{lemma:pulling} and \Cref{lemma:pulling_upper_bound}, we are now ready to conclude \Cref{lemma:pulling_operation} below, which is the main tool that we will use from now on: \Cref{lemma:pulling_operation} gives a binary search algorithm for finding the appropriate strength with which we should pull in a given direction in order to achieve a certain volume increase in the set of pyramids that we pull away from.

\begin{lemma}[The Pulling Operation]
\label{lemma:pulling_operation}
    Let $X \subseteq [0, 1]^d$ with $\vol(X) \geq 0$, and $\epsilon > 0$. Consider $c \in \R^d$ and non-zero $u \in \{-1, 0, +1\}^d$, and define again $S = \{i \in [d] \mid u_i \neq 0\}$. Let $m$ in the range $\vol(X) \geq m \geq \vol \left( X \cap \bigcup_{i \in S} P_i^{-u_i}(c) \right)$ be arbitrary. Then we can find a point $c'$ satisfying 
    \[
        m \geq \vol \left( X \cap \bigcup_{i \in S} P_i^{-u_i}(c') \right) \geq m - \epsilon
    \]
    in time $\bigO(dT\log \frac{\|c\|d}{\epsilon})$, where we assume that we have access to an oracle that computes $\vol(X \cap P)$ for any pyramid $P$ in time $T$.
\end{lemma}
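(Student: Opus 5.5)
We plan to binary search over the pulling strength $\alpha \in [0, 2\|c\|+2]$ along the ray $c + \alpha u$. Define $g(\alpha) := \vol\left( X \cap \bigcup_{i \in S} P_i^{-u_i}(c + \alpha u) \right)$. The key structural claims are that $g$ is monotonically non-decreasing in $\alpha$, that $g(0) \le m$, that $g(2\|c\|+2) = \vol(X) \ge m$, and that $g$ is Lipschitz. Given these, bisection finds the required $\alpha$.

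For monotonicity, we apply \Cref{lemma:pulling} at the point $c + \alpha u$ with strength $\beta > 0$. Take a point $x$ in some $P_i^{-u_i}(c+\alpha u)$ with $i \in S$ that is not in $P_i^{-u_i}(c + (\alpha+\beta)u)$. The first statement of \Cref{lemma:pulling} would force $-u_i \cdot u_i \ge 0$, which is impossible since $u_i \neq 0$. Hence each individual pyramid $P_i^{-u_i}$ with $i \in S$ never loses points as we pull further, so the union never loses points either. (Indeed, for these pyramids one directly has $P_i^{-u_i}(c+\alpha u) \subseteq P_i^{-u_i}(c+(\alpha+\beta)u)$, as in the proof of the lemma.) This gives monotonicity of $g$. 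The endpoint $g(2\|c\|+2) = \vol(X)$ is exactly \Cref{lemma:pulling_upper_bound}, since the union then covers $[0,1]^d \supseteq X$. The lower endpoint $g(0) \le m$ is a hypothesis of the lemma.

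For the Lipschitz property, pyramids with a common apex intersect only in measure-zero boundaries, so $g(\alpha) = \sum_{i \in S} \vol(X \cap P_i^{-u_i}(c+\alpha u))$. This sum is computable with $|S| \le d$ oracle calls, in time $\bigO(dT)$. By \Cref{lemma:precision}, each summand changes by at most $\sqrt{8d^3}\,|\alpha - \alpha'|$, since $\|u\| = 1$. Hence $g$ is $L$-Lipschitz with $L = d\sqrt{8d^3}$.

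The algorithm maintains an interval $[a,b]$ with $g(a) \le m \le g(b)$, starting from $[0, 2\|c\|+2]$, and bisects by evaluating $g$ at the midpoint. It stops once $b - a \le \epsilon / L$ and outputs $c' = c + a u$. By continuity, $g(a) \ge g(b) - L(b-a) \ge m - \epsilon$, while $g(a) \le m$ holds by the invariant. The number of iterations is $\log_2 \frac{(2\|c\|+2)L}{\epsilon} = \bigO(\log \frac{\|c\| d}{\epsilon})$, noting that $\|c\|+1$ is absorbed for $\|c\| \ge 1$ (or otherwise just contributes a constant). Combined with $\bigO(dT)$ per evaluation, this gives the claimed bound.

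The main obstacle is justifying that the union over $S$ is monotone. This requires care with the case where $x$ moves between two pyramids that both lie in $S$, but the per-pyramid containment argument handles it. The rest is routine bookkeeping, including the measure-zero overlaps between pyramids.
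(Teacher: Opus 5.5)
Your proposal is correct and is essentially the paper's own proof. It bisects over $\alpha \in [0, 2\|c\|+2]$ on the non-decreasing function $\alpha \mapsto \sum_{i\in S}\vol(X\cap P_i^{-u_i}(c+\alpha u))$, which is monotone by \Cref{lemma:pulling} and equals $\vol(X)$ at the right end by \Cref{lemma:pulling_upper_bound}; it stops at interval length $\epsilon/\sqrt{8d^5}$ (your $L = d\sqrt{8d^3}$ is exactly this constant), returns the left endpoint, and bounds the loss via \Cref{lemma:precision}.
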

\begin{proof}
    Consider the function 
    \[
        F(\alpha) = \vol \left( X \cap \bigcup_{i \in S} P_i^{-u_i}(c + \alpha u) \right) = \sum_{i \in S} \vol(P^{-u_i}_i(c + \alpha u ) \cap X)
    \]
    for $\alpha \geq 0$. Observe that $F$ is non-decreasing by \Cref{lemma:pulling} and satisfies $F(\alpha) = \vol(X)$ for $\alpha \geq 2\|c\| + 2$ by \Cref{lemma:pulling_upper_bound}. Thus, we can use binary search in the range $\alpha \in [0, 2\|c\| + 2]$ to find two values $\alpha_\ell, \alpha_r$ with $0 \leq \alpha_\ell < \alpha_r < \alpha_\ell + \frac{\epsilon}{\sqrt{8d^5}}$ 
    and $f(\alpha_\ell) \leq m \leq f(\alpha_r)$. By \Cref{lemma:precision} (precision lemma), we get $f(\alpha_\ell) \geq m - \epsilon$ (since we consider at most $d$ pyramids, each of which can gain at most $\sqrt{8d^3}\frac{\epsilon}{\sqrt{8d^5}} = \frac{\epsilon}{d}$ volume between $\alpha_\ell$ and $\alpha_r$), which means that $c' = c + \alpha_\ell u$ is the point we are looking for. 
    
    Since our binary search starts on a domain of size $2\|c\| + 2$ and ends as soon as $\alpha_r - \alpha_\ell < \epsilon / \sqrt{8d^5}$, it runs in $\bigO(\log \frac{\|c\|d}{\epsilon})$ steps, with each step requiring $\bigO(dT)$ time for the volume computations. 
\end{proof}

As a first application of \Cref{lemma:pulling_operation}, let us state the following corollary, which explains how we can find a reasonably good approximate centerpoint once we have a point $c$ whose negative pyramids are sufficiently balanced. 

\begin{corollary}
\label{corollary:regions_to_pyramids}
    Let $X \subseteq [0,1]^d$ with volume $\vol(X) \geq 0$ and $\epsilon > 0$ be arbitrary. Assume that we are given $c \in \R^d$ with $\vol(P^+_i(c) \cap X) = 0$ and $\vol(P^-_i(c) \cap X) \geq m$ for all $i \in [d]$ and some $m \geq 0$. Then we can find a point $c' \in \R^d$ with $\sum_{i \in [d]} \vol(P^+_i(c') \cap X)\geq \frac{m}{2} - \epsilon$ and $\vol(P^-_i(c') \cap X) \geq \frac{m}{2}$ for all $i \in [d]$ in time $\bigO(dT\log \frac{\|c\|d}{\epsilon})$, where we assume that we have access to an oracle that computes $\vol(X \cap P)$ for any pyramid $P$ in time $T$.
\end{corollary}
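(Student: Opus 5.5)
We apply \Cref{lemma:pulling_operation} with the pulling direction $u = -\One$, so that $S = [d]$ and $-u_i = +1$ for all $i$. The set of pyramids we pull away from is then exactly the set of positive pyramids $\bigcup_{i \in [d]} P^+_i(\cdot)$. Initially, $\vol\bigl(X \cap \bigcup_i P^+_i(c)\bigr) = 0$ by assumption. Moreover, $\vol(X) \geq dm \geq m/2$, because the negative pyramids around $c$ are interior-disjoint and each carries volume at least $m$. Hence the target value $m/2$ lies in the admissible range of the lemma.

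The lemma then returns $c' = c - \alpha \One$ for some $\alpha \geq 0$ with
\[
    \tfrac{m}{2} \geq \sum_{i \in [d]} \vol(P^+_i(c') \cap X) \geq \tfrac{m}{2} - \epsilon ,
\]
using that the positive pyramids overlap only in measure zero. This gives the first claimed inequality within the stated time bound $\bigO(dT \log \frac{\|c\| d}{\epsilon})$.

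It remains to show $\vol(P^-_i(c') \cap X) \geq m/2$ for every $i$. We use \Cref{lemma:pulling}. Since $u_i = -1$ for all $i$, each negative pyramid has $s \cdot u_i = (-1)(-1) = 1 > 0$. By the second statement of the lemma, a point can therefore never enter $P^-_i$ while we pull. Points can only leave negative pyramids, and since no coordinate of $u$ is zero, every point that leaves a pyramid ends up in some other pyramid around $c'$. By the first statement, positive pyramids ($s u_i = -1 < 0$) never lose points. So all volume leaving $P^-_i(c)$ lands in the union of the other pyramids around $c'$.

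The key claim is that a point leaving a negative pyramid must land in a positive pyramid, not in another negative one. We check this directly. The map $y \mapsto y_j - c_j$ shifts by the same amount $\alpha$ for every $j$ when $c$ moves to $c - \alpha\One$, so the ordering of the quantities $-(x_j - c_j)$ across $j$ is unchanged. Thus $x$ cannot switch from $P^-_i$ to $P^-_j$ unless it was already on the common boundary, which is a measure-zero set. Consequently, the total volume lost by all negative pyramids together equals the volume gained by the positive pyramids, which is at most $m/2$. Each individual $P^-_i(c')$ therefore retains at least $m - m/2 = m/2$.

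The main point requiring care is this last measure-zero argument, namely that points do not migrate between negative pyramids. The rest is a direct application of \Cref{lemma:pulling_operation}.
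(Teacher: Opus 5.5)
Your proof is correct and matches the paper's: you pull along $-\One$ with \Cref{lemma:pulling_operation} to target $m/2$ in the positive pyramids, then use \Cref{lemma:pulling} to conclude that each negative pyramid only loses volume, so none loses more than the total $m/2$. Your separate ``key claim'' that no mass migrates between negative pyramids is correct but redundant. It already follows from the second statement of \Cref{lemma:pulling}, which you cite (negative pyramids never gain points), and the bound on the total loss follows directly because the $2d$ pyramids around $c'$ partition $X$ up to measure zero.
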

\begin{proof}
    Consider the direction $u = -\One = -(1, \dots, 1)$ and apply \Cref{lemma:pulling_operation} to find a point $c' \in \R^d$ that satisfies  
    \[
        \frac{m}{2} \geq \vol \left( X \cap \bigcup_{i \in [d]} P_i^{+}(c') \right) \geq \frac{m}{2} - \epsilon.
    \]
    Observe that by \Cref{lemma:pulling}, negative pyramids can only lose and positive pyramids only gain volume while pulling $c$ in the direction of $u$. Thus, since all the negative pyramids together lost at most $\frac{m}{2}$ of volume, we still have $\vol(P^-_i(c') \cap X) \geq \frac{m}{2}$ for all $i \in [d]$ after the pulling (in the worst case, all of the volume was lost in the same negative pyramid).
\end{proof}

We already said that the point $c'$ in \Cref{corollary:regions_to_pyramids} is a reasonably good centerpoint for $X$. This is due to \Cref{obs:pyramid_decomposition}, and we formally prove it as follows.

\begin{observation}
\label{obs:m/2 centerpoint}
    Let $X \subseteq [0,1]^d$ with volume $\vol(X) \geq 0$, and consider a non-empty set $S \subseteq [d]$, an arbitrary $c\in \R^d$, and $m \geq 0$ with $\sum_{i\in S} \vol(P^+_i(c) \cap X) \geq m \vol(X)$ and $\vol (P^-_i(c) \cap X) \geq m \vol(X)$ for all $i\in S$. Then $c$ is an $m$-centerpoint of $X$.
\end{observation}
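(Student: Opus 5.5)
We need to show that $\vol(X \cap H_v(c)) \geq m \vol(X)$ for every $v \in S^{d-1}$. The tool is the pyramid decomposition from \Cref{obs:pyramid_decomposition}, which gives
\[
    \vol(X \cap H_v(c)) = \sum_{i : v_i \geq 0} \vol(P^+_i(c) \cap X) + \sum_{i : v_i \leq 0} \vol(P^-_i(c) \cap X).
\]
This equality relies on the pyramids around $c$ pairwise intersecting only in measure-zero sets, namely pieces of the hyperplanes $\pm(y_i - c_i) = \pm(y_j - c_j)$, which the paper already uses implicitly. Every summand is nonnegative, so it is enough to exhibit, for each $v$, a sub-collection of the summands whose total is at least $m\vol(X)$.

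We split into two cases according to the signs of $v$ on $S$.

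\textbf{Case 1:} some $i \in S$ has $v_i \leq 0$. Then $P^-_i(c)$ is one of the pyramids in the decomposition of $H_v(c)$. By hypothesis $\vol(P^-_i(c) \cap X) \geq m\vol(X)$, and since all other summands are nonnegative, $\vol(X \cap H_v(c)) \geq m \vol(X)$.

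\textbf{Case 2:} $v_i > 0$ for all $i \in S$. Then every positive pyramid $P^+_i(c)$ with $i \in S$ appears in the decomposition of $H_v(c)$. Dropping the remaining summands gives
\[
    \vol(X \cap H_v(c)) \geq \sum_{i \in S} \vol(P^+_i(c) \cap X) \geq m\vol(X).
\]

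Since $S$ is non-empty, every $v$ falls into exactly one of the two cases. Hence $\vol(X \cap H_v(c)) \geq m\vol(X)$ for all $v \in S^{d-1}$, which is the definition of an $m$-centerpoint.

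The only point needing care is that the decomposition in \Cref{obs:pyramid_decomposition} is disjoint up to measure zero, so that volumes add. Even without that, only the inequality $\vol(X \cap H_v(c)) \geq \vol(X \cap \bigcup P)$ for the chosen pyramids is needed. In Case 1 that is immediate. In Case 2 we need the positive pyramids $P^+_i(c)$, $i \in S$, to overlap only in measure zero, which holds because two distinct pyramids around $c$ intersect in a subset of a hyperplane.
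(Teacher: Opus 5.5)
Your proof is correct and is the same as the paper's: the same case split on whether some $i \in S$ has $v_i \leq 0$ (use the single pyramid $P^-_i(c)$) or $v_i > 0$ for all $i \in S$ (use the positive pyramids $P^+_i(c)$, $i \in S$), both justified by the pyramid decomposition. Your remark that distinct pyramids around $c$ meet only in measure-zero sets makes explicit a detail the paper leaves implicit, and working with general $v$ instead of $v \in \{-1,0,+1\}^d \setminus \{0\}$ changes nothing.
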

\begin{proof}
    Consider arbitrary $v \in \{-1,0,1\}^d \setminus \{0\}$ and its corresponding halfspace $H_v(c)$ around $c$. If we have $v_i \leq 0$ for some $i \in S$, then $\vol(H_v(c) \cap X) \geq \vol(P^-_i(c) \cap X)\geq m \vol(X)$ by \Cref{obs:pyramid_decomposition}. Otherwise, there is no such index, and thus $v_i=1$ for all $i\in S$, which implies $\vol(H_v(c) \cap X)\geq \sum_{i\in S}\vol (P^+_i(c) \cap X) \geq m \vol(X)$. Therefore, every halfspace around $c$ contains at least $m \vol(X)$ volume, concluding the proof.
\end{proof}

\subsection{Proof of \Cref{thm:approximate_centerpoint_theorem}}
\label{subsec:centerpoint_details}

With the preparations from \Cref{ssec:pulling}, we now move on to the proof of \Cref{thm:approximate_centerpoint_theorem}, which we will develop in several steps, according to the overview in \Cref{subsec:centerpoint_intuition}. We start with \Cref{lemma:balancing_pyramids} below, which states that we can approximately balance all the negative pyramids around a candidate $c \in \R^d$. We heavily rely on \Cref{lemma:pulling_operation} as a subroutine. Note that the runtime stated in \Cref{lemma:balancing_pyramids} could probably be improved by being a bit more careful. We opted for simplicity here, since improving the polynomial dependencies will not influence either of our final results (\Cref{thm:main_result_1} or \Cref{thm:main_result_2}) significantly.

\begin{figure}[ht]
    \centering
    \includegraphics[width=0.5\linewidth]{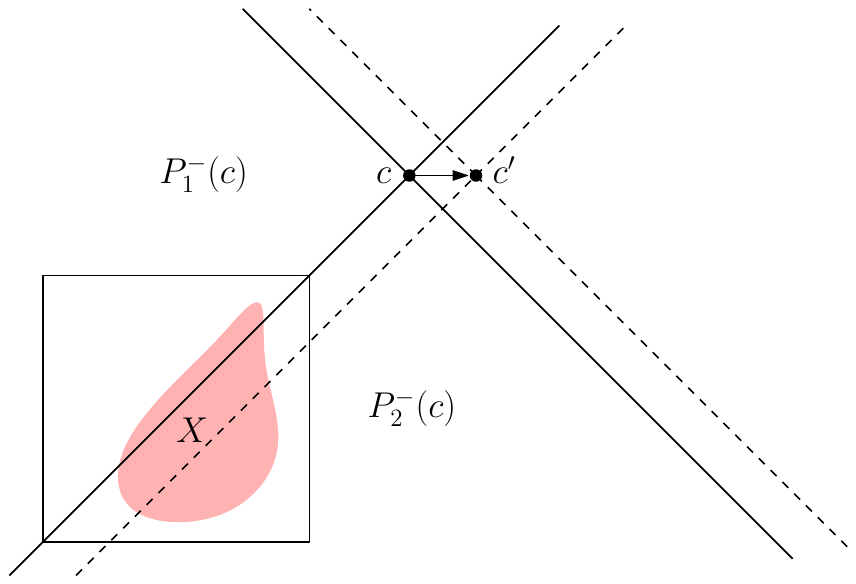}
    \caption{A two-dimensional sketch of one iteration in the proof of \Cref{lemma:balancing_pyramids}. Observe that $\vol(X \cap P^-_2(c))$ is much larger than $\vol(X \cap P^-_1(c))$, and thus we want to pull $c$ in the direction $(1, 0)$ (indicated by the arrow) to balance this out. It is not hard to see that in two dimensions, one iteration of the algorithm in \Cref{lemma:balancing_pyramids} actually suffices to balance the two negative pyramids.}
    \label{fig:balancing_neg}
\end{figure}

\begin{lemma}
\label{lemma:balancing_pyramids}
    Let $X \subseteq [0,1]^d $ with $\vol(X) \geq 0$ and $\epsilon > 0$ be arbitrary. Then we can find a point $c \in \R^d$ that satisfies $1 \leq c_i \leq 2^{\bigO(d^3 \log (1 / \epsilon))}$ and $\vol(P^-_i(c) \cap X) \geq \vol(X)/d-\epsilon$ for all $i \in [d]$ in time $\bigO(Td^7\log^2 1 / \epsilon)$, where we assume that we have access to an oracle that computes $\vol(X \cap P)$ for any pyramid $P$ in time $T$.
\end{lemma}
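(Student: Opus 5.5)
We start from $c^{(0)} = \One = (1, \dots, 1)$ and maintain two invariants throughout: every coordinate satisfies $c_i \geq 1$, and no coordinate of $c$ ever decreases. Under these invariants every positive pyramid $P^+_i(c)$ meets $[0,1]^d$ only in a null set. Hence the negative pyramids partition $X$ up to measure zero, and
\[
\sum_{i \in [d]} \vol(P^-_i(c) \cap X) = \vol(X)
\]
holds at all times. All moves will be pulls in directions $u \in \{0,+1\}^d$, which preserve both invariants.

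For such a $u$ with support $S$, pulling in direction $u$ means pulling away from the pyramids $P^-_i(c)$ with $i \in S$. By \Cref{lemma:pulling}, volume can then only flow from the pyramids with indices outside $S$ (group A) into the pyramids with indices in $S$ (group B). The negative pyramids in $S$ can only gain volume and those outside $S$ can only lose it.

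One balancing step goes as follows. Sort the values $v_i = \vol(P^-_i(c) \cap X)$ in decreasing order. Find a gap index $k$ where $v_{(k)} - v_{(k+1)}$ is large, for example at least $\vol(X)/d^2$ relative to the current imbalance. Let $S$ be the set of light indices below the gap. Use \Cref{lemma:pulling_operation} to pull in direction $u = \One_S$ until the total volume of the light group increases by a prescribed amount $\Delta$.

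We choose $\Delta$ to be about half the gap times $|S|$, so the light pyramids cannot overtake the heavy ones. Since the heavy group loses at most $\Delta$ in total, and the light group gains $\Delta$ in total, we cannot control how the gain is split inside each group. To handle this, we measure progress with a potential that tolerates redistribution within a group: for instance $\Phi = \sum_i \max(0, \vol(X)/d - v_i)$, the total deficit. Alternatively, we can argue about the minimum value, which the step raises or at least does not lower, together with the number of pyramids attaining it.

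The main obstacle is exactly this step. The gain is distributed arbitrarily within $S$, so a single light pyramid might receive everything. To cope with this I would repeat pulls on the same gap structure, taking the current minimal pyramid into $S$ each time. I would then show by a halving argument that after $\bigO(d)$ pulls the deficit of the lightest pyramids shrinks by a constant factor, or that the number of pyramids below a threshold decreases. Iterating over $\bigO(d^2\log\frac{1}{\epsilon})$ rounds (with polynomial overhead in $d$ for re-sorting and gap selection) would drive every $v_i$ to at least $\vol(X)/d - \epsilon$.

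Each pull costs $\bigO(dT\log\frac{\|c\| d}{\epsilon'})$ by \Cref{lemma:pulling_operation}, where the per-step error $\epsilon'$ is set to about $\epsilon/\poly(d)$ so that the errors summed over all steps stay below $\epsilon$. Each pull multiplies $\|c\|$ by at most a constant factor, since $\alpha \le 2\|c\|+2$. Over $\bigO(d^3\log\frac{1}{\epsilon})$ pulls this gives $\|c\| \le 2^{\bigO(d^3\log(1/\epsilon))}$, hence $\log \|c\| = \bigO(d^3\log\frac{1}{\epsilon})$. Multiplying the number of pulls by the cost per pull gives the claimed $\bigO(Td^7\log^2\frac{1}{\epsilon})$ bound.
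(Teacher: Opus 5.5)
Your setup is the same as the paper's, and your bookkeeping for $\|c\|$ and the running time is fine. Concretely, you start at $\One$ and pull only in directions $u\in\{0,+1\}^d$ supported on the light indices, so $c_i\ge 1$ and $\sum_i v_i=\vol(X)$ are preserved, and by \Cref{lemma:pulling} heavy negative pyramids only lose while light ones only gain. What is missing is the one thing the lemma really needs: a quantitative progress argument that does not care how the transferred volume is split inside each group. You name this obstacle yourself, but then only announce a ``halving argument'' or suggest the total deficit or the minimum as potentials, without proving any rate. Moreover, your choice $\Delta\approx\frac{\gamma}{2}|S|$ contradicts your own justification, because one light pyramid may receive all of $\Delta$ and one heavy pyramid may supply all of it. For example, take $\vol(X)=1$, $d=4$, sorted values $(0.475,0.175,0.175,0.175)$, $\gamma=0.3$, $S=\{2,3,4\}$, $\Delta=0.45$. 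Nothing in \Cref{lemma:pulling} excludes the outcome $(0.025,0.625,0.175,0.175)$. There a light pyramid overtakes the heavy one, the minimum drops, and the total deficit grows from $0.225$ to $0.375$.

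The missing idea is the paper's quadratic potential $\Pi=\sum_i(v_i-\vol(X)/d)^2$, combined with a total transfer of only $D\approx\gamma/2$, where $\gamma=v_k-v_{k+1}$ is the largest gap in the sorted order $v_1\ge\dots\ge v_d$. Write $\delta_i=v_i-v'_i$, so $\delta_i\ge0$ for $i\le k$, $\delta_i\le0$ for $i>k$, and $\sum_{i\le k}\delta_i=-\sum_{i>k}\delta_i=D$. Using only the old values, $v_i\ge v_k$ for donors and $v_i\le v_{k+1}$ for receivers, one gets
\[
\Pi-\Pi'=\sum_{i\in[d]}2\delta_i\Bigl(v_i-\frac{\vol(X)}{d}\Bigr)-\sum_{i\in[d]}\delta_i^2\;\ge\;2\gamma D-\Bigl(\sum_{i\le k}|\delta_i|\Bigr)^2-\Bigl(\sum_{i>k}|\delta_i|\Bigr)^2=2\gamma D-2D^2,
\]
regardless of how $D$ is distributed. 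For $D\approx\gamma/2$ this is about $\gamma^2/2$. Since $|v_i-\vol(X)/d|\le d\gamma$ for all $i$, we have $\Pi\le d^3\gamma^2$. Hence $\Pi$ shrinks by a factor $1-\frac{1}{2d^3}$ per pull, up to an additive error $\frac{\epsilon^2}{4d^3}$ from running \Cref{lemma:pulling_operation} with accuracy $\frac{\epsilon^2}{8d^3}$. After $\bigO(d^3\log\frac1\epsilon)$ pulls, $\Pi\le\epsilon^2$, i.e.\ $v_i\ge\vol(X)/d-\epsilon$ for all $i$. The same computation shows why the transferred amount matters. With your $D\approx\frac{\gamma}{2}|S|$ the bound becomes $2\gamma D-2D^2=\gamma^2|S|(1-|S|/2)$, which is negative for $|S|\ge3$. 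It is attained exactly when one pyramid at $v_k$ gives everything to one pyramid at $v_{k+1}$, as in the example above.
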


\begin{proof}
    We start by initializing $c = (1, \dots, 1) \in \R^d$, define $\pi_i = \vol(P^-_i(c) \cap X)$ for $i \in [d]$, and consider the potential function\footnote{We thank Hanwen Zhang for pointing us to this choice of potential function, which improved our analysis.} 
    $\Pi=\sum_{i\in [d]}(\pi_i-\frac{\vol(X)}{d})^2$. Throughout the algorithm, we will maintain the invariant $1 \leq c_i$ for all $i \in [d]$, and thus we will always have $\vol(P^+_i(c) \cap X) = 0$ for all $i \in [d]$ (by the definition of pyramids). Equivalently, we will always maintain $\sum_{i \in d} \pi_i = \vol(X)$.
    
    Observe that $\Pi$ is minimized for $\pi_i = \frac{\vol(X)}{d}$ for all $i \in [d]$. Thus, intuitively, decreasing the potential will balance the negative pyramids. Concretely, we have that $\Pi \leq \epsilon^2$ would imply $(\pi_i-\frac{\vol(X)}{d})^2\leq \epsilon^2$ and thus $-\epsilon\leq\pi_i-\frac{\vol(X)}{d}\leq \epsilon$ for all $i \in [d]$. This in turn would guarantee $\pi_i = \vol(P^-_i(c) \cap X)\geq\frac{\vol(X)}{d}-\epsilon$, which is our goal. 
    
    We will now iteratively improve the potential $\Pi$ by moving $c$ until we arrive at $\Pi \leq \epsilon^2$. Let us describe one improvement step: For this, we start by sorting the values of $\pi_1, \dots, \pi_d$ and assume from now on without loss of generality $\pi_1 \geq \pi_2 \geq \dots \geq \pi_d$. Let $k \in [d - 1]$ be the index that maximizes $\max_{k \in [d - 1]} (\pi_k-\pi_{k+1}) := \gamma$. We call $\gamma$ the largest gap.
    
    We define our pulling direction $u \in \{0, 1\}^d$ by setting $u_i = 0$ for $i \leq k$ and $u_i = 1$ for $i > k$. Clearly, pulling $c$ in this direction cannot decrease its coordinates, and thus our invariant $1 \leq c_i$ for all $i \in [d]$ is preserved. More concretely, we invoke \Cref{lemma:pulling_operation} to find a point $c' = c + \alpha u$ with 
    \[
        \frac{\gamma}{2} \geq \sum_{i > k} \pi'_i - \sum_{i > k} \pi_i \geq \frac{\gamma}{2} - \frac{\epsilon^2}{8 d^3}, 
    \]
    where we use $\pi'_i = \vol(P^-_i(c') \cap X)$ for the volume (w.r.t.\ $X$) of the $i$-th negative pyramid around $c'$. \Cref{lemma:pulling_operation} tells us that $c'$ can be found in time $\bigO(dT\log \frac{\|c\|d}{\epsilon})$. 
    
    By \Cref{lemma:pulling}, we get the guarantee $\delta_i := \pi_i - \pi'_i \leq 0$ for $i > k$ and $\delta_i := \pi_i - \pi'_i \geq 0$ for $i \leq k$: Indeed, the former follows from the first statement in \Cref{lemma:pulling} and the latter from the third statement in \Cref{lemma:pulling} (which implies that any point that a negative pyramid $P^-_i(c)$ with $i \leq k$ could potentially gain during the move would have to come from a positive pyramid, but these have zero volume w.r.t.\ X).
    Naturally, we also still have $\sum_{i \in [d]} \pi'_i = \vol(X)$. Thus, the move from $c$ to $c'$ effectively made the the first $k$ negative pyramids lighter and the remaining negative pyramids heavier. 

    Let $\Pi'$ be the potential of $c'$. We claim that our move decreased the potential, and in fact we will argue $\Pi - \Pi' \geq \frac{\gamma^2}{2} - \frac{\epsilon^2}{4d^3}$. We start by rewriting
    \begin{align*}
        \Pi-\Pi'&=\sum_{i\in[d]} \left( (\pi_i-\frac{\vol(X)}{d})^2-(\pi_i'-\frac{\vol(X)}{d})^2 \right)\\
        &=\sum_{i\in[d]} \left((\pi_i-\frac{\vol(X)}{d})^2-(\pi_i-\delta_i-\frac{\vol(X)}{d})^2 \right) \\
        &=\sum_{i\in[d]}  2\delta_i \left(\pi_i-\frac{\vol(X)}{d}\right)- \sum_{i\in[d]} \delta_i^2.
    \end{align*}
    Now observe that we have 
    \[
        \sum_{i\in[d]}  2\delta_i\left(\pi_i-\frac{\vol(X)}{d} \right) \geq 2 \left(\pi_k - \frac{\vol(X)}{d}\right) \left( \sum_{i\in[k]}  \delta_i \right) + 2 \left(\pi_{k + 1} - \frac{\vol(X)}{d} \right) \left( \sum_{i = k + 1}^d  \delta_i  \right)
    \]
    since $\delta_i \geq 0$ for $i \in [k]$ and $\delta_i \leq 0$ for $i > k$. By $\sum_{i \in [d]}\pi_i = \vol(X) = \sum_{i \in [d]}\pi'_i$, we have
    \[
        \sum_{i\in[k]}  \delta_i = - \sum_{i = k + 1}^d  \delta_i
    \]
    and thus get
    \begin{align*}
        \sum_{i\in[d]}  2\delta_i\left(\pi_i-\frac{\vol(X)}{d}\right) &\geq 2 \left(\pi_k - \frac{\vol(X)}{d}\right) \left( \sum_{i\in[k]} \delta_i \right)  - 2 \left(\pi_{k + 1} - \frac{\vol(X)}{d}\right) \left( \sum_{i\in[k]}  \delta_i  \right) \\
        &= 2 (\pi_k - \pi_{k + 1}) \sum_{i\in[k]}  \delta_i  \\
        &\geq 2  \gamma \sum_{i\in[k]}  \delta_i.
    \end{align*}
    Next, observe also 
    \[
        \sum_{i\in[d]} \delta_i^2 \leq \left(\sum_{i\in[k]} |\delta_i| \right)^2  + \left(\sum_{i = k + 1}^d |\delta_i| \right)^2 \leq 2\frac{\gamma^2}{2^2} = \frac{\gamma^2}{2}
    \]
    due to $\sum_{i > k} |\delta_i| =  \sum_{i \in [k]} |\delta_i| \leq \frac{\gamma}{2}$. We use this to conclude 
    \[
        \Pi - \Pi' \geq 2  \gamma \sum_{i\in[k]}  \delta_i - \frac{\gamma^2}{2} \geq 2 \gamma (\frac{\gamma}{2} - \frac{\epsilon^2}{8d^3}) - \frac{\gamma^2}{2} = \frac{\gamma^2}{2} - \frac{\epsilon^2}{4d^3},
    \]
    where we used $\sum_{i \in [k]} \delta_i \geq \frac{\gamma}{2} - \frac{\epsilon^2}{8d^3}$ and $\gamma \leq \vol(X) \leq 1$.
    
    To summarize, this calculation shows that if the largest gap in $\pi_1, \dots, \pi_d$ is $\gamma$, one iteration of pulling $c$ decreases the potential by at least $\frac{\gamma^2}{2} - \frac{\epsilon^2}{4d^3}$. Since the potential of $\Pi$ with largest gap $\gamma$ is generally upper bounded by $ \Pi \leq d(d\gamma)^2=d^3\gamma^2$, we get a relative decrease of potential of 
    \[
        \Pi' \leq \Pi - \frac{\gamma^2}{2} + \frac{\epsilon^2}{4d^3} = (1 - \frac{\gamma^2}{2 \Pi}) \Pi + \frac{\epsilon^2}{4d^3} \leq (1 - \frac{1}{2d^3}) \Pi + \frac{\epsilon^2}{4d^3}
    \]
    with every iteration. Using an upper bound of $\vol(X)^2 \leq 1$ for the initial potential, we calculate that after $t$ iterations, we obtain an upper bound of 
    \[
        \Pi \leq (1 - \frac{1}{2d^3})^t \vol(X)^2 + \frac{\epsilon^2}{4d^3} \sum_{i = 0}^{t - 1} ( 1 - \frac{1}{2d^3})^i \leq (1 - \frac{1}{2d^3})^t + 2d^3 \frac{\epsilon^2}{4d^3} \leq (1 - \frac{1}{2d^3})^t + \frac{\epsilon^2}{2}
    \]
    on the potential. Thus, by choosing $t \geq 2d^3 \log \frac{2}{\epsilon^2}$, we get 
    \[
        \Pi \leq  \exp( - \frac{1}{2d^3} 2d^3 \log \frac{2}{\epsilon^2} ) + \frac{\epsilon^2}{2} = \epsilon^2,
    \]
    as desired. 

    We perform $\bigO(d^3 \log \frac{1}{\epsilon})$ iterations. Each iteration can be performed in $\bigO(dT\log \frac{\|c\|d}{\epsilon})$ time, and in particular increases the coordinates of $c$ by at most $\bigO(\|c\|)$ (due to the use of \Cref{lemma:pulling_upper_bound} in \Cref{lemma:pulling_operation}). Thus, we get an overall bound of $\|c\| \leq 2^{\bigO(d^3 \log \frac{1}{\epsilon})}$, which means that the overall runtime is bounded by $\bigO(Td^7 \log^2 1 / \epsilon)$.
\end{proof}

Before we can conclude \Cref{thm:approximate_centerpoint_theorem}, it remains to explain how we can project to the box $[0, 1]^d$, which will be done in \Cref{lemma:centerpoint_projection_on_cube} below.

\begin{figure}[ht]
    \centering
    \includegraphics[width=0.3\linewidth]{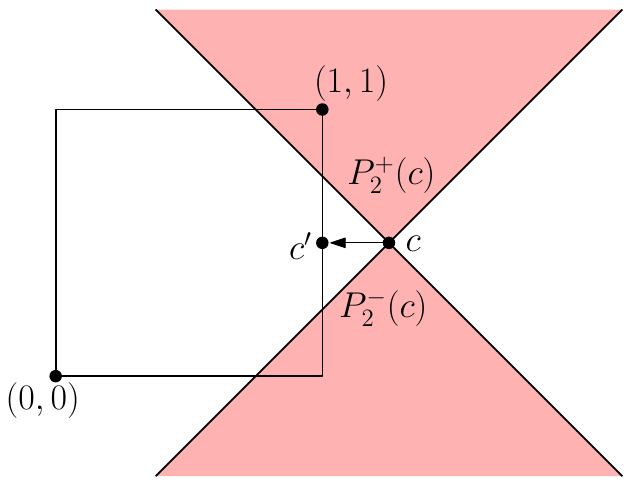}
    \caption{A two-dimensional example of the projection in the proof of \Cref{lemma:centerpoint_projection_on_cube}.}
    \label{fig:projection}
\end{figure}

\begin{lemma}
\label{lemma:centerpoint_projection_on_cube}
    Let $X \subseteq [0, 1]^d$ with $\vol(X) \geq 0$ be arbitrary. Assume that we are given a point $c \in \R^d$ with $\sum_{i\in[d]} \vol(P^+_i(c) \cap X)\geq m$ and $\vol(P^-_i(c) \cap X)\geq m$ for all $i \in [d]$ and some $m \geq 0$. Then we can find a point $c' \in [0, 1]^d$ with $\sum_{i\in S} \vol(P^+_i(c') \cap X)\geq m$ and $\vol(P^-_i(c') \cap X) \geq m $ for all $i$ in some non-empty subset $S \subseteq [d]$ in time $O(d)$.
\end{lemma}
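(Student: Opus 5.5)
The plan is to project $c$ onto $[0,1]^d$ by clamping its coordinates that exceed $1$. Concretely, set $c'_i = \min(c_i, 1)$ for all $i \in [d]$ and take $S = \{ i \in [d] \mid c_i \leq 1 \}$. This clearly takes $O(d)$ time.

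First I will dispose of degenerate cases. If $m = 0$ any $c' \in [0,1]^d$ with $S = [d]$ works, so assume $m > 0$.
\begin{itemize}
\item If $c_i < 0$ for some $i$, then every $y \in [0,1]^d$ has $y_i - c_i > 0$. Hence $P^-_i(c) \cap [0,1]^d$ lies in $\{y : y_i - c_i \le 0\}$ and is empty, which contradicts $\vol(P^-_i(c) \cap X) \geq m > 0$. So $c \geq 0$.
\item If $c_i > 1$, then $P^+_i(c) \cap [0,1]^d = \emptyset$ by the same argument. If this held for all $i$, we would get $\sum_i \vol(P^+_i(c) \cap X) = 0 < m$, a contradiction. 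So $S \neq \emptyset$ and $c' \in [0,1]^d$.
\item Since the positive pyramids $P^+_j(c)$ with $j \notin S$ meet $X$ in volume zero, we get $\sum_{i \in S} \vol(P^+_i(c) \cap X) = \sum_{i \in [d]} \vol(P^+_i(c) \cap X) \geq m$.
\end{itemize}

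The main step is showing that for $i \in S$ the pyramids $P^\pm_i$ do not lose volume of $X$ when moving from $c$ to $c'$. I will realise the move as a sequence of pulls, one coordinate $j \notin S$ at a time. Each pull goes in direction $u = -e_j$ with strength $c_j - 1 > 0$, i.e.\ toward $P^-_j$ and away from $P^+_j$.

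Fix a pyramid $P^s_i$ with $i \in S$, so $u_i = 0$. By the first statement of \Cref{lemma:pulling}, a point $x \in [0,1]^d$ can leave $P^s_i$ during this pull. Arguing as in the third statement (i.e.\ comparing $\|x - c\|$ before and after), such a point must end up in a pyramid whose distance grew. The only such pyramid is $P^+_j$, since the pull is away from it. So every lost point lies in $P^+_j(\tilde c)$ at the new point $\tilde c$, where $\tilde c_j = 1$.

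Now $P^+_j(\tilde c) \cap [0,1]^d$ has measure zero: it requires $y_j - 1 = \|y - \tilde c\| \geq 0$, so $y_j = 1$. Hence $\vol(P^s_i \cap X)$ cannot decrease during this pull. Iterating over all $j \notin S$ gives
\[
\vol(P^-_i(c') \cap X) \geq \vol(P^-_i(c) \cap X) \geq m \quad\text{and}\quad \sum_{i \in S} \vol(P^+_i(c') \cap X) \geq m .
\]

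The obstacle I expect is making the ``lost points go only to $P^+_j$'' step rigorous. \Cref{lemma:pulling} as stated speaks about gains for $u_i = 0$, not losses. I would prove the needed loss statement directly by the same inequality chain. For $x \in P^s_i(c) \setminus P^s_i(\tilde c)$, the quantity $s(x_i - c_i)$ is unchanged while $\|x - \tilde c\|$ strictly exceeds it. Since only coordinate $j$ of $c$ moved, the maximising coordinate must be $j$ with sign $+$ (because $x_j - \tilde c_j > x_j - c_j$). Measure-zero boundary overlaps between pyramids are harmless throughout.
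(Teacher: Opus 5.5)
Your proposal is correct and follows essentially the same route as the paper. It shows $c \geq 0$ and that some $c_i \leq 1$, clamps the remaining coordinates to $1$ with $S$ the unclamped indices, and realizes the projection as single-coordinate pulls. In each pull, points of $P^\pm_i$ with $i \in S$ can only be lost to $P^+_j$ at the new point, which meets $[0,1]^d$ in a null set. Your direct inequality-chain proof of this loss direction tightens a step the paper only cites: it attributes the step to \Cref{lemma:pulling}, which states only the gain direction; the loss claim follows by applying its third statement to the reversed pull.
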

\begin{proof}
    Observe first that the case $m = 0$ is trivial, and thus assume from now on $m > 0$. Consider the projection $c' \in [0,1]^d$ of $c$ to the cube $[0, 1]^d$, defined as  
    \[
        c'_i = \begin{cases}
            0 & \text{ if } c_i < 0 \\
            c_i & \text{ if } 0 \leq c_i \leq 1 \\
            1 & \text{ if } 1 < c_i
        \end{cases}
    \]
    for all $i \in [d]$, and let $S = \{i \in [d] \mid c_i = c'_i\}$ be the subset of indices for which no projection was necessary. We claim that $S$ is non-empty and that $\sum_{i \in S} \vol(P^+_i(c') \cap X)\geq m$ and $\vol(P^-_i(c') \cap X)\geq m$ for all $i \in S$.

    Since $m > 0$, there is at least one positive pyramid around $c$ with non-zero volume, and all negative pyramids around $c$ have non-zero volume as well. Therefore, there is at least one coordinate, say $i \in [d]$, such that both $\vol(P^+_i(c) \cap X ) > 0$ and $\vol(P^-_i(c) \cap X ) > 0$. But this implies $0 \leq c_i \leq 1$ by the definition of pyramids and the assumption $X \subseteq [0, 1]^d$. Therefore, the set $S$ is nonempty. More generally, this observation shows that
    \[
        \sum_{i \in S} \vol(P^+_i(c) \cap X) = \sum_{i \in [d]} \vol(P^+_i(c) \cap X) \geq m,
    \]
    since $\vol(P^-_j(c) \cap X) \geq m > 0$ actually implies $c_j \geq 0$ for all $j \in [d]$.
    
    Next, consider arbitrary $i \notin S$. Since $i \notin S$ and since we must have $c_i \geq 0$, we actually conclude $c_i > 1$ and thus will have projected $c_i$ to $c'_i = 1$. By the definition of pyramids, we conclude that the positive pyramid $P^+_i(c')$ around $c'$ must satisfy $\vol(P^+_i(c') \cap X ) = 0$. 

    Next, observe that projecting one coordinate at a time corresponds to performing at most $d$ iterative pullings on $c$: Once in the negative direction of $i$ for all $i \notin S$. By \Cref{lemma:pulling}, we know that while pulling $c$ in the direction $u$ with $u_i = -1$ and $u_j = 0$ for $j \neq i$, $\vol(P^-_i(c) \cap X)$ may decrease, $\vol(P^+_i(c) \cap X)$ may increase, and all remaining pyramids may gain volume from $P^-_i(c)$ and lose volume to $P^+_i(c)$. However, due to our observation $\vol(P^+_i(c') \cap X) = 0$, we know that the latter never happens. Thus, none of the pyramids $P^+_j(c)$ or $P^-_j(c)$ for $j \in S$ can lose volume during any of the pullings and hence the whole projection. This concludes the proof.
\end{proof}

Finally, we are ready to conclude \Cref{thm:approximate_centerpoint_theorem}. 

\thmapproxcenterpoint*
\begin{proof}
    Fix $\epsilon = \frac{\vol(X)}{6d}$. By \Cref{lemma:balancing_pyramids}, we can find $c \in \R^d$ with $\vol(P_i^-(c) \cap X) \geq \frac{\vol(X)}{d} - \epsilon$ for all $i \in [d]$ in time $\bigO(d^7T\log^2 1 / \epsilon)$. 
    We then use \Cref{corollary:regions_to_pyramids} to compute $c' \in \R^d$ with 
    \[
        \sum_{i \in [d]} \vol(P^+_i(c') \cap X)\geq \frac{\vol(X)}{2d} - \frac{3}{2}\epsilon \quad \text{ and } \quad \vol(P^-_i(c') \cap X) \geq \frac{\vol(X)}{2d} - \frac{\epsilon}{2}
    \]
    for all $i \in [d]$. This uses again at most $\bigO(dT \log \frac{\|c\|d}{\epsilon})$ time, where $\|c\| \leq 2^{\bigO(d^3 \log (1 / \epsilon))}$ is guaranteed by \Cref{lemma:balancing_pyramids}.
    Finally, we use \Cref{lemma:centerpoint_projection_on_cube} to project $c'$ to the cube $[0, 1]^d$, yielding $c'' \in [0, 1]^d$ with 
    \[
        \sum_{i\in S} \vol(P^+_i(c'') \cap X)\geq \frac{\vol(X)}{2d} - \frac{3}{2}\epsilon \quad \text{ and } \quad \vol(P^-_i(c'') \cap X) \geq \frac{\vol(X)}{2d} - \frac{\epsilon}{2}
    \]
    for all $i$ in some non-empty subset $S \subseteq [d]$. By \Cref{obs:m/2 centerpoint} and our choice of $\epsilon = \frac{\vol(X)}{6d}$, we get that $c''$ is a $\frac{1}{4d}$-centerpoint of $X$. The last step runs in time $\bigO(d)$ and thus the overall runtime is $\bigO(d^7T\log^2 1 / \epsilon)$ which is $\bigO(d^7T\log^2 \frac{d}{\vol X})$.
\end{proof}

Note that the centerpoint quality in \Cref{thm:approximate_centerpoint_theorem} could be improved: In particular, with the current proof we can get arbitrarily close to $\frac{1}{2d}$ by choosing $\epsilon$ sufficiently small.

\section{Volume Computation}
\label{sec:volume_computation}

The main goal of this section is to prove \Cref{lemma:volume_computation}. We start by recalling some important tools form computational geometry, before we combine them with a simple observation on pyramids to get the desired result.

\paragraph*{Volume of Convex Polyhedra} As a crucial ingredient, we will use the fact that the volume of a given bounded convex polyhedron described by a linear system $A x  \leq b$ in $\R^d$ can be computed exactly in time $N^{\bigO(d)}$, where $N$ is the encoding size of the linear system. Such exact algorithms were studied by a number of authors~\cite{ barvinokComputingVolumeCounting1993, cohenTwoAlgorithmsDetermining1979, lasserreAnalyticalExpressionAlgorithm1983, lasserreLaplaceTransformAlgorithm2001, lawrencePolytopeVolumeComputation1991}. For example, the algorithm by Cohen and Hickey~\cite{cohenTwoAlgorithmsDetermining1979} decomposes the polyhedron into simplices (potentially exponentially many) and adds up the volumes of all those simplices to get the result. 

\paragraph*{Hyperplane Arrangements} We briefly recapitulate some facts on hyperplane arrangements that will be important for us\footnote{For more background on the topic, see~\cite{edelsbrunnerAlgorithmsCombinatorialGeometry1987}.}: Consider $n$ hyperplanes $h_1, \dots, h_n$ in $\R^d$. Together, they induce a hyperplane arrangement, and the $d$-dimensional cells of this arrangement are convex polyhedra. In particular, $h_1, \dots, h_n$ decompose the cube $[0, 1]^d$ into at most $n^{\bigO(d)}$ bounded convex polyhedra\footnote{We want to thank Hanwen Zhang for pointing us to this bound.}.
Moreover, there is a deterministic algorithm that computes the hyperplane arrangement induced by $h_1, \dots, h_n$ in time $n^{\bigO(d)}$~(see \cite{edelsbrunnerConstructingArrangements1987}). This implies that in $n^{\bigO(d)}$ time, it is possible to compute a linear system of inequalities describing each of the $n^{\bigO(d)}$ bounded convex polyhedra that together make up $[0, 1]^d$ (note that each linear system consists of  at most $n + 2d$ inequalities, since each hyperplane and boundary of the cube $[0, 1]^d$ can add at most one inequality).

\paragraph*{Bounding Hyperplanes of Pyramids} To see why hyperplane arrangements are relevant for us, consider the following observation on pyramids: Let $P^s_i(x)$ be an arbitrary pyramid around $x$ in $\R^d$, and recall that it is defined by $P^s_i(x) = \{ y \in \R^d \mid s(y_i - x_i) = \|y - x\| \}$. In particular, $P^s_i(x)$ can be written as 
\[
    P^s_i(x) = \left( \bigcap_{j \neq i} \{ y \in \R^d \mid s(y_i - x_i) \geq y_j - x_j  \} \right) \cap \left( \bigcap_{j \neq i} \{ y \in \R^d \mid s(y_i - x_i) \geq -y_j + x_j \} \right)
\]
and is thus bounded by exactly $2(d - 1)$ hyperplanes: Indeed, each of those hyperplanes has the form $\pm y_i \pm y_j \geq r_x$ for $i, j \in [d]$ with $i \neq j$ and some value $r_x \in \R$ that depends solely on $x$. With this in place, we can now prove \Cref{lemma:volume_computation}.  

\begin{figure}[ht]
    \centering
    \includegraphics[width=0.4\linewidth]{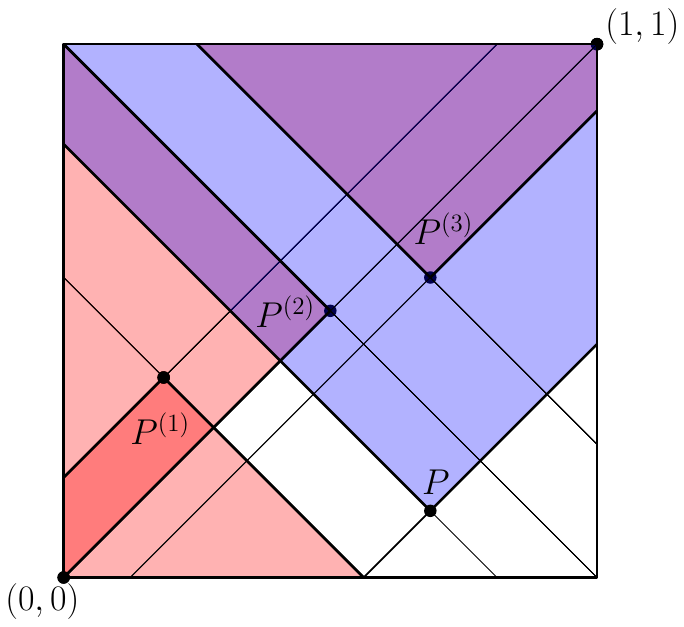}
    \caption{Example of the proof of \Cref{lemma:volume_computation} in the plane. Each of the four pyramids $P^{(1)}, P^{(2)}, P^{(3)}$ (in red) and $P$ (in blue) is bounded by $2$ hyperplanes. The induced hyperplane arrangement decomposes the cube $[0, 1]^d$ into $24$ bounded convex polyhedra. We can compute $\vol(([0, 1]^2 \setminus (P^{(1)} \cup P^{(2)} \cup P^{(3)})) \cap P)$ by summing up the volumes of all polyhedra that are contained in $P$ but not in any of $P^{(1)}, P^{(2)}, P^{(3)}$.}
    \label{fig:decomposition}
\end{figure}

\lemmavolumecomputation*

\begin{proof}
    Consider an arbitrary valid search space $X = [0, 1]^d \setminus \bigcup_{i \in [n]} P^{(i)}$, and let $P$ be our query pyramid, i.e.\ we need to find the volume of the set
    \[
        X \cap P = \left([0, 1]^d \setminus \bigcup_{i \in [n]} P^{(i)} \right) \cap P.
    \]
    By our observation above on pyramids, each of the pyramids is bounded by $2(d - 1)$ hyperplanes. Thus, we get a total of $2(n + 1)(d - 1)$ hyperplanes from the $n + 1$ pyramids, which together decompose the cube $[0, 1]^d$ into at most $(nd)^{\bigO(d)}$ bounded convex polyhedra. Since we have considered all the bounding hyperplanes of all the pyramids, the interior of each bounded convex polyhedron in this decomposition of $[0, 1]^d$ is either fully contained or fully excluded in any of $n + 1$ pyramids $P^{(1)}, \dots, P^{(n)}, P$. Thus, we can compute 
    \[
        \vol \left( \left( [0, 1]^d \setminus \bigcup_{i \in [n]} P^{(i)} \right) \cap P \right)
    \]
    by summing up the volume of each convex polyhedron in the decomposition that is contained in $P$ but not in any of the $n$ pyramids $P^{(1)}, \dots, P^{(n)}$. From our preparation above, we know that we can iterate over all of the polyhedra in time $(nd)^{\bigO(d)}$ and compute the volume of each in time $(nd)^{\bigO(d)}$, yielding an overall runtime of at most $(nd)^{\bigO(d)}$, as desired.
\end{proof}

\paragraph*{Bit Complexity}

As noted in \Cref{ssec:proof_thm_1}, the bit complexity of the computed volume could in theory be exponential in $d$, i.e.\ assuming that all of the pyramids are described by points of bit complexity $N$, the result could have complexity $N^{\bigO(d)}$. This is already inherent in the algorithm of Cohen and Hickey~\cite{cohenTwoAlgorithmsDetermining1979}, where it comes from summing up the rational volumes of exponentially many simplices (in particular, the denominator of the result could require many bits). We then further sum these volumes up over exponentially many polyhedra induced by our hyperplane arrangement. Overall, this still yields ``only'' an exponential bit complexity for the result, and thus our stated runtime bound accommodates for this. We repeat that the volume computed by our algorithms via \Cref{lemma:volume_computation} is only ever used for comparison in \Cref{alg:algo_1}, and thus the blow-up in terms of bit complexity is not further propagated through \Cref{alg:algo_1}. Note that one might also be able to avoid the blow-up altogether by appropriately rounding the results (in such a way that the correct results for comparisons of volume in \Cref{sec:finding_centerpoint} are still guaranteed).

\section{Decomposition for $\ell_\infty$-Contraction and \Cref{thm:main_result_2}}
\label{sec:decomposition}

In this section, we explain the remaining details that are needed to prove \Cref{thm:main_result_2}. For this, recall the problems $\contr(d, \epsilon, \lambda)$ and $\nonexp^\dagger(d, \epsilon)$ that we already introduced in \Cref{ssec:proof_thm_2}, and also consider the intermediate problem $\nonexp(d, \epsilon)$:
\begin{itemize}
    \item $\contr(d, \epsilon, \lambda)$ is the problem of finding an $\epsilon$-approximate fixed point of a $\lambda$-contracting function $f : [0, 1]^d \rightarrow [0, 1]^d$.
    \item $\nonexp(d, \epsilon)$ is the problem of finding an $\epsilon$-approximate fixed point of a non-expansive function $f : [0, 1]^d \rightarrow [0, 1]^d$.
    \item $\nonexp^\dagger(d, \epsilon)$ is the problem of finding an $\epsilon$-approximate fixed point of a non-expansive function $f : [0, 1]^d \rightarrow [-\epsilon, 1 + \epsilon]^d$.
\end{itemize}

We also restate the decomposition theorem from~\cite{chenQuadraticSpeedupComputing2026}.

\thmdecompositiontheorem*

\subsection{Projection and Choice of Domain}

Technically, a more general version of $\nonexp^\dagger(d, \epsilon)$ is used in~\cite{chenQuadraticSpeedupComputing2026}: Indeed, they allow non-expansive functions $f : [a_1, b_1] \times \dots \times [a_d, b_d] \rightarrow [a_1 - \epsilon, b_1 + \epsilon] \times \dots \times [a_d - \epsilon, b_d + \epsilon]$ for arbitrary $a, b \in [0, 1]^d$ with $a \leq b$. This is important because their algorithm for $\nonexp^\dagger(d_1 + d_2, \epsilon)$ actually constructs specific domains on which it then calls the subroutine for $\nonexp^\dagger(d_1, \epsilon)$. However, any instance on a domain $[a_1, b_1] \times \dots \times [a_d, b_d] \subseteq [0, 1]^d$ can easily be reduced to an instance on the cube $[0, 1]^d$ (see below), and thus we can safely restrict ourselves to the case of the cube $[0, 1]^d$ (which is what we did so far).

For the sake of completeness, let us present the reduction that allows us to assume domain $[0, 1]^d$ without loss of generality. Given two points $a, b \in \R^d$ with $a \leq b$, let $\llbracket a, b \rrbracket \subseteq \R^d$ denote the box $[a_1, b_1] \times \dots \times [a_d, b_d]$. The following tool (which we refer to as projection to the box $\llbracket a, b \rrbracket$) will also be useful later for us (and in fact, we have already used an ad-hoc version of this in the proof of \Cref{lemma:centerpoint_projection_on_cube}).

\begin{definition}[Projection]
\label{def:projection}
    Let $x \in \R^d$ and $\llbracket a, b \rrbracket \subseteq \R^d$ be arbitrary. We define the projection $\proj_{\llbracket a, b \rrbracket}(x) \in \llbracket a, b \rrbracket$ of $x$ to $\llbracket a, b \rrbracket$ as
    \[
        \proj_{\llbracket a, b \rrbracket}(x)_i = \begin{cases}
            a_i & \text{ if } x_i < a_i \\
            x_i & \text{ if } a_i \leq x_i \leq b_i \\
            b_i & \text{ if } b_i < x_i
        \end{cases}
    \]
    for all $i \in [d]$.
\end{definition}

\begin{observation}
\label{obs:generality_by_projection}
    Let $\llbracket a, b \rrbracket \subseteq \R^d$ be arbitrary. We have $\|\proj_{\llbracket a, b \rrbracket}(x) - \proj_{\llbracket a, b \rrbracket}(y)\| \leq \|x - y\|$
     for all $x, y \in \R^d$.
\end{observation}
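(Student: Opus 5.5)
The plan is to reduce to a one-dimensional statement, since $\proj_{\llbracket a, b \rrbracket}$ acts coordinatewise. Write $p = \proj_{\llbracket a, b \rrbracket}$. By \Cref{def:projection}, $p(x)_i = \min(\max(x_i, a_i), b_i)$ depends only on $x_i$. Hence
\[
    \|p(x) - p(y)\| = \max_{i \in [d]} |p(x)_i - p(y)_i|,
\]
and it suffices to prove that for every $i \in [d]$ the scalar clamp $t \mapsto \min(\max(t, a_i), b_i)$ is $1$-Lipschitz on $\R$. Once we know $|p(x)_i - p(y)_i| \leq |x_i - y_i|$ for each $i$, taking the maximum over $i$ gives $\|p(x) - p(y)\| \leq \max_i |x_i - y_i| = \|x - y\|$.

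For the one-dimensional claim, fix $i$ and real numbers $s, t$. We may assume $s \leq t$. The clamp $g(t) = \min(\max(t, a_i), b_i)$ is a composition of the maps $t \mapsto \max(t, a_i)$ and $t \mapsto \min(t, b_i)$, and we use $a_i \leq b_i$.

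Each of these two maps is non-decreasing and $1$-Lipschitz. For the first, $|\max(s, a) - \max(t, a)| \leq |s - t|$ holds by a check of three cases: both arguments at least $a$, both below $a$, or $s < a \leq t$, in which case the difference is $t - a \leq t - s$. The $\min$ map is handled symmetrically.

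A composition of $1$-Lipschitz maps is $1$-Lipschitz, so $|g(s) - g(t)| \leq |s - t|$.

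There is no real obstacle here. The only point requiring care is the first step: justifying that the $\ell_\infty$-distance splits coordinatewise. This is immediate because both the projection and the $\ell_\infty$-norm are defined coordinatewise.
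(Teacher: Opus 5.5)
Your proposal is correct and follows the same approach as the paper, whose entire proof is that the claim ``can be easily verified by doing the case distinction for each coordinate separately.'' Your reduction to the $1$-Lipschitz scalar clamp $t \mapsto \min(\max(t, a_i), b_i)$, which relies on $a_i \leq b_i$ from the definition of $\llbracket a, b \rrbracket$, simply writes out that coordinatewise case distinction in full.
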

\begin{proof}
    This can be easily verified by doing the case distinction for each coordinate separately. 
\end{proof}

With \Cref{obs:generality_by_projection}, it is now easy to show that we can always assume domain $[0, 1]^d$ for our problems: Indeed, consider a non-expansive function $f : \llbracket a, b \rrbracket \rightarrow \llbracket a - \epsilon \One, b + \epsilon \One\rrbracket$. Define a new function $g : [0, 1]^d \rightarrow [-\epsilon, 1 + \epsilon]^d$ as $g(x) := f(\proj_{\llbracket a, b \rrbracket}(x))$. Then \Cref{obs:generality_by_projection} tells us that $g$ is non-expansive as well, since
\[
    \|g(x) - g(y)\| = \|f(\proj_{\llbracket a, b \rrbracket}(x)) - f(\proj_{\llbracket a, b \rrbracket}(y))\| \leq \|\proj_{\llbracket a, b \rrbracket}(x) - \proj_{\llbracket a, b \rrbracket}(y)\| \leq \|x - y\| 
\]
for all $x, y \in [0, 1]^d$. Moreover, let $x^\star$ be an $\epsilon$-approximate fixed point of $g$, and consider arbitrary $i \in [d]$. If we have $x^\star_i \in [a_i, b_i]$, we get 
\[
    |f(\proj_{\llbracket a, b \rrbracket}(x^\star))_i - \proj_{\llbracket a, b \rrbracket}(x^\star)_i| = |g(x^\star)_i - x^\star_i| \leq \epsilon.
\]
If instead $x^\star_i < a_i$, then by $|g(x^\star)_i - x^\star_i| \leq \epsilon$ and $g(x^\star)_i = f(\proj_{\llbracket a, b \rrbracket}(x^\star))_i \in [a_i - \epsilon, b_i + \epsilon]$, we can deduce $|f(\proj_{\llbracket a, b \rrbracket}(x^\star))_i - \proj_{\llbracket a, b \rrbracket}(x^\star)_i| = |g(x^\star)_i - a_i| \leq \epsilon$. Analogously, we also get $|f(\proj_{\llbracket a, b \rrbracket}(x^\star))_i - \proj_{\llbracket a, b \rrbracket}(x^\star)_i| \leq \epsilon$ if we assume $x^\star_i > b_i$. We conclude that $\proj_{\llbracket a, b \rrbracket}(x^\star)$ is an $\epsilon$-approximate fixed point of $f$.

Of course, this reduction could also be applied analogously to the problems $\contr(d, \epsilon, \lambda)$ and $\nonexp(d, \epsilon)$.

\subsection{From Contraction to Non-Expansion and Back}
\label{ssec:reductions}

Clearly, $\nonexp^\dagger(d, \epsilon)$ is more general than $\nonexp(d, \epsilon)$, and $\nonexp(d, \epsilon)$ is in turn more general than $\contr(d, \epsilon, \lambda)$. It turns out that reductions in the reverse direction exist as well. 

For example, a simple and efficient reduction from $\nonexp(d, \epsilon)$ to $\contr(d, \epsilon / 2, 1 - \epsilon / 2)$ was observed in~\cite{chenComputingFixedPoint2025}: Indeed, given a non-expansive function $f : [0, 1]^d \rightarrow [0, 1]^d$, consider a new function $g : [0, 1]^d \rightarrow [0, 1]^d$
defined by $g(x) = (1 - \frac{\epsilon}{2})f(x)$ for all $x \in [0, 1]^d$. It is then easy to check that $g$ is $(1 - \frac{\epsilon}{2})$-contracting and that any $\frac{\epsilon}{2}$-approximate fixed point of $g$ is also an $\epsilon$-approximate fixed point of $f$.

Moreover, in the case $d = 2$, a reduction from $\nonexp^\dagger(2, \epsilon)$ to $\nonexp(2, \epsilon)$ was given in~\cite{chenQuadraticSpeedupComputing2026}. In \Cref{lemma:reduction_to_nonexpansion} below, we extend this reduction to higher dimensions, albeit at the cost of decreasing $\epsilon$ by a factor of $2$ again.

\begin{figure}[ht]
    \centering
    \includegraphics[width=0.4\linewidth]{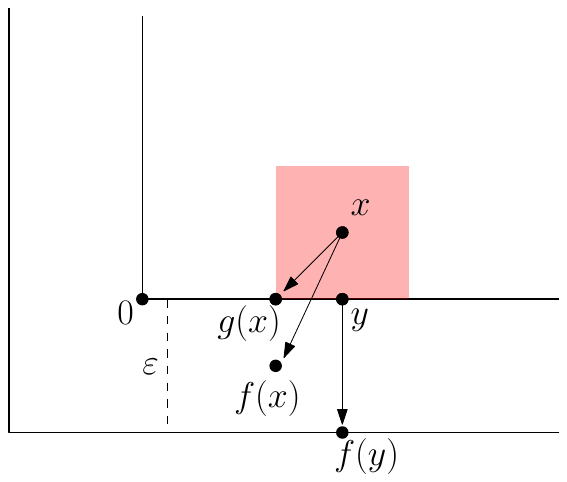}
    \caption{A sketch of the reduction in \Cref{lemma:reduction_to_nonexpansion} showing an example of the points $x, y, f(x), g(x), f(y)$ from the proof. The $\ell_\infty$-ball $B^\infty_{\frac{\epsilon}{2}}(x)$ of radius $\frac{\epsilon}{2}$ around $x$ is shown in red.}
    \label{fig:reduction}
\end{figure}

\begin{lemma}
\label{lemma:reduction_to_nonexpansion}
    If $\nonexp(d, \epsilon / 2)$ can be solved in $q$ queries and time $t$, then $\nonexp^\dagger(d, \epsilon)$ can be solved in $q + 1$ queries and time $\bigO(t)$.
\end{lemma}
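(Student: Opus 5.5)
The plan is to clip $f$ back into the cube, solve the resulting $\nonexp(d, \epsilon/2)$ instance, and then output not the approximate fixed point $x$ itself but its image $g(x)$. Querying $f$ at $g(x)$ is where the one extra query comes from.

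\textbf{Step 1 (clipped instance).} Given a non-expansive $f : [0,1]^d \rightarrow [-\epsilon, 1+\epsilon]^d$, define $g : [0,1]^d \rightarrow [0,1]^d$ by $g(x) := \proj_{[0,1]^d}(f(x))$, using \Cref{def:projection}. By \Cref{obs:generality_by_projection}, $\|g(x) - g(y)\| \leq \|f(x) - f(y)\| \leq \|x - y\|$, so $g$ is a valid $\nonexp(d, \epsilon/2)$ instance. Each query to $g$ costs one query to $f$ plus $\bigO(d)$ time. Running the assumed solver therefore gives, in $q$ queries and $\bigO(t)$ time, a point $x$ with $\|g(x) - x\| \leq \epsilon/2$. (We assume $t = \Omega(d)$, as reading a query answer already costs $\Theta(d)$.)

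\textbf{Step 2 (why $x$ alone does not suffice).} The point $x$ need not work for $f$. If $f(x)_i > 1$, then $g(x)_i = 1$ and $x_i \geq 1 - \epsilon/2$. Since $f(x)_i$ can be as large as $1 + \epsilon$, we only get $|f(x)_i - x_i| \leq \tfrac{3}{2}\epsilon$, which is too weak.

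\textbf{Step 3 (output $y = g(x)$).} Let $y := g(x) \in [0,1]^d$, and query $f(y)$; this is the $(q+1)$-st query. Since $\|y - x\| \leq \epsilon/2$, non-expansiveness gives $\|f(y) - f(x)\| \leq \epsilon/2$. We check coordinates one at a time.
\begin{itemize}
    \item If $f(x)_i \in [0,1]$, then $y_i = f(x)_i$, so $|f(y)_i - y_i| = |f(y)_i - f(x)_i| \leq \epsilon/2$.
    \item If $f(x)_i > 1$, then $y_i = 1$. From above, $f(y)_i \leq 1 + \epsilon$ by the codomain of $f$. From below, $f(y)_i \geq f(x)_i - \epsilon/2 > 1 - \epsilon/2$. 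Hence $|f(y)_i - y_i| \leq \epsilon$.
    \item If $f(x)_i < 0$, the argument is symmetric with $y_i = 0$.
\end{itemize}
So $y$ is an $\epsilon$-approximate fixed point of $f$. The total cost is $q + 1$ queries and $\bigO(t)$ time.

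The main obstacle is Step 3, specifically the clipped coordinates. The key is to combine two bounds: one side is controlled by the codomain restriction $[-\epsilon, 1+\epsilon]$, and the other by the $\epsilon/2$ perturbation from $x$ to $y$. This is exactly why the inner solver must be run with accuracy $\epsilon/2$ rather than $\epsilon$.
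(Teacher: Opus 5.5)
Your proof is correct and follows the paper's argument: clip to $g = \proj_{[0,1]^d}\circ f$, solve at accuracy $\epsilon/2$, and snap the clipped coordinates to the boundary, bounding those coordinates by the codomain on one side and by non-expansiveness on the other. The only differences are minor. You output $g(x)$, whereas the paper keeps $x_i$ on unclipped coordinates and so gets $\epsilon$ there rather than your $\epsilon/2$. Also, the extra query is really needed to evaluate $f(x)$ when forming $g(x)$ (if the solver never queried $x$), while querying $f(y)$ is not needed, since your argument never uses that value.
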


\begin{proof}
    Let $f : [0, 1]^d \rightarrow [-\epsilon, 1 + \epsilon]^d$ be a non-expansive function. Consider the truncated function $g : [0, 1]^d \rightarrow [0, 1]^d$ defined as 
    \[
        g(x) = \proj_{[0, 1]^d}(f(x))
    \]
    for all $i \in [d]$. By \Cref{obs:generality_by_projection}, $g$ is still non-expansive and thus we can use an algorithm for $\nonexp(d, \epsilon / 2)$ to find an $\epsilon/2$-approximate fixed point $x \in [0, 1]^d$ (i.e.\ $\|x - g(x)\| \leq \epsilon / 2$). Given $x$, we then compute the point $y \in  [0, 1]^d$ defined by 
    \[
        y_i = \begin{cases}
            x_i &\text{ if } f_i(x) \in [0, 1] \\
            0 &\text{ if } f_i(x) < 0 \\
            1 &\text{ if } f_i(x) > 1 \\
        \end{cases}
    \]
    for all $i \in [d]$. We claim that $y$ is an $\epsilon$-approximate fixed point of $f$, i.e.\ we need to prove $\|f(y) - y\| \leq \epsilon$. For this, observe first that since $x$ is $\epsilon / 2$-approximate, we have $\|y- x\| \leq \epsilon / 2$. Now, for arbitrary $i \in [d]$ with $f_i(x) \in [0, 1]$, we indeed have 
    \begin{align*}
        |f_i(y) - y_i| &\leq |f_i(y) - f_i(x)| + |f_i(x) - y_i|  \\
        &= |f_i(y) - f_i(x)| + |g_i(x) - x_i| \\
        &\leq \|f(y) - f(x)\| + \epsilon / 2 \\
        &\leq \|y - x\| + \epsilon / 2 \leq \epsilon.
    \end{align*}
    Moreover, for any $i \in [d]$ with $f_i(x) < 0$, we observe that we directly get $|f_i(y) - y_i| = |f_i(y) - 0| \leq \epsilon$ if $f_i(y) < 0$. If, on the other hand, we have $f_i(y) \geq 0$, then using $\|y - x\| \leq \epsilon / 2$ and non-expansion we get $|f_i(y) - y_i| = |f_i(y) - 0| \leq |f_i(y) - f_i(x)| \leq \epsilon / 2 \leq \epsilon$ since $f_i(x) < 0$. The case of $i \in [d]$ with $f_i(x) > 1$ is symmetric and we conclude $\|f(y) - y\| \leq \epsilon$.
\end{proof}

\subsection{Combining Decomposition and Reductions}

By combining the decomposition theorem (\Cref{thm:decomposition_theorem}) with \Cref{lemma:reduction_to_nonexpansion} and the reduction from $\nonexp(d, \epsilon)$ to $\contr(d, \epsilon / 2, 1 - \epsilon / 2)$, we are now able to conclude the following result. 

\squarerootlemma*

\begin{proof}
    First recall the reduction from $\nonexp(d, 2\epsilon)$ to $\contr(d, \epsilon, 1 - \epsilon)$ described in \Cref{ssec:reductions}, which tells us that we can also solve $\nonexp(d, 2\epsilon)$ in $q$ queries and $\bigO(t)$ time. Moreover, by \Cref{lemma:reduction_to_nonexpansion}, this means that we can solve $\nonexp^\dagger(d, 4\epsilon)$ in $q + 1$ queries and $\bigO(t)$ time.
    
    Now consider an arbitrary instance $f : [0, 1]^{d^2} \rightarrow [-\epsilon, 1 + \epsilon]^{d^2}$ of $\nonexp^\dagger(d^2, 4\epsilon)$. We split the $d^2$ dimensions into $d$ blocks of dimension $d$ each. By using \Cref{thm:decomposition_theorem} recursively, we thus get an algorithm that solves $\nonexp^\dagger(d^2, 4\epsilon)$ in $(q + 1)^d$ queries and $\sum_{i = 0}^d (cq^2)^i t \leq q^{\bigO(d)} t$ time (where $c$ is a constant).
\end{proof}

In combination with our \Cref{thm:main_result_1}, we thus conclude \Cref{thm:main_result_2}.

\thmmainresulttwo*

\begin{proof}
    Since $\contr(d, \epsilon, \lambda)$ reduces to $\nonexp^\dagger(d, \epsilon)$, it suffices to solve the latter problem. Without loss of generality, assume that we can split the $d$ dimensions into $\sqrt{d}$ blocks of size $\sqrt{d}$ each (this can be achieved by adding additional dimensions until $d$ is square). By \Cref{lemma:square-root-decomposition}, this means that we get an algorithm using $(q + 1)^{\sqrt{d}}$ queries and $q^{\bigO(\sqrt{d})}t$ time, where $q$ and $t$ denote the query and time upper bounds of solving $\contr(\sqrt{d}, \frac{\epsilon}{4}, 1 - \frac{\epsilon}{4})$. By \Cref{thm:main_result_1}, we can solve $\contr(\sqrt{d}, \frac{\epsilon}{4}, 1 - \frac{\epsilon}{4})$ in $q \leq \poly(d, \log \frac{1}{\epsilon})$ queries and $t \leq (\log \frac{1}{\epsilon})^{\bigO(\sqrt{d}\log d)}$ time.
    Therefore, we get an overall runtime of $(\log \frac{1}{\epsilon})^{\bigO(\sqrt{d} \log d)}$.
\end{proof}

Of course, as is also the case with \Cref{thm:main_result_1}, \Cref{thm:main_result_2} actually applies to the more general problem $\nonexp^\dagger(d, \epsilon)$, and not just $\contr(d, \epsilon, \lambda)$ (our proof is even formulated in this way). 

\bibliographystyle{alphaurl}
\bibliography{references}

@misc{chenQuadraticSpeedupComputing2026,
    title = {Quadratic {Speedup} for {Computing} {Contraction} {Fixed} {Points}},
    doi = {10.48550/arXiv.2602.10296},
    urldate = {2026-02-28},
    publisher = {arXiv},
    author = {Chen, Xi and Li, Yuhao and Yannakakis, Mihalis},
    month = feb,
    year = {2026},
    note = {arXiv:2602.10296 [cs]
version: 1},
}

@article{chenComputingFixedPoint2025,
    title = {Computing a {Fixed} {Point} of {Contraction} {Maps} in {Polynomial} {Queries}},
    volume = {72},
    issn = {0004-5411},
    doi = {10.1145/3744738},
    number = {4},
    urldate = {2025-10-28},
    journal = {J. ACM},
    author = {Chen, Xi and Li, Yuhao and Yannakakis, Mihalis},
    month = jul,
    year = {2025},
    pages = {25:1--25:20},
}

@inproceedings{etessamiTarskiTheoremSupermodular2020,
    address = {Dagstuhl, Germany},
    series = {Leibniz {International} {Proceedings} in {Informatics} ({LIPIcs})},
    title = {Tarski’s {Theorem}, {Supermodular} {Games}, and the {Complexity} of {Equilibria}},
    volume = {151},
    isbn = {978-3-95977-134-4},
    issn = {1868-8969},
    doi = {10.4230/LIPIcs.ITCS.2020.18},
    urldate = {2023-08-26},
    booktitle = {11th {Innovations} in {Theoretical} {Computer} {Science} {Conference} ({ITCS} 2020)},
    publisher = {Schloss Dagstuhl–Leibniz-Zentrum fuer Informatik},
    author = {Etessami, Kousha and Papadimitriou, Christos and Rubinstein, Aviad and Yannakakis, Mihalis},
    editor = {Vidick, Thomas},
    year = {2020},
    pages = {18:1--18:19},
}

@article{ballCubeSlicingRn1986,
    title = {Cube {Slicing} in $\mathbb{R}^n$},
    volume = {97},
    issn = {0002-9939},
    doi = {10.2307/2046239},
    number = {3},
    urldate = {2026-03-16},
    journal = {Proceedings of the American Mathematical Society},
    publisher = {American Mathematical Society},
    author = {Ball, Keith},
    year = {1986},
    pages = {465--473},
}

@article{lasserreAnalyticalExpressionAlgorithm1983,
    title = {An analytical expression and an algorithm for the volume of a convex polyhedron {inRn}},
    volume = {39},
    issn = {1573-2878},
    doi = {10.1007/BF00934543},
    language = {en},
    number = {3},
    urldate = {2026-03-16},
    journal = {Journal of Optimization Theory and Applications},
    author = {Lasserre, J. B.},
    month = mar,
    year = {1983},
    pages = {363--377},
}

@article{lawrencePolytopeVolumeComputation1991,
    title = {Polytope volume computation},
    volume = {57},
    issn = {0025-5718, 1088-6842},
    doi = {10.1090/S0025-5718-1991-1079024-2},
    language = {en},
    number = {195},
    urldate = {2026-03-16},
    journal = {Mathematics of Computation},
    author = {Lawrence, Jim},
    year = {1991},
    pages = {259--271},
}

@article{lasserreLaplaceTransformAlgorithm2001,
    title = {A {Laplace} transform algorithm for the volume of a convex polytope},
    volume = {48},
    issn = {0004-5411},
    doi = {10.1145/504794.504796},
    number = {6},
    urldate = {2026-03-16},
    journal = {J. ACM},
    author = {Lasserre, Jean B. and Zeron, Eduardo S.},
    month = nov,
    year = {2001},
    pages = {1126--1140},
}

@article{barvinokComputingVolumeCounting1993,
    title = {Computing the volume, counting integral points, and exponential sums},
    volume = {10},
    issn = {1432-0444},
    doi = {10.1007/BF02573970},
    language = {en},
    number = {2},
    urldate = {2026-03-16},
    journal = {Discrete \& Computational Geometry},
    author = {Barvinok, Alexander I.},
    month = jul,
    year = {1993},
    pages = {123--141},
}

@article{cohenTwoAlgorithmsDetermining1979,
    title = {Two {Algorithms} for {Determining} {Volumes} of {Convex} {Polyhedra}},
    volume = {26},
    issn = {0004-5411},
    doi = {10.1145/322139.322141},
    number = {3},
    urldate = {2026-03-16},
    journal = {J. ACM},
    author = {Cohen, Jacques and Hickey, Timothy},
    month = jul,
    year = {1979},
    pages = {401--414},
}

@article{banach1922operations,
    title = {Sur les opérations dans les ensembles abstraits et leur application aux équations intégrales},
    volume = {3},
    doi = {10.4064/fm-3-1-133-181},
    number = {1},
    journal = {Fundamenta mathematicae},
    publisher = {Polska Akademia Nauk. Instytut Matematyczny PAN},
    author = {Banach, Stefan},
    year = {1922},
    pages = {133--181},
}

@article{brouwerUeberAbbildungMannigfaltigkeiten1911,
    title = {Über {Abbildung} von {Mannigfaltigkeiten}},
    volume = {71},
    issn = {1432-1807},
    doi = {10.1007/BF01456931},
    language = {de},
    number = {1},
    urldate = {2025-01-20},
    journal = {Mathematische Annalen},
    author = {Brouwer, L. E. J.},
    month = mar,
    year = {1911},
    pages = {97--115},
}

@article{condonComplexityStochasticGames1992,
    title = {The {Complexity} of {Stochastic} {Games}},
    volume = {96},
    issn = {0890-5401},
    doi = {10.1016/0890-5401(92)90048-K},
    number = {2},
    urldate = {2023-11-24},
    journal = {Information and Computation},
    author = {Condon, Anne},
    month = feb,
    year = {1992},
    pages = {203--224},
}

@article{shapley1953stochastic,
    title = {Stochastic games},
    volume = {39},
    number = {10},
    journal = {Proceedings of the national academy of sciences},
    publisher = {National Acad Sciences},
    author = {Shapley, L. S.},
    year = {1953},
    pages = {1095--1100},
}

@article{fearnleyUniqueEndPotential2020,
    title = {Unique {End} of {Potential} {Line}},
    volume = {114},
    issn = {00220000},
    doi = {10.1016/j.jcss.2020.05.007},
    language = {en},
    urldate = {2023-09-13},
    journal = {Journal of Computer and System Sciences},
    author = {Fearnley, John and Gordon, Spencer and Mehta, Ruta and Savani, Rahul},
    month = dec,
    year = {2020},
    pages = {1--35},
}

@article{ludwigSubexponentialRandomizedAlgorithm1995,
    title = {A {Subexponential} {Randomized} {Algorithm} for the {Simple} {Stochastic} {Game} {Problem}},
    volume = {117},
    issn = {08905401},
    doi = {10.1006/inco.1995.1035},
    language = {en},
    number = {1},
    urldate = {2023-08-10},
    journal = {Information and Computation},
    author = {Ludwig, W.},
    month = feb,
    year = {1995},
    pages = {151--155},
}

@techreport{bjorklundRandomizedSubexponentialAlgorithms2004,
    title = {Randomized subexponential algorithms for infinite games},
    number = {2004-011},
    institution = {Department of Information Technology, Uppsala University},
    author = {Björklund, Henrik and Sandberg, Sven and Vorobyov, Sergei},
    month = apr,
    year = {2004},
}

@article{halmanSimpleStochasticGames2007,
    title = {Simple {Stochastic} {Games}, {Parity} {Games}, {Mean} {Payoff} {Games} and {Discounted} {Payoff} {Games} {Are} {All} {LP}-{Type} {Problems}},
    volume = {49},
    issn = {1432-0541},
    doi = {10.1007/s00453-007-0175-3},
    language = {en},
    number = {1},
    urldate = {2024-02-17},
    journal = {Algorithmica},
    author = {Halman, Nir},
    month = sep,
    year = {2007},
    pages = {37--50},
}

@article{fearnleyFasterAlgorithmFinding2022,
    title = {A {Faster} {Algorithm} for {Finding} {Tarski} {Fixed} {Points}},
    volume = {18},
    issn = {1549-6325, 1549-6333},
    doi = {10.1145/3524044},
    language = {en},
    number = {3},
    urldate = {2023-08-10},
    journal = {ACM Transactions on Algorithms},
    author = {Fearnley, John and Pálvölgyi, Dömötör and Savani, Rahul},
    month = jul,
    year = {2022},
    pages = {1--23},
}

@incollection{daskalakisContinuousLocalSearch2011,
    series = {Proceedings},
    title = {Continuous {Local} {Search}},
    isbn = {978-0-89871-993-2},
    doi = {10.1137/1.9781611973082.62},
    urldate = {2024-06-12},
    booktitle = {Proceedings of the 2011 {Annual} {ACM}-{SIAM} {Symposium} on {Discrete} {Algorithms} ({SODA})},
    publisher = {Society for Industrial and Applied Mathematics},
    author = {Daskalakis, Constantinos and Papadimitriou, Christos},
    month = jan,
    year = {2011},
    pages = {790--804},
}

@inproceedings{daskalakisConverseBanachFixed2018,
    address = {New York, NY, USA},
    series = {{STOC} 2018},
    title = {A converse to {Banach}'s fixed point theorem and its {CLS}-completeness},
    isbn = {978-1-4503-5559-9},
    doi = {10.1145/3188745.3188968},
    urldate = {2024-06-20},
    booktitle = {Proceedings of the 50th {Annual} {ACM} {SIGACT} {Symposium} on {Theory} of {Computing}},
    publisher = {Association for Computing Machinery},
    author = {Daskalakis, Constantinos and Tzamos, Christos and Zampetakis, Manolis},
    month = jun,
    year = {2018},
    pages = {44--50},
}

@article{sikorskiComputationalComplexityFixed2009,
    title = {Computational complexity of fixed points},
    volume = {6},
    issn = {1661-7746},
    doi = {10.1007/s11784-009-0128-3},
    language = {en},
    number = {2},
    urldate = {2025-01-14},
    journal = {Journal of Fixed Point Theory and Applications},
    author = {Sikorski, Krzysztof},
    month = dec,
    year = {2009},
    pages = {249--283},
}

@article{huangApproximatingFixedPoints1999,
    title = {Approximating {Fixed} {Points} of {Weakly} {Contracting} {Mappings}},
    volume = {15},
    issn = {0885-064X},
    doi = {10.1006/jcom.1999.0504},
    number = {2},
    urldate = {2024-06-17},
    journal = {Journal of Complexity},
    author = {Huang, Z. and Khachiyan, L. and Sikorski, K.},
    month = jun,
    year = {1999},
    pages = {200--213},
}

@article{sikorskiEllipsoidAlgorithmComputation1993,
    title = {An {Ellipsoid} {Algorithm} for the {Computation} of {Fixed} {Points}},
    volume = {9},
    issn = {0885-064X},
    doi = {10.1006/jcom.1993.1013},
    number = {1},
    urldate = {2024-06-18},
    journal = {Journal of Complexity},
    author = {Sikorski, K. and Tsay, C.W. and Woźniakowski, H.},
    month = mar,
    year = {1993},
    pages = {181--200},
}

@article{shellmanTwoDimensionalBisectionEnvelope2002,
    title = {A {Two}-{Dimensional} {Bisection} {Envelope} {Algorithm} for {Fixed} {Points}},
    volume = {18},
    issn = {0885-064X},
    doi = {10.1006/jcom.2001.0625},
    number = {2},
    urldate = {2024-06-18},
    journal = {Journal of Complexity},
    author = {Shellman, Spencer and Sikorski, K.},
    month = jun,
    year = {2002},
    pages = {641--659},
}

@article{shellmanRecursiveAlgorithmInfinitynorm2003,
    title = {A recursive algorithm for the infinity-norm fixed point problem},
    volume = {19},
    issn = {0885-064X},
    doi = {10.1016/j.jco.2003.06.001},
    number = {6},
    urldate = {2025-01-14},
    journal = {Journal of Complexity},
    author = {Shellman, Spencer and Sikorski, K.},
    month = dec,
    year = {2003},
    pages = {799--834},
}

@inproceedings{haslebacherQueryEfficientFixpointsLpContractions2025,
    title = {Query-{Efficient} {Fixpoints} of $\ell_p$-{Contractions}},
    issn = {2575-8454},
    doi = {10.1109/FOCS63196.2025.00109},
    urldate = {2026-03-18},
    booktitle = {2025 {IEEE} 66th {Annual} {Symposium} on {Foundations} of {Computer} {Science} ({FOCS})},
    author = {Haslebacher, Sebastian and Lill, Jonas and Schnider, Patrick and Weber, Simon},
    month = dec,
    year = {2025},
    note = {ISSN: 2575-8454},
    pages = {2037--2053},
}

@book{edelsbrunnerAlgorithmsCombinatorialGeometry1987,
    address = {Berlin, Heidelberg},
    title = {Algorithms in {Combinatorial} {Geometry}},
    copyright = {http://www.springer.com/tdm},
    isbn = {978-3-642-64873-1 978-3-642-61568-9},
    doi = {10.1007/978-3-642-61568-9},
    urldate = {2026-03-28},
    publisher = {Springer},
    author = {Edelsbrunner, Herbert},
    year = {1987},
}

@incollection{edelsbrunnerConstructingArrangements1987,
    address = {Berlin, Heidelberg},
    title = {Constructing {Arrangements}},
    isbn = {978-3-642-61568-9},
    doi = {10.1007/978-3-642-61568-9_7},
    language = {en},
    urldate = {2026-03-28},
    booktitle = {Algorithms in {Combinatorial} {Geometry}},
    publisher = {Springer},
    author = {Edelsbrunner, Herbert},
    editor = {Edelsbrunner, Herbert},
    year = {1987},
    pages = {123--137},
}

@inproceedings{batziouMonotoneContractions2025,
    address = {New York, NY, USA},
    series = {{STOC} '25},
    title = {Monotone {Contractions}},
    isbn = {979-8-4007-1510-5},
    doi = {10.1145/3717823.3718175},
    urldate = {2026-03-31},
    booktitle = {Proceedings of the 57th {Annual} {ACM} {Symposium} on {Theory} of {Computing}},
    publisher = {Association for Computing Machinery},
    author = {Batziou, Eleni and Fearnley, John and Gordon, Spencer and Mehta, Ruta and Savani, Rahul},
    month = jun,
    year = {2025},
    pages = {507--517},
}

\appendix

\section{Reproduced Proof of \Cref{thm:centerpoint_theorem}}
\label{appendix:reproduced_centerpoint_proof}

The proof of \Cref{thm:centerpoint_theorem} in~\cite{chenComputingFixedPoint2025} is formulated for specific types of pointsets, but it also works for our notion of centerpoints defined via volume (see Lemma~6 and Lemma~7 in~\cite{chenComputingFixedPoint2025}). For convenience, we decided to sketch the proof on a high level here (in our setting).

\thmcenterpoint*

\begin{proof}
    Let $X \subseteq [0, 1]^d$ be arbitrary with $\vol(X) \geq 0$. Recall the notion of pyramids from \Cref{ssec:proof_thm_1}. We will proceed by showing that there exists a so-called balanced point $c \in [0, 1]^d$, i.e.\ a point that satisfies $\vol(P^+_i(c) \cap X) = \vol(P^-_i(c) \cap X)$ for all $i \in [d]$. By the pyramid decomposition of $\ell_\infty$-halfspaces (\Cref{obs:pyramid_decomposition}), it then follows that $c$ must be a $\frac{1}{2}$-centerpoint, as desired.

    To prove existence of a balanced point, we use Brouwer's fixed point theorem, which says that a continuous function $f : K \rightarrow K$ from a non-empty convex compact subset $K \subseteq \R^d$ to itself has a fixed point $z = f(z)$. We construct $f$ as follows: Given a point $z \in \R^d$, we start by defining 
    \[
        f(z)_i = z_i + \delta \left( \vol(P^+_i(z) \cap X) - \vol(P^-_i(z) \cap X) \right)
    \]
    for all $i \in [d]$ and some fixed parameter $\delta > 0$. Clearly, $f$ is continuous. 
    
    It remains to restrict $f$ to the cube $K = [0, 1]^d$ while making sure that it cannot escape, which can be achieved by choosing the parameter $\delta > 0$ small enough such that $f([0, 1]^d) \subseteq [0, 1]^d$ is guaranteed. Indeed, such a suitably small $\delta$ exists and can e.g.\ be computed by using our \Cref{lemma:precision} (precision lemma).

    With this suitable choice of $\delta$, we thus have a continuous function $f : [0, 1]^d \rightarrow [0, 1]^d$, which by Brouwer's fixed point theorem must have a fixed point $c \in [0, 1]^d$. By definition of $f$, $c$ must be balanced and therefore a $\frac{1}{2}$-centerpoint of $X$.
\end{proof}

\section{Reproduced Proof of \Cref{lemma:ball_contained}}
\label{app:reproving}

For completeness, we provide a full proof of \Cref{lemma:ball_contained} here. Note that this argument is adapted from Lemma IV.1 in~\cite{haslebacherQueryEfficientFixpointsLpContractions2025}.

\lemmaballcontained*

\begin{proof}
    Assume for a contradiction that there exists $x \in B^\infty_r(x^\star) \cap H_{c - f(c)}(c)$. By $x \in B^\infty_r(x^\star)$ we know $\|x - x^\star\| \leq r$, and by $x \in H_{c - f(c)}(c)$ we know $\|x - f(c)\| \geq \|x - c\|$. Thus, we can calculate
    \[
        \|c - x^\star\| \leq \|c - x \| + r \leq \|f(c) - x\| + r \leq \|f(c) - x^\star\| + 2r \leq \lambda \|c - x^\star\| + 2r 
    \]
    by repeatedly using the triangle inequality, and in the final step the contraction property. This implies $\|c - x^\star\| \leq \frac{2r}{1 - \lambda}$ and thus also $\|f(c) - x^\star\| \leq \lambda \frac{2r}{1 - \lambda}$. Using the triangle inequality again, we therefore obtain $\|c - f(c)\| \leq (1 + \lambda) \frac{2r}{1 - \lambda} = \epsilon$, a contradiction.
\end{proof}

\end{document}